\newtheorem{theo}{Theorem}
\newenvironment{myproof}[1]{
\vspace{-5ex}

\hspace*{1em}\begin{trivlist}\item[\hskip\labelsep{\textsc{#1}}]}%
{\end{trivlist}}
\renewenvironment{proof}{
\begin{myproof}{Proof.}}{\hfill $\Box$\end{myproof}}
\newenvironment{proofclaim}{%
\vspace{-5ex}
\addtolength{\linewidth}{-1ex}
\begin{center}
\hspace*{1ex}\begin{trivlist}
\item\begin{myproof}{Proof of the claim.}}{
$\Diamond$\end{myproof}
\end{trivlist}
\end{center}
\addtolength{\linewidth}{1ex}
\usepackage[latin1]{inputenc}
\usepackage[OT1]{fontenc}
\usepackage{comment}

\vspace{-2ex}
}
\newcommand{\call}[1]{\ensuremath{{\mathcal{#1}}}}
\newcommand{\mathconstante}[1]{\ensuremath{{\mathrm{#1}}}}
\newcommand{\Variables}{\call{X}}
\newcommand{\Constants}{\mathconstante{C}}
\newcommand{\fonction}[2]{\ensuremath{{\mathconstante{#1}(#2)}}}
\newcommand{\set}[1]{\ensuremath{{\left\lbrace #1 \right\rbrace}}}
\newcommand{\termset}[1]{{\fonction{T}{#1}}}
\newcommand{\rhnorm}[2]{\ensuremath{{(#1)\!\!\downarrow_{#2}}}}
\newcommand{\norm}[1]{\rhnorm{#1}{}}
\newcommand{\Var}[1]{\fonction{Var}{#1}}
\newcommand{\Sub}[1]{\fonction{Sub}{#1}}
\newcommand{\Cons}[1]{\fonction{trace}{#1}}
\newcommand{\Supp}[1]{\fonction{Supp}{#1}}
\newcommand{\tq}{\ensuremath{\,|\,}}
\newcommand{\condset}[2]{\set{#1\tq{}#2}}
\newcommand{\rhclos}[2]{{\ensuremath{\overline{#1}^{#2}}}}
\newcommand{\clos}[1]{\rhclos{#1}{}}
\newenvironment{decisionproblem}[1]{
\vspace*{-1ex}
\begin{tabbing}
  \underline{\textbf{#1}}\\
  \hspace*{2em}\= \textbf{Input:}~ \=}{
\end{tabbing}
\vspace*{-1em}
}
\newcommand{\entreeu}[1]{ \hbox{\vbox{\parbox[t]{0.8\linewidth}{#1}}}\\}
\newcommand{\sortie}[1]{
\> \textbf{Output:} \> \hbox{\vbox{\parbox[t]{0.8\linewidth}{#1}}}\\}
\newcommand{\unif}{\ensuremath{\stackrel{?}{=}}}
\newcommand{\intrus}[3]{\ensuremath{\left\langle #1,#2,#3\right\rangle}}
\newcommand{\II}{\ensuremath{\call{I}'_{\mbox{\tiny\rm }}}}
\newcommand{\nnintrus}[1]{\intrus{\call{G}}{S}{\emptyset}}
\newcommand{\nintrus}[1]{\intrus{\call{G}}{S}{\call{H}}}
\newcommand{\gsig}[1]{\termset{#1}}
\newcommand{\vsig}[1]{\termset{#1,\Variables}}
\newcommand{\TGX}{\vsig{\call{G}}}
\newcounter{lemc}
\newcommand{\rew}{\rightarrow}
\newcommand{\ded}{\twoheadrightarrow}
\newcommand{\M}[1]{\ensuremath{\mathop{\mbox{\rm M}}(#1)}}
\newcommand{\nbv}[1]{\ensuremath{|\Var{#1}|}}
\newcommand{\SSub}[1]{\ensuremath{\mathop{\mbox{\rm SSub}}(#1)}}
\title{On the Decidability of (ground) Reachability Problems for Cryptographic
Protocols (extended version)}
\author{Yannick Chevalier\inst{1} \and Mounira Kourjieh\inst{2}}
\institute{INRIA Nancy Grand Est, Loria, France, email:\texttt{Yannick.Chevalier@loria.fr}
\and IRIT, Universit\'e de Toulouse, France, email: \texttt{Mounira.Kourjieh@irit.fr}}
\date{}
\begin{document}
\sloppy
\maketitle{}
\begin{abstract}
  Analysis of cryptographic protocols in a symbolic model is relative to a
  deduction system that models the possible actions of an attacker regarding an
  execution of this protocol. We present in this paper a transformation
  algorithm for such deduction systems provided the equational theory has the
  finite variant property. the termination of this transformation entails the
  decidability of the ground reachability problems. We prove that it is
  necessary to add one other condition to obtain the decidability of non-ground
  problems, and provide one new such criterion.
\end{abstract}

\section{Introduction}\label{sec:intro}
Cryptographic protocols are programs designed to ensure secure electronic communications between participants using  an  insecure
networks. 
Unfortunately, the existence of cryptographic primitives such as encryption and digital signature is not sufficient to ensure security and several attacks were found on established protocols \cite{CJ97,spore}. 
The most relevant example is the bug of the Needham-Schroeder protocol found by Lowe \cite{Lowe95} using a model-checking tool. It took 17 years since the protocol was published to find the attack, a so-called \textit{man-in-the-middle} attack.
This situation leads to the development of tools and decision
procedures for the formal verification of security protocols.  There are several
approaches to modelling 
cryptographic protocols and analysing their security properties: reachability analysis (e.g.: NRL \cite{meadows96}), model checking (FDR~\cite{Lowe97,Lowe99a}, Mur$_\varphi$
\cite{MMS97}), modal logic and deduction \cite{BurrowsAN90}, process calculi like \textit{the spi-calculus} \cite{AbadiG99}, so-called cryptographic proofs (\cite{AbadiR00}) and others.
%
 Here, we use yet another
technique, based on the resolution of reachability problems.

Early works on verification of cryptographic protocols studied the standard
Dolev-Yao intruder model~\cite{RusinowitchT-TCS03} and the perfect
cryptography~\cite{dolev83ieee} which states that it is impossible to obtain any information about an encrypted message without knowing the exact key necessary to decrypt this message.  Unfortunately, this perfect cryptography assumption has proven too idealistic: there are protocols which can be proven secure under perfect cryptography assumption, but which are in reality insecure since an attacker can use properties of the cryptographic primitives
in combinaison with the protocol rules in order to attack protocol. These properties (so-called algebraic properties) are typically expressed as equational theories.
An overview on algebraic properties of well-known cryptographic primitives can be found in \cite{CortierDL}.
%
 In this paper, we study the class of
equational theories represented by a finite convergent rewrite system and having the
finite variant property modulo the empty theory~\cite{Comon-LundhD05}. 

Another point of interest is that an intruder is modelled by a deduction system
representing the possible inferences it can make on the messages it knows. A
\emph{ground reachability problem} for a given deduction system consists in
giving a proof using the permitted deductions of a fact represented by a ground
term $t$ from a set of known facts represented by a finite set of terms $E$.
\emph{General reachability problems} are generalisation of the problem in which
the goal $t$ has non variables, and the goal is to find a ground substitution
$\sigma$ of these variables such that the instance $t\sigma$ is provable from a
finite set of ground terms $E$. This generalisation consists in providing
intermediate steps to solve.  

\paragraph{Proof strategy.} In~\cite{fossacs04}, H.~Comon-Lundh
proposes a two-steps strategy to solve general reachability problems,
\textit{i.e.} first to solve the ground reachability problems by invoking some
locality argument, and then to reduce general reachability problems to ground
ones. The method described in this paper roughly follows this line.  We employ
the finite variant property to reduce reachability problems modulo an equational
theory to reachability problems modulo the empty theory. We then partially
compute a transitive closure of the possible deductions. We prove that the
termination of this computation implies the decidability of the ground
reachability problems.  We
conjecture that the overall construction amounts to proving that the deduction
system is $F$-local~\cite{BernatC06}. We then give a new criterion that permits us to reduce general
reachability problems to ground reachability problems. This criterion is based
on counting the number of variables in a reachability problem before and after a
deduction is guessed, and is a generalisation of the one employed for the
specific case of the Dolev-Yao intruder model. The intuition behind this
criterion is that a deduction rule has to provide more relations between
existing fact than it introduces new unknown. We give an example showing that
such an additional criterion is needed, in the sense that there exists deduction
systems on which the saturation algorithm terminates, but for which the general
reachability problems are undecidable.
Another contribution of this paper is a decidability result to   the ground reachability problems for the theory of blind signature \cite{KremerR05} using the initial definition of subterm introduced in \cite{AbadiC06,Baudet05},
a similar result was given in  \cite{AbadiC05} using an extended definition of subterm. 
In addition we give a decidability result to the general reachability problems for a class of subterm convergent equational theories, while a more general  result was given in  \cite{Baudet05}, the proof given in this paper  for our special case is much shorter.

\paragraph{Related works.} Several decidability results have been obtained for cryptographic protocols in a
similar setting~\cite{AmadioL00,MillenS01,AmadioLV03}.  These results have
been extended to handle algebraic properties of cryptographic primitives
\cite{ChevalierKRT-FSTTCS03,ChevalierKRT05,ChevalierK07,AbadiC05}.  In~\cite{AbadiC06}, a decidability result was given to the ground
reachability problems in the case of subterm convergent equational theories. 
This  result  was extended
in~\cite{AbadiC05} and a
decidability result to the ground reachability problems in the case of locally
stable AC-convergent equational theories was given. Moreover, again
in~\cite{AbadiC05}, a decidability result was given to the ground reachability
problems for the theory of blind signature \cite{KremerR05} while  this theory was not included in  \cite{AbadiC06}.
To obtain a decidability result for the theory of blind signature, Abadi and Cortier \cite{AbadiC05} use a new extended definition of subterm. The result obtained in \cite{AbadiC06} was extended in \cite{Baudet05} in  different way than in \cite{AbadiC05} and a decidability result was obtained to the general reachability problem for the class of subterm convergent equational theory.
The first result of our paper is a decidability result to the ground reachability problems for a class of equational theories which includes the class studied in \cite{AbadiC06}.  We note  that the class studied in \cite{AbadiC05} is  incomparable with ours and we note also  that  the proof used in \cite{AbadiC05} to decide the ground reachability problems for the theory of blind signature is different from the ours.
 Another result of this paper is a decidability result to the general reachability problem for a class  of equational theories under some conditions on the deduction systems and the class studied in \cite{Baudet05} is incomparable with ours.
In \cite{BernatC06}, a
decidability result was given to the general reachability problems under some
syntactic conditions on the intruder deduction rules, this result is incomparable with ours.

\section{Preliminaries} \label{sec:setting} 

We now introduce some notations and basic definitions for terms, equational
theories and term rewriting systems (the reader may refer to
\cite{DershowitzJ90} for more details), and then proceed with the definition
of the so-called intruder constraints.

\subsection{Terms}
We assume given a signature $\call{G}$, an infinite set of variables $\call{X}$
and an infinite set of free constants \Constants{}. The set of terms built with
$\call{G}$ and \Variables{} is denoted \vsig{\call{G}} and its subset of ground
terms (terms without variables) \gsig{\call{G}}.  We denote \Var{t} the set of variables occurring in a
term $t\in\vsig{\call{G}}$, $\nbv{t}$ the number of elements  in the set $\nbv{t}$ that is the number of distinct variables occurring
in $t$, $\Sub{t}$ the set of subterms of $t$ and $\SSub{t}$ the set of strict
subterms of $t$. These notations are extended as expected to sets of terms.   We denote $t[s]$ a term $t$ that admits $s$ as
subterm and $t[s\leftarrow s']$ the term  $t$ in which $s$ is replaced by $s'$.


A substitution $\sigma$ is an involutive mapping from \Variables{} to
\vsig{\call{G}} such that $\Supp{\sigma}=\condset{x}{\sigma(x)\not=x}$, the
\emph{support} of $\sigma$, is a finite set.  The application of a substitution
$\sigma$ to a term $t$ (resp.  a set of terms $E$) is denoted $t\sigma$
(resp. $E\sigma$).  A substitution $\sigma $ is \emph{ground} w.r.t.  $\call{G}$
if the image of $\Supp{\sigma }$ is included in $\gsig{\call{G}}$.

We recall in the following the definition of  \textit{reduction order}:
\begin{definition}\label{def:reduction-order}
Let $\call{G}$ be a signature and \call{X} be an infinite set of variables. A strict order $\succ$ on $\vsig{\call{G}}$ is called a rewrite order iff it is
\begin{enumerate}
\item \textbf{compatible} with $\call{G}$-function symbols: for all $s,s'\in\vsig{\call{G}}$ and all $f\in\call{G}$ with arity $n\geq 0$, $t_1\succ t_2$ implies 
 $$ f(t_1,\ldots,t_{i-1},s,t_{i+1},\ldots,t_n)\succ f(t_1,\ldots,t_{i-1},s',t_{i+1},\ldots,t_n)$$ 
for all $i$, $1\leq i\leq n$, and all $t_1,\ldots,t_{i-1},t_{i+1},\ldots,t_n\in\vsig{\call{G}}$.
\item
\textbf{closed under substitutions:} for all $s,s'\in\vsig{\call{G}}$ and all substitutions $\sigma$, $s\succ s'$
implies $\sigma(s)\succ \sigma(s')$.
\end{enumerate}
A \textbf{reduction order} is a well-founded rewrite order.
\end{definition}

 We consider
a reduction order $\succ$ over $\vsig{\call{G}}$ total over ground terms.  We
denote $\succeq$ the relation between terms such that $t_1\succeq t_2$ iff
$t_1\succ t_2$ or $t_1=t_2$ for $t_1,t_2\in\vsig{\call{G}}$.

A rewriting system \call{R} is a finite set of couples $(l,r)\in\vsig{\call{G}}^2$, where
each couple is called a \emph{rewriting rule} and is denoted $l\rew r$. The
rewriting relation $\rew_\call{R}$ between terms is defined by $t\rew_\call{R}
t'$ if there exists $l\to r\in \call{R}$ and a substitution $\sigma$ such that
$l\sigma=s$ and $r\sigma=s'$, $t=t[s]$ and $t'=t[s\leftarrow s']$. A rewriting
system is \emph{terminating} if for all terms $t$ there is no infinite sequence
of rewriting starting from $t$. It is \emph{convergent} if it has moreover the
\emph{confluence} property: every sequence of rewriting ends in the same term
denoted $\norm{t}_{\call{R}}$, or simply \norm{t} if \call{R} is clear from the
context. We say that a term $t$ is in \emph{normal form} if
$t=\norm{t}_\call{R}$. A substitution $\sigma$ is in normal form if for all
$x\in\Supp{\sigma}$, the term $\sigma(x)$ is in normal form. Given a
substitution $\sigma$, we denote $\norm{\sigma}_\call{R}$ the substitution such
that, for all $x\in\Supp{\sigma}$ we have
$\norm{x\sigma}_\call{R}=x\norm{\sigma}_\call{R}$.

An equational theory \call{H} is a congruence relation on terms in
\vsig{\call{G}}. We denote $t=_\call{H}t'$ the fact that the term $t$ and $t'$
are identified by \call{H}.  We say that \call{H} is \emph{generated} by a
convergent rewriting system \call{R} if $t=_\call{H}t'$ iff
$\norm{t}_\call{R}=\norm{t'}_\call{R}$.

\subsection{Unification systems}


\begin{definition}{\label{def:unification}(Unification systems)}
  Let \call{H} be an equational theory.  A $\call{H}$-\emph{unification system}
  \call{S} is a finite set of pairs of terms in \TGX{} denoted by $\set{u_i
    \unif{}_{\call{H}} v_i}_{i\in\set{1,\ldots,n}}$.  It is satisfied by a
  substitution $\sigma$, and we note $\sigma\models{}_{\call H}\call{S}$, if for
  all $i\in\set{1,\ldots,n}$ we have $u_i\sigma =_\call{H} v_i\sigma$.  In this
  case we call $\sigma$ a \emph{solution} or a \emph{unifier} of \call{S}.
\end{definition}

When \call{H} is generated by a convergent rewriting system \call{R},
considering a bottom-up normalisation shows that if $\sigma$ is a solution of a
\call{H}-unification system, then \norm{\sigma} is also a solution of the same
unification system. A top-down normalisation on solutions also demonstrates that
we can assume that terms in a unification system are in normal form. Accordingly
we will consider in this paper only solutions in normal form of unification
systems in normal form. A unifier $\sigma$ is more general than a unifier $\tau$
if there exists a substitution $\theta$ such that $\sigma\theta=\tau$.  A
\emph{complete set of unifiers} of a \call{H}-unification system \call{S} is a
set $\Sigma$ of unifiers of \call{S} such that, for any unifier $\tau$ of
\call{S}, there exists $\sigma\in\Sigma$ which is more general than $\tau$. The
unifier $\tau$ is a \emph{most general unifier} of \call{S} if the substitution
$\theta$ in the preceding equation is a variable renaming. We denote
$mgu(\call{S})$ the set of most general unifiers modulo \call{H} of a
unification system \call{S}.  In the context of unification modulo an equational
theory, standard (or syntactic) unification will also be called unification in
the empty theory. In this case, it is well-known that there exists a unique most
general unifier of a set of equations. This unifier is denoted $mgu(\call{S})$,
or $mgu(s,t)$ in the case $\call{S}=\set{s\unif_\emptyset t}$.

\paragraph{Finite Variant Property.}
We will abusively write that an equational theory \call{H} has the \emph{finite
  variant property} if the couple $(\call{H},\emptyset)$ has the finite variant
property in the notation of~\cite{Comon-LundhD05}. Let us now formally state the
definition of this property in this case, simplified using the Lemma~3 and the
Theorem~1 of~\cite{Comon-LundhD05}.

\begin{definition}{(Finite Variant Property)\label{def:FVP}}
  A theory \call{H} has the \emph{finite variant property} if, for any term $t$,
  one can compute a finite set of substitutions $\theta_1,\ldots,\theta_n$ (the
  variant substitutions) such that, for any substitution $\sigma$ in normal form
  there exists $i\in\set{1,\ldots,n}$ and a substitution $\sigma'$ in normal
  form such that $\sigma=\theta_i\sigma'$ and
  $\norm{t\sigma}=\norm{t\theta_i}\sigma'$. The terms $\norm{t\theta_i}$ are
  called the \emph{variants} of $t$.
\end{definition}

Examples of equational theories  having the finite variant property
are those defined by a convergent rewriting system and  such that either basic
narrowing~\cite{Hullot80} terminates or  the rewriting system is
optimally reducing~\cite{NarendranPS97}. 

The finite variant property ensures that it is possible to compute a complete
set of most general unifiers between two terms $t$ and $t'$. Indeed, it suffices
to compute for these two terms the respective sets of variant substitutions
$\set{\theta_i}_{i\in\set{1,\ldots,m}},\set{\theta'_j}_{j\in\set{1,\ldots,n}}$,
and to (try to) unify in the empty theory every pair of terms
$\norm{t\theta_i}\unif\norm{t'\theta'_j}$.

\begin{center}
  \fbox{\vspace*{8em} ~~\parbox{0.92\linewidth}{In the rest of this paper we
      will consider equational theories \call{H} having the finite variant
      property and generated by a convergent rewriting system \call{R}.}~~}
\end{center}

\subsection{Deduction systems}
The notions that we give here have been defined in~\cite{ChevalierR05}.  These
definitions have since been generalised to consider a wider class of intruder
deduction and constraint systems~\cite{ChevalierLR07}. Although this general class
encompasses all deduction and constraint systems given in this paper, we have
preferred to give the simpler definitions from~\cite{ChevalierR05} which are
sufficient for stating our problem.  We will refer, without further
justifications, to the model of~\cite{ChevalierLR07} as \emph{extended} deduction
systems. The constraint systems considered and defined here correspond to
symbolic derivations~\cite{ChevalierLR07} in which a most general unifier of the
unification system has been applied on the output messages (for
Def.~\ref{def:constraints}) and on input variables (for the extended constraint
systems).

In the context of a security protocol (see \textit{e.g.}~\cite{meadows96} for a
brief overview), we model messages as ground terms and intruder deduction rules
as rewriting rules on sets of messages representing the knowledge of an
intruder.  The intruder derives new messages from a given (finite) set of
messages by applying deduction rules.  Since we assume some equational axioms
$\call{H}$ are satisfied by the function symbols in the signature, all these
derivations have to be considered \emph{modulo} the equational theory $\call{H}$
generated by $\call{R}$.

\begin{definition}{\label{def:intruder}}
  A \emph{deduction system} $\call I$ is given by a triple
  \intrus{\call{G}}{\call{L}}{\call{H}} where $\call{G}$ is a signature,
  $\call{L}$ is a set of deduction rules $l\ded r$, where $l$ a set of terms in $\vsig{\call{G}}$ and $r$ a term in $\vsig{\call{G}}$,  and $\call H$ is an equational theory.
\end{definition}

Each rule $l\ded{}r$ in $\call{L}$ defines a deduction relation
$\ded_{l\ded{}r}$ between finite sets of terms.  Given two finite sets of terms
$E$ and $F$ we have $E\ded_{l\ded{}r}F$ if and only if there exits a
substitution $\sigma$, such that $l\sigma=_{\call{H}}l'$,
$r\sigma=_{\call{H}}r'$, $l'\subseteq{}E$ and $F = E \cup \set{r'}$.  We denote
$\ded_{\call I}$ the union of the relations $\ded_{l\ded{}r}$ for all $l\ded{}r$
in $\call{L}$ and by $\ded_{\call I}^*$ the transitive closure of $\ded_{\call
  I}$.  Note that, given sets of terms $E$, $E'$, $F$ and $F'$ such that
$E=_\call{H}E'$ and $F=_\call{H}F'$ by definition we have $E\ded_{\call I}F$ iff
$E'\ded_{\call I}F'$. We simply denote by $\ded$ the relation $\ded_{\call I}$
when there is no ambiguity about ${\call I}$.

We recall that $\succeq$ is the extension of the reduction order $\succ$ defined over $\vsig{\call{G}}$.
 
\begin{definition}
  A deduction rule $l\ded r$ is a \emph{decreasing} rule if there is a term
  $s\in l$ such that $s\succeq r$ and it is \emph{increasing}
  otherwise. 
\end{definition}

From now, if \call{L} is the set of deduction rules, we denote by
$\call{L}_{inc}$ the set of increasing rules and by $\call{L}_{dec}$ the set of
decreasing rules. By definition of \emph{increasing} and \emph{decreasing}
rules, we have $\call{L}=\call{L}_{inc}\cup\call{L}_{dec}$.

A \emph{derivation} $D$ of length $n$, $n\ge 0$, is a sequence of steps of the
form $E_0\ded_{\call I} E_0,t_1\ded_{\call I}\cdots\ded_{\call I} E_n$ with
finite sets of terms $E_0,\ldots{}E_n$, and terms $t_1,\ldots,t_n$, such that
$E_i=E_{i-1}\cup{}\set{t_i}$ for every $i\in \set{1,\ldots,n}$.  The term $t_n$
is called the {\em goal} of the derivation. We let $\Cons{D}$  be the set of
terms constructed during the derivation $D$,
$\Cons{D}=E_0\cup\set{t_1,\ldots,t_n}$. We define $\rhclos{E}{\call I}$ to be equal
to the set of terms that can be deduced from $E$, $\rhclos{E}{\call I}=\set{t ~ s.t. ~ E\ldots^*_\call{I} E' ~ and ~ t\in ~ E'}$. If there is no ambiguity
on the deduction system $\call I$ we write \clos{E} instead of $\rhclos{E}{\call
  I}$.

\subsection{Constraint systems\label{reachprob}} 

We now introduce the constraint systems to be solved for checking
protocols. It is presented in~\cite{ChevalierR05} how these constraint
systems permit to express the reachability of a state in a protocol
execution.

\begin{definition}{\label{def:constraints}(\call{I}-Constraint systems)} 
  Let ${\call I} =\langle \call{G}, \call{L}, \call{H} \rangle $ be a
  deduction system.  An $\call I$-\emph{constraint system} \call{C} is
  denoted $((E_i\rhd{}v_i)_{i\in\set{1,\ldots,n}},\call{S})$ and is
  defined by a sequence of pairs $(E_i, v_i)_{i\in\set{1,\ldots,n}}$
  with $v_i\in\Variables{}$, $E_i \subseteq \TGX$, $E_i\subseteq E_{i+1}$ and 
  $\Var{E_i}\subseteq\set{v_1,\ldots,v_{i-1}}$ for
  $i\in\set{1,\ldots,n}$, and by an $\call
  H$-unification system \call{S}.
  
  An $\call I$-\emph{Constraint system} \call{C} is satisfied by a
  substitution $\sigma$ if for all $i\in\set{1,\ldots,n}$ we have
  $v_i\sigma\in\rhclos{E_i\sigma}{\call{I}}$ and if
  $\sigma\models_\call{H}{}\call{S}$. We denote that a substitution
  $\sigma$ satisfies a constraint system \call{C} by
  $\sigma\models_{\call I}\call{C}$.
\end{definition}

Constraint systems are denoted by \call{C} and decorations thereof.
Note that if a substitution $\sigma$ is a solution of a constraint
system \call{C}, by definition of deduction rules and unification
systems the substitution \norm{\sigma} is also a solution of \call{C}.
In the context of cryptographic protocols the inclusion
$E_{i-1}\subseteq E_{i}$ means that the knowledge of an intruder does
not decrease as the protocol progresses: after receiving a message a
honest agent will respond to it, this response can then be added to
the knowledge of the intruder who listens to all communications.  The
condition on variables stems from the fact that a message sent at 
step $i$ must be built from previously received messages recorded in
the variables $v_j, j<i $, and from the  initial knowledge (set of ground terms)  of
the honest agents.  Our goal is to solve the following  decision problem.

\begin{decisionproblem}{\call{I}-Reachability Problem}
  \entreeu{ An \call{I}-constraint system \call{C}.}
  \sortie{\textsc{Sat} iff there exists a substitution $\sigma$ such
    that $\sigma\models_{\call I}\call{C}.$}
\end{decisionproblem}

\section{Saturation}\label{sec:sat}

In the rest of this paper, we suppose that
$\call{I}_0=\intrus{\call{G}}{\call{L}_0}{\call{H}}$ is an initial deduction
system. \emph{We assume that $\call{L}_0$ is the union of rules
  $x_1,\ldots,x_n\ded f(x_1,\ldots, x_n)$ for some function symbols
  $f\in\call{G}$}.


Let $\call{H}$ be an equational theory having the finite variant property and
generated by a convergent rewriting system \call{R}.  The saturation of the set
of deduction rules $\call{L}_0$ defined modulo the equational theory \call{H} is
the output of the application of the saturation algorithm given by the following
two steps:

\begin{itemize}
\item \textit{Step ~ 1: } Anticipating the application of rules of $\call{L}_0$
  on ground terms in normal form, we define the set \call{L} of rules ``in
  normal form'':
  $$
  \call{L}=\bigcup_{
    \begin{array}{c}
      x_1,\ldots,x_n\ded f(x_1,\ldots,x_n)\in\call{L}_0\\
      \theta\text{ variant subsitution of }f(x_1,\ldots,x_n)\\
    \end{array}
  } x_1\theta,\ldots,x_n\theta\ded \norm{f(x_1,\ldots,x_n)\theta}
  $$
  This union is over finite sets thanks to the finiteness of $\call{L}_0$ and to
  the finite variant property.
\item \textit{Step ~ 2:} Start with $\mathcal{L}' =\call{L}$, repeat the rule
  given in Figure~\ref{fig:closure} until no new rule can be added.

\begin{figure*}[htbp]
  $$
  \begin{array}{lc}
     \infer
      [\begin{array}{c}
       s\notin\Variables\\  
        \sigma=mgu_\emptyset(r_1,s)\\
      \end{array}]
      {
        \call{L}'\gets\call{L}'\cup\set{(l_1,l_2\ded r_2)\sigma}
      }
      {
        l_1\ded r_1 \in \mathcal{L}'_{inc}~ ; ~~~~l_2,s\ded r_2\in \mathcal{L}'
      }
  \end{array}
  $$
  \vspace*{-1em}
  \caption{\label{fig:closure} closure rule.}
  \vspace*{-2em}  
\end{figure*}
\end{itemize}

We define two new deduction systems, corresponding each to one step of the
saturation algorithm, $\call{I}= \intrus{\call{G}}{\call{L}}{\emptyset}$ and
$\II=\intrus{\call{G}}{\call{L}'}{\emptyset}$.
Since in the first step we consider all possible variants of all possible
deduction rules, we have:

\begin{lemma}{\label{lemma:00:saturation1}}
  Let $E$ and $F$ be two  sets of ground terms in normal form 
 we have: $E\ded_{\call{I}_0} F$ iff 
  $E\ded_{\call{I}} F$.
\end{lemma}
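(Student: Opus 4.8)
The plan is to prove the two implications separately. Only the direction $E\ded_{\call{I}_0}F\Rightarrow E\ded_{\call{I}}F$ will genuinely use the finite variant property; the converse follows at once from the fact that every rule of $\call{L}$ is, up to $=_\call{H}$, an instance of a rule of $\call{L}_0$ applied to the same left-hand side pattern.

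For the forward direction, assume $E\ded_{\call{I}_0}F$, witnessed by a rule $x_1,\ldots,x_n\ded f(x_1,\ldots,x_n)\in\call{L}_0$ and a substitution $\sigma$ with $x_i\sigma=_\call{H}t_i$ for some $t_i\in E$ and $F=E\cup\set{r'}$ where $f(x_1,\ldots,x_n)\sigma=_\call{H}r'$. Since $E$ and $F$ are in normal form, a bottom-up normalisation (as already noted for solutions of unification systems) lets me replace $\sigma$ by $\norm{\sigma}$ and restrict it to $\set{x_1,\ldots,x_n}$: then $\sigma$ is in normal form, $x_i\sigma=t_i\in E$, and $r'=\norm{f(x_1,\ldots,x_n)\sigma}$. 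Now apply Definition~\ref{def:FVP} to the term $t=f(x_1,\ldots,x_n)$ and to this $\sigma$: there are a variant substitution $\theta$ of $t$ and a normal form substitution $\sigma'$ with $\sigma=\theta\sigma'$ and $\norm{t\sigma}=\norm{t\theta}\sigma'$. By construction of $\call{L}$ (Step~1), the rule $x_1\theta,\ldots,x_n\theta\ded\norm{f(x_1,\ldots,x_n)\theta}$ is in $\call{L}$, and applying it modulo the empty theory with the substitution $\sigma'$ is legitimate since $x_i\theta\sigma'=x_i\sigma=t_i\in E$; the term it produces is $\norm{f(x_1,\ldots,x_n)\theta}\sigma'=\norm{t\sigma}=r'$. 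Hence $E\ded_{\call{I}}E\cup\set{r'}=F$.

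For the converse, assume $E\ded_{\call{I}}F$. This is witnessed by a rule $x_1\theta,\ldots,x_n\theta\ded\norm{f(x_1,\ldots,x_n)\theta}$ of $\call{L}$, arising from $x_1,\ldots,x_n\ded f(x_1,\ldots,x_n)\in\call{L}_0$ and a variant substitution $\theta$, together with a substitution $\tau$ such that $x_i\theta\tau\in E$ for all $i$ and $F=E\cup\set{\norm{f(x_1,\ldots,x_n)\theta}\tau}$. Put $\sigma=\theta\tau$. Then $x_i\sigma=x_i\theta\tau\in E$, and since $f(x_1,\ldots,x_n)\theta=_\call{H}\norm{f(x_1,\ldots,x_n)\theta}$, congruence of $=_\call{H}$ gives $f(x_1,\ldots,x_n)\sigma=_\call{H}\norm{f(x_1,\ldots,x_n)\theta}\tau$. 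Thus the rule $x_1,\ldots,x_n\ded f(x_1,\ldots,x_n)$ of $\call{L}_0$, applied modulo $\call{H}$ with $\sigma$, witnesses $E\ded_{\call{I}_0}E\cup\set{\norm{f(x_1,\ldots,x_n)\theta}\tau}=F$.

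The only delicate point is the normalisation bookkeeping in the forward direction: one must use the hypothesis that $E$ and $F$ are in normal form to justify taking the witnessing substitution in normal form and to identify the term added to $E$ with $\norm{t\sigma}$, so that the equation $\norm{t\sigma}=\norm{t\theta}\sigma'$ delivered by the finite variant property matches exactly the output of the corresponding rule of $\call{L}$ in the empty theory. Everything else is a straightforward unfolding of the definitions of $\ded_{\call{I}_0}$, of $\ded_{\call{I}}$, and of the construction of $\call{L}$.
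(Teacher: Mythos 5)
Your proof is correct and follows essentially the same route as the paper: normalise the witnessing substitution, invoke the finite variant property to factor it as $\theta\sigma'$ so that the Step~1 rule $x_1\theta,\ldots,x_n\theta\ded\norm{f(x_1,\ldots,x_n)\theta}$ applies with $\sigma'$ in the empty theory, and recompose $\sigma=\theta\sigma'$ (modulo normalisation) for the converse. Your version is merely a little more explicit than the paper's about unfolding the ``modulo $\call{H}$'' clause in the definition of $\ded_{\call{I}_0}$ and justifying why the witnessing substitution may be taken in normal form.
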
  
\begin{proof}
  Let $E$ and $F$ be two sets of ground terms in normal form and assume there is
  a rule $x_1,\ldots,x_n\ded f(x_1,\ldots,x_n) \in\call{L}_0$ such that
  $E\ded_{x_1,\ldots,x_n\ded f(x_1,\ldots,x_n)} F$. By definition there exists a
  ground substitution $\sigma$ in normal form such that
  $(x_1,\ldots,x_n)\sigma\subseteq E$ and
  $F=E\cup\set{\norm{f(x_1,\ldots,x_n)\sigma}}$.  Due to the finite variant
  property, there exists a variant substitution $\theta$ of $f(x_1,\ldots,x_n)$
  and a ground normal substitution $\sigma'$ such that
  $\norm{f(x_1,\ldots,x_n)\sigma}=\norm{f(x_1,\ldots,x_n)\theta}\sigma'$ and
  $\sigma=\theta\sigma'$. The rule $Img(\theta)\ded
  \norm{f(x_1,\ldots,x_n)\theta}$ was added to \call{L} by Step 1 this implies
  that $E\ded _{\call{I}} F$.  To prove the converse, notice that if
  $(x_1,\ldots,x_n)\theta\ded \norm{f(x_1,\ldots,x_n)\theta}$ can be applied
  with the normal ground substitution $\sigma'$ on $E$, then the rule
  $x_1,\ldots,x_n\ded f(x_1,\ldots,x_n)$ can be applied with the ground
  substitution $\sigma=\norm{\theta\sigma'}$ on $E$.
\end{proof}

Also, the computation of Step.~2 is correct and complete in the following
sense.

\begin{lemma}{\label{lemma:00:lem1}}
  For any set of ground terms $E$ in normal form and any ground term $t$ in
  normal form we have: $t\in\rhclos{E}{\call{I}}$ if and only if
  $t\in\rhclos{E}{\II}$.
 \end{lemma}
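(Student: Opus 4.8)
The plan is to establish the two inclusions of $\rhclos{E}{\call{I}} = \rhclos{E}{\II}$ separately. One direction is immediate: Step~2 of the saturation only \emph{adds} rules, so $\call{L} \subseteq \call{L}'$ and every $\call{I}$-derivation is literally an $\II$-derivation; hence $\rhclos{E}{\call{I}} \subseteq \rhclos{E}{\II}$. For the converse the key observation is that each rule produced by the closure rule of Figure~\ref{fig:closure} can be \emph{unfolded} into one application of each of its two parent rules. Precisely, suppose $\rho = (l_1, l_2 \ded r_2)\sigma$ where $l_1 \ded r_1$ is an increasing rule, $l_2, s \ded r_2$ is a rule of $\call{L}'$, and $\sigma = mgu_\emptyset(r_1, s)$, and suppose $E' \ded_\rho E' \cup \set{r_2\sigma\tau}$ for some ground substitution $\tau$, so that $l_1\sigma\tau \subseteq E'$ and $l_2\sigma\tau \subseteq E'$. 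Since $\sigma$ unifies $r_1$ and $s$ we have $r_1\sigma\tau = s\sigma\tau$, whence
$$E' \;\ded_{l_1 \ded r_1}\; E' \cup \set{r_1\sigma\tau} \;\ded_{l_2, s \ded r_2}\; E' \cup \set{r_1\sigma\tau, r_2\sigma\tau} \;\supseteq\; E' \cup \set{r_2\sigma\tau}.$$
Thus one step through $\rho$ is subsumed by two steps, one through each parent, at the cost of adding the harmless extra term $r_1\sigma\tau$.

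To turn this into a proof I would first assign to every rule of $\call{L}'$ a \emph{rank}: rules of $\call{L}$ have rank $0$, and a rule added during Step~2 has rank one plus the maximum of the ranks of the two parent rules from which it was first produced. This is well defined because the closure rule only ever combines rules already present, so the ``parent of'' relation is well founded and every rank is a natural number. Now take any derivation $D : E = E_0 \ded \cdots \ded E_n$ with $t \in E_n$ witnessing $t \in \rhclos{E}{\II}$, and attach to $D$ the finite multiset of the ranks of the rules it uses. If $D$ uses a rule $\rho$ of positive rank, replace one such step by the two steps given by the unfolding above. Because $\ded$ is monotone — a larger set of known terms deduces at least as much — the tail of $D$ still goes through with all sets enlarged, so $t$ is still reached; and the multiset of ranks strictly decreases, since one element is removed and replaced by two strictly smaller ones.

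Since the multiset order on $\mathbb{N}$ is well founded, iterating this surgery terminates, and it can only terminate at a derivation all of whose rules have rank $0$, i.e. an $\call{I}$-derivation of $t$ from $E$; hence $t \in \rhclos{E}{\call{I}}$, which completes the converse inclusion. The only delicate point is the bookkeeping: pinning down the rank function, checking the multiset decrease, and verifying that the enlarged sets still form a valid derivation reaching $t$ — all routine. Note that the equational theory and the normal-form hypotheses play no role in this argument, since $\call{I}$ and $\II$ are deduction systems modulo the empty theory and the reasoning is purely about rewriting derivations; those hypotheses are only needed elsewhere, to connect $\call{I}$ with $\call{I}_0$.
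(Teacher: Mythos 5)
Your proof is correct and follows essentially the same route as the paper: the trivial inclusion from $\call{L}\subseteq\call{L}'$, and for the converse the unfolding of each closure-generated rule into one application of each parent, with termination by a well-founded multiset measure on the rules used in a derivation. The only (cosmetic) difference is your choice of measure — a rank computed from the parent relation rather than the paper's ordering of rules by construction time during saturation — and your explicit remark that the tail of the derivation survives because deduction is monotone in the knowledge set, a point the paper leaves implicit.
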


\begin{proof}
  The direct implication is trivial since $\call{L}'$ is initialised with
  $\call{L}$.  Let us prove the converse implication. Assume that there exists a
  $\II$-derivation starting from $E$ of goal $t$. Let us define an arbitrary
  total order on the rules of $\call{L}$, and we extend this order to rules of
  $\call{L}'\setminus\call{L}$ as follows: rules of $\call{L}$ are smaller than
  the rules of $\call{L}'\setminus\call{L}$ and rules of
  $\call{L}'\setminus\call{L}$ are ordered according to the order of their
  construction during the saturation.  Let $\M{D}$ be the multiset of rules
  applied in $D$.  Let $\Omega(E,t)=\set{D\mid D:E\ded^*_{\II}F\ni t}$. By
  construction, the ordering on rules is total and well-founded, and thus the
  pre-ordering on derivations in $\Omega(E,t)$ is also total and well-founded.
  Since $t\in\rhclos{E}{\II}$, we have $\Omega(E,t)\not=\emptyset$, and thus
  $\M{\Omega(E,t)}$ has a minimum element which is reached. Let $D$ be a
  derivation in $\Omega(E,t)$ having the minimum $\M{D}$, and let us prove that
  $D$ employs only rules in $\call{L}$.  By contradiction, assume that $D$ uses
  a rule $l\ded r \in\call{L}'\setminus\call{L}$ applied with a ground
  substitution $\sigma$ on a set $F$.  Since $l\ded r \notin\call{L}$, it has
  been constructed by closure rule. Thus, there exists two rules $l_1\ded
  r_1\in\call{L}'_{inc}$ and $l_2\ded r_2\in\call{L}'$, a term $s\in
  l_2\setminus\Variables$ such that $s$ and $r_1$ are unifiable,
  $\alpha=mgu(s,r_1)$, $l=(l_1,l_2\setminus s)\alpha$ and $r=r_2\alpha$.
  Replacing the application of the rule $l\ded r$ by two steps applying first
  the rule $l_1\ded r_1$ and then $l_2\ded r_2$ yields another derivation $D'$.
  Since $l\ded r$ must have an order bigger than the order of $l_1\ded r_1$ and
  $l_2\ded r_2$ and the last two rules are in $\call{L}'$, we deduce that
  $D'\in\Omega(E,t)$ and $\M{D'}<\M{D}$ which contradicts the minimality of
  $\M{D}$.
\end{proof}

 Let $E$ (resp. $t$) be a set of terms (resp. a term) in normal form and let $D$
 be a derivation starting from $E$ of goal $t$, $D:E=E_0\ded E_0,t_1\ded
 \ldots\ded E_{n-2},t_{n-1}\ded E_{n-1},t$.  The derivation $D$ is
 \emph{well-formed} if for all rules $l\ded r$ applied with substitution
 $\sigma$, for all $u\in l\setminus\Variables$ we have either $u\sigma\in E$ or
 $u\sigma$ was deduced  by a former decreasing rule. The following lemma is a
 consequence of the computation of the closure. Notice that we do not assume
 here, nor afterward unless  stated, that the saturation terminates.

\begin{lemma}{\label{lemma:00:saturation2}}
  Let $E$ (resp. $t$) be a set of terms (resp. a term) in normal form such that
  $t\in \rhclos{E}{\II}$. For all $\call{I}'$-derivations $D$ starting from $E$
  of goal $t$ we have either $D$ is well-formed or there is another
  $\call{I}'$-derivation $D'$ starting from $E$ of goal $t$ such that
  $\Cons{D}=\Cons{D'}$ and $D'$ is well-formed.
 \end{lemma}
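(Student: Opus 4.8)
The proof will be by induction on the number of non-well-formed deduction steps in $D$, where a step applying $l\ded r$ with substitution $\sigma$ is \emph{bad} if there is some $u\in l\setminus\Variables$ with $u\sigma\notin E$ and $u\sigma$ not deduced by an earlier decreasing rule --- that is, $u\sigma$ was produced by an earlier \emph{increasing} rule. If $D$ has no bad step it is already well-formed and we take $D'=D$. Otherwise I pick the \emph{last} bad step, say the $k$-th, applying a rule $l_2,s\ded r_2$ with substitution $\sigma$ (I single out the offending non-variable premise as $s$, so $s\sigma=t_j$ for some $j<k$), where $t_j$ itself was produced by an increasing rule $l_1\ded r_1$ applied with substitution $\sigma$ at step $j$ (after suitable renaming we may assume a common substitution, since the two rules share no variables). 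The key observation is that since $l_1\ded r_1$ is increasing and $s$ is a non-variable, $r_1\sigma = s\sigma$ is a non-variable term, hence $s\notin\Variables$ and $r_1,s$ are unifiable with some mgu $\alpha$ refined by $\sigma$; therefore the closure rule of Figure~\ref{fig:closure} applies to $l_1\ded r_1\in\call{L}'_{inc}$ and $l_2,s\ded r_2$, and the composed rule $(l_1,l_2\ded r_2)\alpha$ belongs to $\call{L}'$.

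Next I replace steps $j$ through $k$ as follows: I keep step $j$ (it may still be needed later) but insert, right after step $k$'s original position, a single application of the composed rule $(l_1,l_2\ded r_2)\alpha$ with the appropriate instance of $\sigma$. This application is legal because all premises of $l_1$ and of $l_2\setminus s$ were already available in $\Cons{D}$ before step $k$ (the premises of $l_1$ were available before step $j\le k$, the premises of $l_2\setminus s$ before step $k$), and it produces exactly the same term $r_2\sigma=t_k$. So I obtain a derivation $D''$ with $\Cons{D''}\subseteq\Cons{D}$ and the same goal $t$. One must then argue the reverse inclusion, or simply prune: after the rewrite, the original step $j$ may have become useless (nothing downstream now uses $t_j$), in which case it is removed; in general one does a final pass removing every step whose output is used by no later step and is not the goal, which only shrinks $\Cons{}$ and can only remove bad steps, never create them. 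The upshot is $\Cons{D''}=\Cons{D}$ after accounting (or, more cleanly, I first massage $D$ so that every intermediate term is actually used, then the composed-rule surgery preserves $\Cons{}$ exactly because the old step $j$ stays and the new composed step reuses the same terms).

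The crucial termination-of-the-induction point is that this transformation strictly decreases the chosen complexity measure. The natural measure is the position (in $D$) of the \emph{last} bad step, or better, the multiset of indices of bad steps ordered by "later is worse": the surgery eliminates the bad step at position $k$ (the offending premise $s\sigma$ of the new composed rule is now $s$ instantiated via $\alpha$, and $s\alpha$ --- if still a non-variable --- is matched by an $E$-term or an earlier decreasing output, because the composed rule's non-variable premises are inherited from $l_1$ and $l_2\setminus s$, each of which was, by maximality of $k$, supplied well-formedly), and introduces no bad step after position $k$. It may introduce bad steps \emph{before} $k$, but those are handled by the induction hypothesis since the measure (largest index of a bad step) has strictly decreased. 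Iterating, we reach a well-formed derivation $D'$ with $\Cons{D'}=\Cons{D}$.

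The main obstacle, and the place requiring the most care, is precisely the bookkeeping that the composed rule's application is \emph{well-formed at that step} --- i.e.\ that folding $l_1\ded r_1$ into $l_2,s\ded r_2$ does not merely relocate the violation. This hinges on: (i) $k$ being the \emph{last} bad step, so that all other non-variable premises of $l_1$ and of $l_2$ were already well-formedly supplied, and (ii) the non-variable premises of the composed rule $(l_1,l_2\setminus s)\alpha$ being instances of non-variable premises of $l_1$ or $l_2\setminus s$ (since $\alpha$ is an mgu and cannot turn a variable premise of $l_1$ or $l_2$ into a non-variable unless it is unified with a non-variable subterm coming from the other rule, a case that needs a separate short argument, e.g.\ by choosing $k$ minimal-at-the-end and handling $r_1$'s structure). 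I also need the standard facts, used implicitly, that $mgu_\emptyset$ exists and that $\sigma$ factors through $\alpha$, and that the derivation relation is stable under such instantiation --- all routine given the setup in Section~\ref{sec:setting}.
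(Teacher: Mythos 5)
Your overall strategy is the same as the paper's: when a non-variable premise of a rule application was produced by an earlier increasing rule, fold that increasing rule into the consuming rule via the closure rule of Figure~\ref{fig:closure} (saturation guarantees the composed rule is already in $\call{L}'$), keep the producing step so that $\Cons{D}$ is preserved, and induct on a well-founded measure. The gap is in the progress argument, and it comes from choosing the \emph{last} bad step. You justify that the composed rule's application at position $k$ is well-formed by saying that the non-variable premises inherited from $l_1$ and $l_2\setminus s$ were, ``by maximality of $k$, supplied well-formedly.'' But maximality of $k$ only constrains steps \emph{after} $k$; the increasing rule $l_1\ded r_1$ was applied at some step $j<k$, and nothing prevents a non-variable premise of $l_1$ from itself having been produced by an increasing rule before $j$ (recall that after Step~1 of the saturation the increasing rules may well have non-variable premises $x_i\theta$). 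In that case the new application at position $k$ is still bad, so the largest index of a bad step does not decrease, and neither does the multiset of indices of bad steps: you remove one occurrence of $k$ (for $s$) but add occurrences of $k$ for each inherited bad premise of $l_1$. As stated, your induction need not terminate.

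The paper resolves exactly this by taking the \emph{first} bad step $j$ instead: the producing increasing step $h<j$ is then necessarily good, so every non-variable premise of $l_h$ is in $E$ or comes from a decreasing rule, and the composed rule inherits no new bad premise at position $j$. With the measure $\M{D}$ defined as the multiset containing one copy of $k$ for each bad premise at step $k$, the surgery then strictly removes one copy of $j$ without adding anything at $j$, and a minimal derivation with the same trace must be well-formed. Your proof can be repaired by making this switch (or by a nested induction on the ``increasing-production depth'' of the offending premise), but as written the key inequality fails. The secondary issue you flag yourself --- that $\alpha=mgu(r_1,s)$ could turn a variable premise of $l_1$ or $l_2$ into a non-variable one --- is real, but it is also left implicit in the paper and is independent of the first-versus-last choice.
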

 \begin{proof}
   We have $t\in\rhclos{E}{\II}$ implies that the set $\Omega(E,t)$ of
   \II-derivations starting from $E$ of goal $t$ is not empty.  Let
   $D\in\Omega(E,t)$, $D:E=E_0\ded E_1\ded \ldots \ded E_{n-1},t$, we denote
   $l_i\ded r_i$ the rule applied at step $i$ with the substitution $\sigma_i$
   and suppose that $D$ is not well-formed. Let us (pre-)order derivations in
   $\Omega(E,t)$ with a measure $M$ such that $\M{D'}$ for a derivation $D'$ is a
   multiset of integers constructed as follows: starting with $\M{D'}=\emptyset$,
   for all steps $k$, $1\leq k\le n$, for all terms $u\in l_k\sigma_k$ obtained
   by former increasing rule, add $k$ to $\M{D'}$.  Since this pre-order is
   well-founded, there exists a derivation $d\in\Omega(E,t)$ such that $\M{d}$
   is minimum and $\Cons{d}=\Cons{D}$. Let us prove that $d$ is well-formed.  By
   contradiction, assume that $d$ is not well-formed and let $j$ be the first
   step in $d$ such that $l_j\ded r_j$ is the rule applied with substitution
   $\sigma_j$ and there is a term $u\in l_j\setminus\Variables$ obtained by a
   former increasing rule, let $l_h\ded r_h$ be this rule.  Since $l_h\ded
   r_h\in\call{L}'_{inc}$ and $u\notin\Variables$, \emph{Closure} can be applied
   on $l_h\ded r_h$ and $l_j\ded r_j$ and the resulting rule can be applied at
   step $j$ instead of $l_j\ded r_j$ yielding also $E_{j}$. Let $d'$ be the
   derivation obtained after this replacement, $d'\in\Omega(E,t)$ and
   $\Cons{d'}=\Cons{d}$.  Since $h<j$ and by definition of M, we have
   $\M{d'}<\M{d}$ which contradicts the minimality of $\M{d}$. We deduce that
   $d$ is well-formed and then we have the lemma.
 
\end{proof}

\section{Reachability problems}\label{sec:decision}

\subsection{Presentation of the algorithm and pre-computation}

This section is devoted to the presentation of an algorithm for solving
Reachability Problems and to a proof scheme of its completeness, correctness and
termination. In this section, we denote by
$\call{I}_0=\intrus{\call{G}}{\call{L}_0}{\call{H}}$ the initial deduction
system and by $\call{I}'=\intrus{\call{G}}{\call{L}'}{\emptyset}$ the saturated
deduction system. From now, we suppose that $\call{L}'$ is finite and we recall
that $\call{L}'$ is partitioned into two disjoint sets of deduction rules
$\call{L}'_{inc}$ and $\call{L}'_{dec}$ (by definition of \emph{increasing} and
\emph{decreasing} rules).  The algorithm comprises two steps, and is depicted in
Fig.~\ref{fig:algo:solve}

\begin{figure}[htbp]
  \centering
  \parbox{0.9\linewidth}{\underline{Resolution($\call{C}^0$)}\\
    We let $\call{C}^0=((E_i^0\rhd v_i^0)_{i\in\set{1,\ldots,n}},\call{S}^0)$ be an $\call{I}_0$-constraint system.
  \begin{description}
  \item[Step~1.] Guess a finite variant substitution $\theta$ for all terms of
    $\call{C}^0$, apply $\theta$ on these terms  and normalise them then solve the obtained unification system.  Finally, apply the obtained solution $\alpha$ on the constraints.  In the sequel we will abuse notations and denote the obtained
    constraint system $\call{C}=(E_i\rhd t_i)_{i\in\set{1,\ldots,n}}$, where
    $t_i=\norm{v_i^0\theta}\alpha$ and $E_i=\norm{E_i^0\theta}\alpha$.
  \item[Step~2.] Apply non-deterministically the transformation rules of
    Fig.~\ref{fig:resolution}
  \item[Step 3.] If a solved form is reached, return \textsc{Sat}, else return
    \textsc{Fail}.
  \end{description}}
  \caption{Algorithm for solving constraint systems.}
  \label{fig:algo:solve}
\end{figure}

\paragraph{Remarks.} 
\begin{description}
\item[Solved form.]  A constraint system \call{C} as denoted at the end of the
  first step is in \emph{solved form} if for all constraints
  $E\rhd{}t\in\call{C}$ we have $t\in\Variables$.  Every constraint system in
  solved form has at least one solution~\cite{AmadioL00}.
\item[Computation of the finite variants substitutions.] Given
  $\call{C}^0=((E_i^0\rhd{}v_i^0)_{1\le i\le n},\call{S}^0)$, and let $T$ be a
  n-uplet containing terms appearing in $\call{C}^0$, $T=\langle
  u_1,\ldots,u_n\rangle$. Due to the finite variant property, $T$ has finite set
  of variant substitutions.  We choose a variant substitution $\theta$ among the
  possible ones.
\item[Justification of the first step.] Let $\sigma$ be a normal solution of the
  original constraint system. The first step will non-deterministically
  transform terms of \call{C}, $u_1,\ldots,u_n$, into terms $u_1^0,\ldots,u_n^0$
  such that, according to definition \ref{def:FVP} we will have
  $\set{\norm{u_i\sigma}=u_i^0\sigma'}_{1\leq i\leq n}$ for a normal
  substitution $\sigma'$. It is easily verified that the first step always
  terminates.
\end{description}

We prove below that there exists a solution to the original
$\call{I}_0$-constraint system $\call{C}^0$ \textit{iff} there exists a solution
to one of the possible constraint systems computed in the first step for the
\II{} deduction system.

\begin{lemma}{\label{lemma:00:lem2}(Completeness)}
  Let $\call{C}^0$ be an $\call{I}_0$-constraint system.  If $\call{C}^0$ is
  $\call{I}_0$-satisfiable, there exists a constraint system $\call{C}$ in the
  output of Step~1. such that $\call{C}$ is $\II$-satisfiable.
\end{lemma}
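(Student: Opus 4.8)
The plan is to unfold the definitions of the finite variant property and of constraint satisfaction, and to exhibit explicitly the variant substitution $\theta$ and the unifier $\alpha$ that Step~1 must guess, together with the residual substitution that witnesses $\II$-satisfiability. First I would fix an $\call{I}_0$-solution $\sigma$ of $\call{C}^0$; by the remark following Definition~\ref{def:constraints} we may assume $\sigma=\norm{\sigma}$ is in normal form. Collect into a single tuple $T=\langle u_1,\ldots,u_m\rangle$ all terms occurring in $\call{C}^0$ (the $E_i^0$ and the $v_i^0$). By the finite variant property applied to $T$ (seen as a term under a fresh top symbol, as in the ``Computation of the finite variants substitutions'' remark), there is a variant substitution $\theta$ among the finitely many computed ones and a normal substitution $\sigma'$ with $\sigma=\theta\sigma'$ and $\norm{u_j\sigma}=\norm{u_j\theta}\sigma'$ for every $j$. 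This $\theta$ is the one Step~1 guesses.

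Next I would check that $\sigma'$ satisfies the normalised unification system produced in Step~1, so that it makes sense to take a most general unifier $\alpha$ of it with $\sigma'=\alpha\sigma''$ for some normal $\sigma''$. For each equation $p\unif_\call{H}q$ of $\call{S}^0$ we have $p\sigma=_\call{H}q\sigma$, i.e. $\norm{p\sigma}=\norm{q\sigma}$; by the variant property this equals $\norm{p\theta}\sigma'=\norm{q\theta}\sigma'$, so $\sigma'$ is a (syntactic, since the theory used by $\II$ is empty and all terms are in normal form) solution of $\set{\norm{p\theta}\unif_\emptyset\norm{q\theta}}$. Hence this syntactic unification system is solvable; let $\alpha=mgu_\emptyset$ of it, and write $\sigma'=\alpha\sigma''$. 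Then the constraint system $\call{C}=(E_i\rhd t_i)_i$ with $t_i=\norm{v_i^0\theta}\alpha$, $E_i=\norm{E_i^0\theta}\alpha$ is exactly one of the outputs of Step~1, and its (emptied) unification part is satisfied by $\sigma''$ by construction.

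It then remains to verify the deducibility constraints: for each $i$, $t_i\sigma''\in\rhclos{E_i\sigma''}{\II}$. Here I would chase the identities: $t_i\sigma''=\norm{v_i^0\theta}\alpha\sigma''=\norm{v_i^0\theta}\sigma'=\norm{v_i^0\sigma}=v_i\norm{\sigma}=v_i\sigma$ (since $\sigma$ is normal), and similarly $E_i\sigma''=\norm{E_i^0\theta}\sigma'=\norm{E_i^0\sigma}$. From $\sigma\models_{\call{I}_0}\call{C}^0$ we have $v_i\sigma\in\rhclos{E_i^0\sigma}{\call{I}_0}$; normalising the whole $\call{I}_0$-derivation step by step (a bottom-up normalisation, using that $\ded_{\call{I}_0}$ respects $=_\call{H}$ and that normal forms are well-defined) gives an $\call{I}_0$-derivation between the ground normal-form sets $\norm{E_i^0\sigma}=E_i\sigma''$ and a set containing $\norm{v_i^0\sigma}=t_i\sigma''$; by Lemma~\ref{lemma:00:saturation1} this is an $\call{I}$-derivation, and since $\call{L}\subseteq\call{L}'$ it is an $\II$-derivation. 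Hence $t_i\sigma''\in\rhclos{E_i\sigma''}{\II}$, and so $\sigma''\models_{\II}\call{C}$.

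The main obstacle I anticipate is the bookkeeping in the second step: verifying cleanly that the guessed $\theta$ simultaneously works for \emph{all} terms of $\call{C}^0$ (which is why one bundles them into one tuple $T$ before invoking the variant property) and that, after normalising and instantiating by $\alpha$, the side conditions in Definition~\ref{def:constraints} on $\call{C}$ — namely $E_i\subseteq E_{i+1}$ and $\Var{E_i}\subseteq\set{v_1,\ldots,v_{i-1}}$ — are still met, i.e. that Step~1 genuinely produces an $\II$-constraint system and not merely a set of constraints. The inclusions $E_i^0\subseteq E_{i+1}^0$ are preserved by applying a substitution and normalising; the variable condition requires a small argument that the renaming introduced by $mgu_\emptyset$ and by $\theta$ does not create fresh variables violating the ordering, which one handles by choosing $\theta$ and $\alpha$ with image variables disjoint from the $v_i$ and by noting that solved variables only propagate forward. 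Everything else is the routine identity-chasing sketched above.
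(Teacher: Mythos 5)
Your proposal is correct and follows essentially the same route as the paper's proof: fix a normal-form solution, invoke the finite variant property to obtain $\theta$ and a residual normal substitution, observe that the residual syntactically solves the normalised unification system so that a most general unifier can be factored out, and transfer the deducibility constraints via Lemma~\ref{lemma:00:saturation1} (your use of $\call{L}\subseteq\call{L}'$ is just the easy direction of Lemma~\ref{lemma:00:lem1}, which the paper cites instead). The extra care you take about normalising the derivation and about the well-formedness conditions of Definition~\ref{def:constraints} is sound but does not change the argument.
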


\begin{proof} 
  We have
  $\call{C}^0=((E_i^0\rhd{}v_i^0)_{i\in\set{1,\ldots,n}},\call{S}^0)$. Let
  $\sigma$ be a substitution in normal form such that
  $\sigma\models_{\call{I}_0}\call{C}^0$.  This implies that
  $\norm{v_i^0\sigma}\in\rhclos{\norm{E_i^0\sigma}}{\call{I}_0}$ for
  $i\in\set{1,\ldots,n}$ and thus, by lemmas \ref{lemma:00:saturation1} and
  \ref{lemma:00:lem1}, $\norm{v_i^0\sigma}\in\rhclos{\norm{E_i^0\sigma}}{\II}$
  for $i\in\set{1,\ldots,n}$. We have also $\norm{s^0\sigma}=\norm{s'^0\sigma}$
  for all equations $s^0\unif{}s'^0\in\call{S}^0$.  By definition \ref{def:FVP},
  there exists a variant substitution $\theta$ of the terms in $\call{C}^0$ and
  a substitution $\sigma'$ in normal form such that for each term
  $u\in\call{C}$, we have $\norm{u\sigma}=\norm{u\theta}\sigma'$.  This implies
  that $\norm{v_i^0\theta}\sigma'\in\rhclos{\norm{E_i^0\theta}\sigma'}{\II}$ for
  $i\in\set{1,\ldots,n}$ and $\norm{s^0\theta}\sigma'=\norm{s'^0\theta}\sigma'$
  for all equations $s^0\unif{}s'^0\in\call{S}^0$.  The unification system
  $\norm{\call{S}^0\theta}$ has solution $(\sigma')$, let $\mu$ be its most general unifier,
  we have $\sigma'=\mu\alpha$ for some substitution $\alpha$ and
  $\alpha\models_{\II} \norm{E_i^0\theta}\mu\rhd{}\norm{v_i^0\theta}\mu$ for
  $i\in\set{1,\ldots,n}$.  The constraint system
  $\call{C}=((\norm{E_i^0\theta}\mu\rhd{}\norm{v_i^0\theta}\mu)_{i\in\set{1,\ldots,n}})$ is a possible
  output of Step~1 and it is $\II$-satisfiable.
\end{proof}

\begin{lemma}{\label{lemma:00:lem3}(Correctness)}
  Let $\call{C}^0$ (resp.  \call{C}) be a $\call{I}_0$- (resp. \II-) constraint
  system. Assume that $\call{C}$ is obtained from $\call{C}^0$ by applying
  Step~1.  If \call{C} is satisfiable then so is $\call{C}^0$.
\end{lemma}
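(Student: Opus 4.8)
The plan is to trace the construction of Step~1 backwards: given a $\II$-solution $\alpha$ of $\call{C}$, we build an $\call{I}_0$-solution of $\call{C}^0$ by composing $\alpha$ with the substitutions introduced in Step~1 and then taking a normal form. Recall that Step~1 produced $\call{C}=((\norm{E_i^0\theta}\mu\rhd{}\norm{v_i^0\theta}\mu)_{i\in\set{1,\ldots,n}})$ from a chosen variant substitution $\theta$ of the terms in $\call{C}^0$ and a most general unifier $\mu$ of the unification system $\norm{\call{S}^0\theta}$. So first I would set $\sigma=\norm{\theta\mu\alpha}$ (a normal substitution by construction) and claim this satisfies $\call{C}^0$.

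There are two things to check. First, the unification part: since $\mu$ is a unifier of $\norm{\call{S}^0\theta}$, for every equation $s^0\unif{}s'^0\in\call{S}^0$ we have $\norm{s^0\theta}\mu=\norm{s'^0\theta}\mu$, hence after applying $\alpha$ and normalising, $\norm{s^0\sigma}=\norm{s'^0\sigma}$, i.e.\ $\sigma\models_\call{H}\call{S}^0$; here one uses that normalisation may be done bottom-up so that $\norm{u\theta\mu\alpha}=\norm{\norm{u\theta}\mu\alpha}$ for any term $u$, which is exactly the sort of identity already used freely in Lemma~\ref{lemma:00:lem2}. Second, the deduction part: for each $i$, from $\alpha\models_{\II}\norm{E_i^0\theta}\mu\rhd{}\norm{v_i^0\theta}\mu$ we get $\norm{v_i^0\theta}\mu\alpha\in\rhclos{\norm{E_i^0\theta}\mu\alpha\hspace{-0.05em}\downarrow}{\II}$, and since both sides are normalised this reads $\norm{v_i^0\sigma}\in\rhclos{\norm{E_i^0\sigma}}{\II}$. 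Now apply Lemma~\ref{lemma:00:lem1} and then Lemma~\ref{lemma:00:saturation1} (both require ground terms in normal form, which holds because $\sigma$ is a ground normal solution, so $\norm{E_i^0\sigma}$ and $\norm{v_i^0\sigma}$ are ground and in normal form) to obtain $\norm{v_i^0\sigma}\in\rhclos{\norm{E_i^0\sigma}}{\call{I}_0}$. Finally, since $\call{I}_0$ works modulo $\call{H}$, membership in $\rhclos{\norm{E_i^0\sigma}}{\call{I}_0}$ is the same as membership in $\rhclos{E_i^0\sigma}{\call{I}_0}$ modulo $\call{H}$, which is precisely $v_i^0\sigma\in\rhclos{E_i^0\sigma}{\call{I}_0}$ up to $=_\call{H}$. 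Hence $\sigma\models_{\call{I}_0}\call{C}^0$.

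The main obstacle I expect is purely bookkeeping rather than conceptual: one must be careful that the variant substitution $\theta$ was applied to \emph{all} terms of $\call{C}^0$ simultaneously (as an $n$-tuple), so that the normalisation identities $\norm{u\theta\sigma''}=\norm{\norm{u\theta}\sigma''}$ hold uniformly across the $E_i^0$, the $v_i^0$ and the equations of $\call{S}^0$; and one should note that we do not need the finite variant property's \emph{completeness} here (that was Lemma~\ref{lemma:00:lem2}), only soundness of each individual variant, namely that $\norm{u\theta}\sigma'$ is always equal to $\norm{u\theta\sigma'}$ — which is immediate from $\norm{t\theta_i}\sigma'=\norm{t\sigma}$ in Definition~\ref{def:FVP} specialised to a normal $\sigma'$. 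With these identities in hand the argument is a direct chaining of Lemmas~\ref{lemma:00:saturation1} and~\ref{lemma:00:lem1}, read in the easy ``$\II$-derivation yields $\call{I}_0$-derivation'' direction, together with the fact that every constraint system in solved form is satisfiable is \emph{not} needed here — correctness concerns arbitrary $\call{C}$ output by Step~1, not solved forms.
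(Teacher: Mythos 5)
Your proposal is correct and follows essentially the same route as the paper's proof: take a $\II$-solution of $\call{C}$, compose it with the variant substitution $\theta$ and the mgu $\mu$ from Step~1, normalise, and transfer the derivations back to $\call{I}_0$ via Lemmas~\ref{lemma:00:lem1} and~\ref{lemma:00:saturation1}. The only difference is that you spell out the unification-system check and the normalisation identities that the paper leaves implicit, which is harmless.
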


 \begin{proof}
   Let $\call{C}^0$ (resp. \call{C}) be a $\call{I}_0$- (resp. \II-) constraint
   system and assume that $\call{C}$ is obtained from $\call{C}^0$ by applying
   Step~1.  This implies that
   $\call{C}^0=((E_i^0\rhd{}v_i^0)_{i\in\set{1,\ldots,n}},\call{S}^0)$ and
   $\call{C}=((\norm{E^0_i\theta}\mu\rhd{}\norm{v_i^0\theta}\mu)_{i\in\set{1,\ldots,n}})$
   while $\theta$ is a variant substitution of the terms of $\call{C}^0$ and
   $\mu$ is the most general unifier of the unification system
   $\norm{\call{S}^0\theta}$ obtained from $\call{S}^0$ by applying the variant
   substitution $\theta$ on the terms of $\call{S}^0$ and then normalising these
   terms.  Since $\call{C}$ is $\II$-satisfiable there exists a normal
   substitution $\sigma$ such that
   $\norm{v_i^0\theta}\mu\sigma\in\rhclos{\norm{E_i^0\theta}\mu\sigma}{\II}$ and
   thus
   $\norm{v_i^0\theta\mu\sigma}\in\rhclos{\norm{E_i^0\theta\mu\sigma}}{\call{I}_0}$ (lemmas \ref{lemma:00:saturation1} and \ref{lemma:00:lem1}). We
   conclude that $\norm{\theta\mu\sigma}\models_{\call{I}_0}\call{C}^0$.
\end{proof}

\subsection{Transformation in solved form}

Let $\II=\intrus{\call{G}}{\call{L}'}{\emptyset}$ be the deduction system
resulting from the application of the saturation algorithm.  In the rest of this
paper, we denote by $l_x,l_1,\ldots,l_n\ded r$ a $\call{L}'$-rule such that
$l_x$ is a finite set of variables and $\set{l_1,\ldots,l_n}$ is a finite set of
non-variable terms. Unless otherwise specified, \II{} is the deduction system
implicit in all notations.

In the rest of this section, we prove a \emph{progress} property: If a
satisfiable constraint system is not in solved form, then a rule of
Fig.~\ref{fig:resolution} can be applied on it to yield another satisfiable
constraint system. We will give conditions in the next section ensuring the
termination of the application of these rules.

\begin{figure}[htbp]
  $$
  \begin{array}{lc}
    
    \mathsf{Unif}:&
    \hspace*{-2.5em}
    \vcenter{
      \infer
      [
      \begin{array}{c}
        u\in E\setminus\Variables, ~  t\notin\Variables,\\
        \sigma= mgu{}(u,t)
      \end{array}
      ]%
      {
        (\call{C}_\alpha,\call{C}_\beta)\sigma
      }
      {
        \call{C}_\alpha,E\rhd{}t,\call{C}_\beta
      }
    }\\[1em]
    
    \mathsf{Reduce~1}:&\\
    \multicolumn{2}{c}{%
      \infer
      [\begin{array}{c}
        l_x,l_1,\ldots,l_n\ded r\in\call{L}'_{inc}\text{ and }
        t\notin\Variables\\
        e_1,\ldots,e_n\in E\setminus\Variables \text{ and }
        \sigma=mgu(\set{e_i\unif{}l_i}_{1\leq i\leq n},r\unif{}t)\\
      \end{array}]
      {
        (\call{C}_\alpha, (E\rhd{} y)_{y\in l_x},  \call{C}_\beta)\sigma
      }
      {
        \call{C}_\alpha,
        E\rhd{}t,
        \call{C}_\beta
      }
    }\\[1em]
    
    \mathsf{Reduce~2}:&\\
    \multicolumn{2}{c}{%
      \infer
      [\begin{array}{c}
        l_x,l_1,\ldots,l_n\ded r\in\call{L}'_{dec}\text{ and }
        t\notin\Variables\\
        e_1,\ldots,e_n\in E\setminus\Variables\text{ and }
        \sigma=mgu(\set{e_i\unif{}l_i}_{1\leq i\leq n})\\
        \call{C}'_\beta ~ is ~ obtained ~ from ~ \call{C}_\beta~by \\
        adding~r~to~left~hand~side~ of~constraints\\
      \end{array}]
      {
        (\call{C}_\alpha, (E\rhd{} y)_{y\in l_x},  E\cup r\rhd{} t ,\call{C}'_\beta)\sigma
      }
      {
        \call{C}_\alpha,
        E\rhd{} t,
        \call{C}_\beta
      }
    } 
  \end{array}
  $$
  \vspace*{-1em}
  \caption{\label{fig:resolution} System of transformation rules.}
  \vspace*{-2em}
\end{figure}
 
The progress proof relies on two normalisation lemmas for constraint systems.

\begin{lemma}{\label{lemma:00:eliminationvariables}}
  Let $\call{C}=(\call{C}_\alpha,E\rhd{}t,\call{C}_\beta)$ be a constraint
  system such that $\call{C}_\alpha$ is in solved form.  Then, for all
  substitutions $\sigma$ we have: $\sigma\models\call{C}$ if and only if
  $\sigma\models (\call{C}_\alpha,(E\setminus
  \Variables)\rhd{}t,\call{C}_\beta)$ .
\end{lemma}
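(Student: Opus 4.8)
The statement asserts that for a constraint system $\call{C}=(\call{C}_\alpha,E\rhd{}t,\call{C}_\beta)$ with $\call{C}_\alpha$ in solved form, we may freely delete the variables from the left-hand side $E$ of the distinguished constraint without changing the set of solutions. The key observation is that a constraint $\call{C}_\alpha$ in solved form has all its left-hand sides consisting of variables $v_i$, and — by the definition of constraint systems — these satisfy $\Var{E_j}\subseteq\set{v_1,\dots,v_{j-1}}$, so every variable appearing anywhere in $\call{C}$ in a left-hand side position has itself been introduced earlier as the right-hand side $v_i$ of some constraint $E_i\rhd v_i$ of $\call{C}_\alpha$. The plan is to show that for any solution $\sigma$, a variable $v\in E\cap\Variables$ contributes nothing to the deducibility of $t\sigma$ from $E\sigma$ that is not already available.

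First I would fix the direction that needs work. The implication $\sigma\models\call{C}\Rightarrow\sigma\models(\call{C}_\alpha,(E\setminus\Variables)\rhd{}t,\call{C}_\beta)$ is the substantive one; the converse is immediate since enlarging the left-hand side of a constraint can only enlarge the deducible set ($\rhclos{\cdot}{\II}$ is monotone in its argument, directly from Definition~\ref{def:constraints} and the definition of $\ded_{\call I}$). For the forward direction, suppose $\sigma\models\call{C}$. All constraints other than $E\rhd t$ are untouched, so it suffices to show $t\sigma\in\rhclos{(E\setminus\Variables)\sigma}{\II}$. Pick a variable $v\in E\cap\Variables$. Since $\Var{E}\subseteq\set{v_1,\dots,v_{k-1}}$ where $E\rhd t$ is the $k$-th constraint, $v=v_j$ for some $j<k$ with $E_j\rhd v_j$ a constraint of $\call{C}_\alpha$ (which is in solved form, hence its constraints all have a variable on the right). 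Moreover $E_j\subseteq E$ by the inclusion condition $E_i\subseteq E_{i+1}$, and $E_j$ itself contains only variables among $v_1,\dots,v_{j-1}$. From $\sigma\models\call{C}$ we have $v_j\sigma\in\rhclos{E_j\sigma}{\II}$.

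The heart of the argument is then a downward induction on the indices of the variables in $E$ (say $v\in E\cap\Variables$ with the largest index first, or equivalently an induction on $|E\cap\Variables|$): each such $v_j\sigma$ is deducible from $E_j\sigma\subseteq E\sigma$, and $E_j$ contains strictly fewer variables-with-large-index, so by replacing the occurrence of $v_j$ in the left-hand side by (a derivation of) $v_j\sigma$ we reduce to a constraint with one fewer variable in its left-hand side, preserving the solution. Concretely, since $v_j\sigma\in\rhclos{E_j\sigma}{\II}$ and $E_j\sigma\subseteq (E\setminus\set{v_j})\sigma$ after we have inductively handled the variables of larger index inside $E_j$ — note $\Var{E_j}$ only involves $v_1,\dots,v_{j-1}$, all of smaller index — we get $v_j\sigma\in\rhclos{(E\setminus\set{v_j})\sigma}{\II}$, and hence $\rhclos{E\sigma}{\II}=\rhclos{(E\setminus\set{v_j})\sigma}{\II}$ because a term already deducible adds nothing to the closure (transitivity of $\ded_{\call I}^*$). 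Iterating over all variables of $E$ in decreasing order of index yields $t\sigma\in\rhclos{(E\setminus\Variables)\sigma}{\II}$, as required.

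The main obstacle, and the only place one must be careful, is the ordering used in the induction: one has to make sure that when removing $v_j$ from $E$ one has already arranged for $E_j\sigma$ to be deducible from the \emph{reduced} left-hand side, and this is exactly guaranteed by the structural condition $\Var{E_j}\subseteq\set{v_1,\dots,v_{j-1}}$, which forces all variables still present in $E_j$ to have index smaller than $j$ and hence to be processed later in the decreasing-index induction. Everything else — monotonicity and idempotence-under-adding-a-deducible-term of $\rhclos{\cdot}{\II}$, and the fact that the constraints of a solved-form $\call{C}_\alpha$ have variable right-hand sides — is routine from the definitions in Section~\ref{reachprob}.
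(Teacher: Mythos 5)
Your proof is correct and follows essentially the same route as the paper's: each variable $v_j\in E\cap\Variables$ is the right-hand side of a solved constraint $E_j\rhd v_j$ of $\call{C}_\alpha$ with $E_j\subseteq E\setminus\set{v_j}$, so $v_j\sigma$ is deducible from the reduced left-hand side and removing it does not change the closure. Your explicit decreasing-index induction merely makes rigorous the iteration over all variables that the paper's proof leaves implicit (it only argues the removal of a single variable), so this is a refinement rather than a different approach.
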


\begin{proof} It suffices to prove that if $x\in E\cap\Variables$ and $\sigma$
  is a substitution such that $\sigma\models\call{C}$, then we have
  $\sigma\models(\call{C}_\alpha,(E\setminus \set{x})\rhd{}t,\call{C}_\beta)$.
  Given $x\in E$, by definition \ref{def:constraints}, there exists a set of
  terms $E_x\subseteq E$ such that $E_x\rhd{}x\in\call{C}_\alpha$.  Since
  $\sigma\models\call{C}$ we have $\sigma\models E_x\rhd{}x$, and by the fact
  that $E_x\subseteq E\setminus\set{x}$ we have $\sigma\models
  E\setminus\set{x}\rhd{}x$.  Since we also have $\sigma\models (E\rhd{}t) $
  then, $\sigma\models E\setminus\set{x}\rhd{}t$.  The reciprocal is obvious
  since $E\setminus\Variables\subseteq E$.
\end{proof}

\begin{lemma}{\label{lem:elimination}}
  Let $\call{C}=(\call{C}_\alpha,E\rhd{}x,\call{C}_\beta)$ be a constraint
  system such that $\call{C}_\alpha$ is in solved form and
  $x\notin\Var{\call{C}_\alpha,E,\call{C}_\beta}$ and let
  $\call{C}'=(\call{C}_\alpha,\call{C}_\beta)$. We have:
\begin{enumerate}
\item
If  $\sigma\models\call{C}$  then  $\sigma\models \call{C}'$.
\item
If $\sigma'\models\call{C}'$ then we can extend $\sigma'$ to $\sigma$ 
 such that  $\sigma\models\call{C}$.
\end{enumerate}
\end{lemma}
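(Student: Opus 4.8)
The plan is to prove the two implications separately; both reduce to bookkeeping on top of Definition~\ref{def:constraints}, and neither needs the saturation machinery. First, though, I would record that $\call{C}'=(\call{C}_\alpha,\call{C}_\beta)$ is a well-formed $\II$-constraint system: the constraint $E\rhd x$ sits between $\call{C}_\alpha$ and $\call{C}_\beta$, so in $\call{C}$ the last environment of $\call{C}_\alpha$ is contained in $E$ and $E$ is contained in the first environment of $\call{C}_\beta$; deleting $E\rhd x$ therefore keeps the environments of $\call{C}'$ nested. Moreover, since $x\notin\Var{\call{C}_\alpha,E,\call{C}_\beta}$, no environment of $\call{C}_\beta$ ever mentioned $x$, so after the re-indexing the condition $\Var{E_i}\subseteq\set{v_1,\ldots,v_{i-1}}$ still holds (removing $x$ from the list of previously produced output variables changes nothing).

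For item~1, if $\sigma\models\call{C}$ then $\sigma$ satisfies every constraint of $\call{C}$; the constraints of $\call{C}'$ are exactly those of $\call{C}$ with $E\rhd x$ deleted, hence $\sigma$ satisfies all of them and $\sigma\models\call{C}'$. For item~2, let $\sigma'\models\call{C}'$; since $x\notin\Var{\call{C}'}$ we may assume $x\notin\Supp{\sigma'}$ (restricting $\sigma'$ does not affect satisfaction), and we set $\sigma=\sigma'\cup\set{x\mapsto w}$ where $w$ is any element of $\rhclos{E\sigma'}{\II}$ — for instance any element of $E\sigma'$, which lies in $\rhclos{E\sigma'}{\II}$ via the length-zero derivation, and which exists since intruder environments are non-empty (they contain at least a free constant such as $\minic$). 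I would then check $\sigma\models\call{C}$ constraint by constraint: for a constraint $E_i\rhd v_i$ of $\call{C}_\alpha$ or $\call{C}_\beta$ we have $x\notin\Var{E_i}$ and $v_i\ne x$, so $E_i\sigma=E_i\sigma'$ and $v_i\sigma=v_i\sigma'$, and the relation $v_i\sigma'\in\rhclos{E_i\sigma'}{\II}$ given by $\sigma'\models\call{C}'$ transfers verbatim; and for $E\rhd x$ we have $E\sigma=E\sigma'$ (as $x\notin\Var{E}$) and $x\sigma=w\in\rhclos{E\sigma'}{\II}=\rhclos{E\sigma}{\II}$. Hence $\sigma\models\call{C}$, and $\sigma$ extends $\sigma'$.

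There is no genuinely hard step here; the only points needing care are the well-formedness verification of $\call{C}'$ above, in particular the re-indexing of the variable condition, and ensuring a witness $w$ for $x$ exists, which reduces to $E\sigma'$ being non-empty and is covered by $\rhclos{E\sigma'}{\II}\supseteq E\sigma'$ together with the standing assumption that intruder knowledge is non-empty.
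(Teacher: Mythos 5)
Your proof is correct and follows essentially the same route as the paper's: item~1 is immediate from deleting a constraint that mentions $x$ nowhere else, and item~2 extends $\sigma'$ by mapping $x$ to an element of $E\sigma'$, exactly as the paper does (the paper writes ``$\sigma(x)$ is a closed term in $E$'', meaning an element of $E\sigma$). Your extra checks (well-formedness of $\call{C}'$ after re-indexing, and the explicit remark that a witness requires $E\sigma'$ to be non-empty) only make explicit what the paper leaves implicit.
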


\begin{proof}
\begin{enumerate}
\item Let $\call{C}=(\call{C}_\alpha,E\rhd{}x,\call{C}_\beta)$ and let $\sigma$
  be a closed substitution such that $\sigma\models\call{C}$.  Since
  $x\notin\Var{\call{C}_\alpha,E,\call{C}_\beta}$, we deduce that
  $\call{C}'=(\call{C}_\alpha,\call{C}_\beta)$ is deterministic and
  $\sigma\models\call{C}'$.
\item Let $\sigma'$ be a closed substitution such that
  $\sigma'\models\call{C}'$. Since $\Var{E}\subseteq\Var{\call{C}_\alpha}$,
  $\sigma'$ is defined on $\Var{\call{C}_\alpha,E,\call{C}_\beta}$ and since
  $x\notin\Var{\call{C}_\alpha,\call{C}_\beta}$, $\sigma'(x)$ is not defined. We
  extend $\sigma'$ to $\sigma$ as follows:

  $\sigma(y)=\sigma'(y)$ for $y\in\Supp{\sigma'}$, $\sigma(x)$ is a closed term
  in E.

  Since $x\notin\Var{\call{C}_\alpha,\call{C}_\beta,E}$ and $x\sigma\in
  E\sigma$, we deduce that $\sigma\models\call{C}$.
\end{enumerate}
\end{proof}

\paragraph{Simplification step.} 
Let $\call{C}=(\call{C}_\alpha,E\rhd{}t,\call{C}_\beta)$ be a constraint system
such that $\call{C}_\alpha$ in solved form and $t\notin\Variables$.  If we apply
\emph{Reduce 1} (resp. \emph{Reduce 2}) on \call{C} using a rule
$l_x,l_1,\ldots,l_n\ded r$ such that there is a variable $x\in
l_x\setminus\Var{l_1,\ldots,l_n,r}$ then the constraint $E\rhd{}x$ will be in
the obtained constraint system $\call{C}'$ and $x$ does not appear twice in
$\call{C}'$. By lemma \ref{lem:elimination}, this constraint can be deleted from
$\call{C}'$.  As a consequence, we apply a simplification step on the saturated
deduction system $\call{L}'$ that eliminates variables $x\in
l_x\setminus\Var{l_1,\ldots,l_n,r}$ for all rules $l_x,l_1,\ldots,l_n\ded
r\in\call{L}'$.

Each of the rules in Fig.~\ref{fig:resolution} is correct and complete
w.r.t. the satisfiability of constraint systems.

\begin{lemma}{\label{lemma:00:completeness}}
  A satisfiable constraint system not in solved form can be reduced into another
  satisfiable constraint system by applying a rule of figure \ref{fig:resolution}.
\end{lemma}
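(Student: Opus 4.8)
The plan is to pick out the first constraint of $\call{C}$ that is not in solved form and to apply to it one of the three rules of Fig.~\ref{fig:resolution}, the choice being dictated by the shape of a derivation witnessing that this constraint is satisfied. So let $\call{C}$ be satisfiable and not in solved form, fix a solution $\sigma$ which, by the remark following Definition~\ref{def:constraints}, we may take in normal form, and write $\call{C}=(\call{C}_\alpha,E\rhd t,\call{C}_\beta)$ where $E\rhd t$ is the first constraint with $t\notin\Variables$, so that $\call{C}_\alpha$ is in solved form. By Lemma~\ref{lemma:00:eliminationvariables}, $\sigma\models(\call{C}_\alpha,(E\setminus\Variables)\rhd t,\call{C}_\beta)$; write $\mathcal{E}=(E\setminus\Variables)\sigma$, so that $t\sigma\in\rhclos{\mathcal{E}}{\II}$. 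If already $t\sigma\in\mathcal{E}$, then $u\sigma=t\sigma$ for some non-variable $u\in E\setminus\Variables$; as $t$ is non-variable too, $\sigma_0=mgu(u,t)$ exists and $\mathsf{Unif}$ applies, yielding $(\call{C}_\alpha,\call{C}_\beta)\sigma_0$, which is satisfiable since $\sigma$ factors through $\sigma_0$ and already satisfies $\call{C}_\alpha$ and $\call{C}_\beta$.

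Otherwise $t\sigma\notin\mathcal{E}$, and by Lemma~\ref{lemma:00:saturation2} there is a well-formed $\II$-derivation $D$ from $\mathcal{E}$ of goal $t\sigma$, of length at least one; fix such a $D$, with the rules it uses renamed apart from $\call{C}$ and from one another. Assume first that $D$ uses some decreasing rule, and let $L\colon l_x,l_1,\ldots,l_n\ded r\in\call{L}'_{dec}$, with $l_x$ a set of variables and $l_1,\ldots,l_n$ non-variable terms, be the \emph{first} one, applied in $D$ with substitution $\tau$. Since no decreasing rule precedes $L$ in $D$, well-formedness forces every $l_i\tau$ to belong to $\mathcal{E}$, so we may pick $e_i\in E\setminus\Variables$ with $e_i\sigma=l_i\tau$; then $\set{e_i\unif l_i}_{1\le i\le n}$ is unifiable (a common instance being given by $\sigma$ and $\tau$ together), and $\mathsf{Reduce~2}$ applies with the rule $L$, producing $(\call{C}_\alpha,(E\rhd y)_{y\in l_x},E\cup r\rhd t,\call{C}'_\beta)\sigma_0$ where $\sigma_0=mgu(\set{e_i\unif l_i}_{1\le i\le n})$. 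Let $\hat\sigma$ be the substitution acting as $\sigma$ on $\Var{\call{C}}$ and as $\tau$ on the variables of $L$; then $\hat\sigma=\sigma_0\rho$ for some $\rho$, and this $\rho$ satisfies the resulting system: each $y\tau$ with $y\in l_x$ lies in the set reached just before $L$ in $D$, hence in $\rhclos{\mathcal{E}}{\II}\subseteq\rhclos{E\sigma}{\II}$; the constraint $E\cup r\rhd t$ holds because $t\sigma\in\rhclos{E\sigma}{\II}\subseteq\rhclos{E\sigma\cup\set{r\tau}}{\II}$ and $r\hat\sigma=r\tau$; adding $r$ to the left-hand sides of the constraints of $\call{C}_\beta$ only enlarges the known terms; and $\call{C}_\alpha,\call{C}_\beta$ were already satisfied by $\sigma$.

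Assume instead that $D$ uses no decreasing rule, so that it consists entirely of rules of $\call{L}'_{inc}$, and let $L\colon l_x,l_1,\ldots,l_n\ded r$ be the rule of its \emph{last} step, applied with substitution $\tau$, so that $r\tau=t\sigma$. Again, the absence of decreasing rules in $D$ makes well-formedness imply $l_i\tau\in\mathcal{E}$ for every $i$; choosing $e_i\in E\setminus\Variables$ with $e_i\sigma=l_i\tau$, the system $\set{e_i\unif l_i}_{1\le i\le n}\cup\set{r\unif t}$ is unifiable, so $\mathsf{Reduce~1}$ applies with $L$ and yields $(\call{C}_\alpha,(E\rhd y)_{y\in l_x},\call{C}_\beta)\sigma_0$ with $\sigma_0$ the corresponding most general unifier; this is satisfiable for the same reasons as above, the discarded constraint $E\rhd t$ being replaced only by constraints $E\rhd y$ which hold because $y\tau\in\rhclos{\mathcal{E}}{\II}\subseteq\rhclos{E\sigma}{\II}$.

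The main difficulty will be this last piece of bookkeeping. The transformation rules may reuse only \emph{non-variable terms of $E$} as the premises $e_i$, so the argument needs the selected rule's non-variable premises $l_i\tau$ to genuinely occur in $(E\setminus\Variables)\sigma$, rather than having been produced inside $D$ by a decreasing rule --- a configuration that no rule of Fig.~\ref{fig:resolution} could repair, since the closure of Fig.~\ref{fig:closure} only resolves outputs of \emph{increasing} rules into premises of other rules. Singling out the \emph{first} decreasing rule in one case, and the \emph{last}, necessarily increasing, rule in the other, is exactly what lets well-formedness (Lemma~\ref{lemma:00:saturation2}) guarantee that those premises were not so produced.
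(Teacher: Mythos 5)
Your proof is correct and follows essentially the same route as the paper's: split on whether $t\sigma$ is already in $(E\setminus\Variables)\sigma$ (apply $\mathsf{Unif}$), otherwise take a well-formed derivation from Lemma~\ref{lemma:00:saturation2} and use either its last (increasing) rule for $\mathsf{Reduce~1}$ or its first decreasing rule for $\mathsf{Reduce~2}$, with well-formedness guaranteeing the non-variable premises come from $(E\setminus\Variables)\sigma$. Your handling of the combined substitution $\hat\sigma$ and the renaming-apart of rule variables is, if anything, slightly more careful than the paper's.
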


\begin{proof}
  Let $\call{C}=(E_j\rhd{}t_j)_{1\leq j\leq n}$ be a satisfiable constraint
  system not in solved form and let $i$ be the smallest integer such that
  $t_i\notin\Variables$. Let
  $\call{C}=(\call{C}_\alpha,E_i\rhd{}t_i,\call{C}_\beta)$ where
  $\call{C}_\alpha$ is in solved form. Since $\call{C}$ is satisfiable there
  exists a substitution $\sigma$ such that
  $\sigma\models_{\call{I}'}\call{C}$. Let us prove that $\call{C}$ can be
  reduced into another satisfiable constraint system $\call{C}'$ by applying
  transformation rules given in figure \ref{fig:resolution}. By lemma
  \ref{lemma:00:eliminationvariables}, $\sigma\models_{\mathcal{I}'}\call{C}$
  implies $\sigma\models_{\mathcal{I}'}(\call{C}_\alpha,E_i \setminus \Variables
  \rhd{} t_i, \call{C}_\beta)$ and that, by lemma \ref{lemma:00:saturation2},
  there is a well-formed derivation $D$ starting from
  $(E_i\setminus\Variables)\sigma$ of goal $t_i\sigma$.  We have two cases:
  \begin{itemize}
  \item If $t_i\sigma\in (E_i\setminus\Variables)\sigma$ then there exists a
    term $u\in E_i\setminus\Variables$ such that $u\sigma=t_i\sigma$. Let
    $\mu=mgu(t_i,u)$, we have $\sigma=\mu\theta$ for some substitution $\theta$.
    $\call{C}$ can then be reduced to $\call{C}'$ by applying \emph{Unif} rule,
    $\call{C}'=(\call{C}_\alpha\mu, \call{C}_\beta\mu)$ and
    $\theta\models_{\mathcal{I}'}\call{C}'$.
  
\item If $t_i\sigma \notin (E_i\setminus\Variables)\sigma$, let $D: ~
  (E_i\setminus\Variables)\sigma\ded \ldots\ded F\sigma,t_i\sigma$ and for every
  step in $D$ where $l\ded r$ is the rule applied with the substitution
  $\gamma$, for every $s\in l\setminus\Variables$, we have either $s\gamma\in
  (E_i\setminus\Variables)\sigma$ or $s\gamma$ was constructed by a former
  decreasing rule.
  \begin{itemize}
  \item Suppose that all applied rules in $D$ are increasing and let $l\ded r$
    be the last applied rule with the substitution $\gamma$, this implies that
    $r\gamma=t_i\sigma$ and for every $s\in l\setminus\Variables$, $s\gamma\in
    (E_i\setminus\Variables)\sigma$ and then for every $s\in
    l\setminus\Variables$ there exists a term $u\in E_i\setminus\Variables$ such
    that $s\gamma=u\sigma$.  Let $\mu$ be the most general unifier of
    $\set{r\unif{}t_i,(s\unif{}u)_{\forall s\in l\setminus\Variables,u\in
        E_i\setminus\Variables ~and~ s\gamma=u\sigma}}$, we have
    $\sigma=\mu\theta$ and $\gamma=\mu\theta$ for some $\theta$.  This implies
    that $\call{C}$ can be reduced to $\call{C}'=(\call{C}_\alpha,
    (E_i\rhd{}x)_{x\in l}, \call{C}_\beta)\mu$ by applying \emph{Reduce 1} and
    $\theta\models_{\mathcal{I}'}\call{C}'$.
  \item Suppose that $D$ contains decreasing rules and let $j$ be the first step
    where the applied rule is decreasing. Let $l\ded r$ be this rule applied
    with substitution $\gamma$.  $D: ~
    (E_i\setminus\Variables)\sigma=F_0\sigma\ded
    F_0\sigma,t_1\sigma\ded\ldots\ded F_{j-1}\sigma\ded
    F_{j-1}\sigma,t_{j}\sigma\ded\ldots\ded F_{n-1}\sigma,t_i\sigma$. Since $D$
    is well-formed, we deduce that for every $s\in l\setminus\Variables$,
    $s\gamma\in (E_i\setminus\Variables)\sigma$ and then, for every $s\in
    l\setminus\Variables$ there exists a term $u\in E_i\setminus\Variables$ such
    that $s\gamma=u\sigma$. Let $\mu$ be the most general unifier, we have
    $\gamma=\mu\theta$ and $\gamma=\mu\theta$ for some substitution
    $\theta$. This implies that $\call{C}$ can be reduced to
    $\call{C}'=(\call{C}_\alpha , (E_i\rhd{} x)_{x\in l}, (E_i\cup r)\rhd{} t_i,
    \call{C}'_\beta)\mu$ by applying \emph{Reduce 2} and
    $\theta\models_{\mathcal{I}'} \call{C}'$.
  \end{itemize} 
    
  \end{itemize}
\end{proof}

\begin{lemma}{\label{lemma:00:correcteness}}
  Let $\call{C}$ and $\call{C}'$ be two constraint systems such that $\call{C}'$
  is obtained from $\call{C}$ by applying a transformation rule. If \call{C}' is
  satisfiable then so is \call{C}.
 \end{lemma}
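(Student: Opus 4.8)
The plan is to prove the converse direction of soundness for each of the three transformation rules in Figure~\ref{fig:resolution}, by showing that any solution $\theta$ of the transformed system $\call{C}'$ can be composed with the unifier $\sigma$ used by the rule to produce a solution $\sigma\theta$ (possibly after a bookkeeping argument about variables) of the original system $\call{C}$. Throughout I would write $\call{C}=(\call{C}_\alpha,E\rhd{}t,\call{C}_\beta)$ with $\call{C}_\alpha$ in solved form and $t\notin\Variables$, since a rule is applied only in that situation, and I would freely use that $\ded_{\II}$ is preserved under instantiation of terms and that deducibility is monotone: if $t\in\rhclos{E}{\II}$ then $t\in\rhclos{E'}{\II}$ for any $E\supseteq E$. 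The key observation is that in every rule the premise constraint $E\rhd{}t$ is replaced by constraints whose satisfaction, together with $\sigma=mgu(\cdots)$, witnesses a single deduction step (or, for \textsf{Reduce~2}, a deduction step whose output is added to all later left-hand sides) that re-derives $t$.

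First I would handle \textsf{Unif}. Here $\call{C}'=(\call{C}_\alpha,\call{C}_\beta)\sigma$ with $\sigma=mgu(u,t)$ for some $u\in E\setminus\Variables$. Given $\theta\models_{\II}\call{C}'$, set $\sigma'=\sigma\theta$. Because $\sigma$ unifies $u$ and $t$, we get $u\sigma'=t\sigma'$, and since $\sigma'$ satisfies $(\call{C}_\alpha,\call{C}_\beta)$ (as $\theta$ satisfies its $\sigma$-instance) and $u\in E$, monotonicity gives $t\sigma'=u\sigma'\in\rhclos{E\sigma'}{\II}$, hence $\sigma'\models E\rhd{}t$; the remaining constraints are inherited. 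For \textsf{Reduce~1} the rule $l_x,l_1,\ldots,l_n\ded r\in\call{L}'_{inc}$ and $\sigma=mgu(\set{e_i\unif l_i}_i,r\unif t)$ with $e_i\in E\setminus\Variables$; for $\theta\models_{\II}\call{C}'$ put $\sigma'=\sigma\theta$, so $e_i\sigma'=l_i\sigma'$ and $r\sigma'=t\sigma'$. From $\sigma'\models(E\rhd{}y)_{y\in l_x}$ we get $y\sigma'\in\rhclos{E\sigma'}{\II}$ for each $y\in l_x$, and the $e_i\sigma'\in E\sigma'$ provide the non-variable hypotheses, so applying the instance of the rule yields $r\sigma'=t\sigma'\in\rhclos{E\sigma'}{\II}$; again the rest of $\call{C}$ is inherited from $\call{C}'$.

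The main obstacle is \textsf{Reduce~2}, because there $\call{C}'_\beta$ is obtained from $\call{C}_\beta$ by \emph{adding $r$ to the left-hand sides}, and the rule $l_x,l_1,\ldots,l_n\ded r$ is decreasing; moreover $\sigma=mgu(\set{e_i\unif l_i}_i)$ need not touch $r$ or $t$. Given $\theta\models_{\II}\call{C}'$, set $\sigma'=\sigma\theta$. The constraints $(E\rhd{}y)_{y\in l_x}$ together with $e_i\sigma'=l_i\sigma'\in E\sigma'$ let us fire the rule instance to deduce $r\sigma'\in\rhclos{E\sigma'}{\II}$; then the constraint $(E\cup r)\rhd{}t$ under $\sigma'$ gives $t\sigma'\in\rhclos{(E\sigma'\cup\set{r\sigma'})}{\II}=\rhclos{E\sigma'}{\II}$, so $\sigma'\models E\rhd{}t$. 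For the later constraints I must check that adding $r$ to their left-hand sides was harmless: if $E'\rhd{}t'$ is a constraint of $\call{C}_\beta$ then $E\subseteq E'$ (since constraints are nested in a constraint system), so $r\sigma'\in\rhclos{E\sigma'}{\II}\subseteq\rhclos{E'\sigma'}{\II}$ and therefore $\sigma'\models E'\rhd{}t'$ follows from $\sigma'\models (E'\cup r)\rhd{}t'$; one also checks the side condition $\Var{E'}\subseteq\set{v_1,\ldots}$ is not disturbed. I would wrap up by noting that in each case $\sigma'=\sigma\theta$ is a well-defined substitution and satisfies $\call{S}$ if present (the unification part of the constraint system is carried along unchanged or instantiated by $\sigma$), which establishes that $\call{C}$ is satisfiable. $\Box$
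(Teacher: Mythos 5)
Your proposal is correct and follows essentially the same route as the paper's proof: a case analysis over the three rules in which a solution $\theta$ of $\call{C}'$ is composed with the most general unifier used by the rule to yield a solution of $\call{C}$. You are in fact somewhat more explicit than the paper on the \textsf{Reduce~2} case (justifying via monotonicity and $E\subseteq E'$ that adding $r$ to later left-hand sides is harmless), where the paper simply appeals to ``the definition of $\mu$ and constraint systems''; apart from the small typo $E\supseteq E$ for $E'\supseteq E$, nothing is missing.
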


\begin{proof}
  Let $\call{C}$ and $\call{C}'$ be two constraint systems such that $\call{C}'$
  is obtained from $\call{C}$ by applying a transformation rule and suppose that
  $\call{C}'$ is satisfiable. Let $\sigma'$ be a solution of $\call{C}'$ and let
  us prove that $\call{C}$ is satisfiable.  Since a transformation rule can be
  applied on $\call{C}$, $\call{C}$ can't be in solved form. Suppose that
  $\call{C}=(\call{C}_\alpha,E \rhd{}t, \call{C}_\beta)$ where $\call{C}_\alpha$
  is in solved form and $t\notin\Variables$.
\begin{itemize}
\item If $\call{C}'$ is obtained from $\call{C}$ by applying \emph{Unif} rule
  then, there exists a term $u\in E\setminus\Variables$ such that $u$ and $t$
  are unifiable.  Let $\mu$ be the most general unifier then
  $\call{C}'=(\call{C}_\alpha\mu, \call{C}_\beta\mu)$.  Since
  $\sigma'\models_{\mathcal{I}'}\call{C}'$, we have
  $\sigma'\circ\mu\models_{\call{I}'}(\call{C}_\alpha,\call{C}_\beta)$ and by
  the fact that $\mu$ is the most general unifier of $t$ and a term in $E$ we
  have $\sigma'\circ\mu\models_{\call{I}'} E\rhd{}t$.  We deduce that
  $\sigma'\circ\mu\models_{\call{I}'} \call{C}$.

\item If $\call{C}'$ is obtained from $\call{C}$ by applying \emph{Reduce 1}
  then there exists an increasing rule $l_x,l_1,\ldots, l_n\ded r$, a set of
  terms $e_1,\ldots,e_n$ in $E\setminus\Variables$ such that $\set{r\unif{}t,
    (l_i\unif{}e_i)_{1\leq i\leq n}}$ has solution. Let $\mu$ be the most
  general unifier. $\call{C}'=(\call{C}_\alpha, (E\rhd{}x)_{x\in l_x},
  \call{C}_\beta)\mu$.  Since $\sigma'\models_{\call{I}'}\call{C}'$ and by
  definition of $\mu$, we have $\sigma'\circ\mu\models_{\call{I}'}\call{C}$.

\item If $\call{C}'$ is obtained from $\call{C}$ by applying \emph{Reduce 2}
  then there exists a decreasing rule $l_x,l_1,\ldots,l_n\ded r$ and a set of
  terms $e_1,\ldots,e_n$ in $E\setminus\Variables$ such that
  $\set{(l_i\unif{}e_i)_{1\leq i\leq n}}$ has solution.  Let $\mu$ be the most
  general unifier. $\call{C}'=(\call{C}_\alpha, (E\rhd{}x)_{x\in l_x}, (E\cup
  r)\rhd{} t, \call{C}'_\beta)\mu$. Since
  $\sigma'\models_{\mathcal{I}'}\call{C}'$ and by definition of $\mu$ and
  constraint systems, we have $\sigma'\circ\mu\models_{\mathcal{I}'}\call{C}$.

\end{itemize}
\end{proof}

\section{Decidability of reachability problems}\label{sec:ground.reachability}

In this section we first prove that if the saturation terminates then ground  reachability problems are decidable. We then give an additional criterion
 that will permit us to lift this result to general reachability problems.

\subsection{Decidability of ground reachability problems}\label{ground}
We recall that $\call{I}_0=\intrus{\call{G}}{\call{L}_0}{\call{H}}$ is the
initial deduction system and $\call{I}'=\intrus{\call{G}}{\call{L}'}{\emptyset}$
is the saturated deduction system.

Let us also first recall in the following lemma some properties of
reduction ordering.

\begin{lemma}{\label{lem:lem12}}
Let $t_1,t_2\in\vsig{\call{G}}$ and
$t_1\preceq t_2$. We have:
\begin{enumerate}
\item
 $\Var{t_1}\subseteq\Var{t_2}$
\item
$t_2\notin\SSub{t_1}$
\item 
If $t_2\in\Variables$ then  $t_1=t_2$
\item
If  $t_1\notin \Variables$ then $t_1\not\prec x$
\end{enumerate}
\end{lemma}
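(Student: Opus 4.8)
The plan is to prove each of the four items by appealing to the defining properties of a reduction order $\succ$: it is a well-founded rewrite order (compatible with function symbols and closed under substitutions) that is total on ground terms, together with the hypothesis $t_1 \preceq t_2$.

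\smallskip

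\textbf{Item 1.} First I would argue by contraposition: suppose there is a variable $x \in \Var{t_1} \setminus \Var{t_2}$. Instantiate $x$ by a sufficiently large ground term $s$ (large with respect to $\succ$, e.g.\ larger than every ground instance of $t_2$ obtained by grounding its variables with some fixed minimal constant $\minic$) and all other variables by $\minic$; call this ground substitution $\sigma$. Closure under substitutions gives $t_1\sigma \preceq t_2\sigma$. But since $x$ occurs in $t_1$, compatibility with function symbols forces $t_1\sigma$ to be at least as large as $s$ in an appropriate sense, while $t_2\sigma$ does not mention $s$; choosing $s$ large enough yields $t_1\sigma \succ t_2\sigma$, a contradiction. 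This monotonicity-under-grounding argument is the technical heart and is the step I expect to require the most care, since one must make precise the claim that "a term containing $x$ becomes arbitrarily large when $x$ is replaced by an arbitrarily large ground term" — this uses compatibility with function symbols iteratively along the position of $x$, plus totality on ground terms to compare.

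\smallskip

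\textbf{Item 2.} Suppose for contradiction $t_2 \in \SSub{t_1}$, i.e.\ $t_1 = t_1[t_2]$ with $t_2$ a strict subterm. Apply a ground substitution $\sigma$ making everything ground (instantiating variables by $\minic$). By closure under substitutions, $t_1\sigma \preceq t_2\sigma$. On the other hand $t_2\sigma$ is a strict subterm of $t_1\sigma$, and by the subterm property of reduction orders — which follows from compatibility with function symbols: replacing a proper subterm position by a strictly smaller term strictly decreases the term, hence the whole term is strictly larger than any proper subterm — we get $t_1\sigma \succ t_2\sigma$ (using totality on ground terms to know the two are comparable and distinct). Contradiction. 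One should note the subterm property is standard for reduction orders; I would either cite \cite{DershowitzJ90} or derive it in one line from compatibility.

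\smallskip

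\textbf{Items 3 and 4.} For item~3, if $t_2 = x \in \Variables$ and $t_1 \preceq x$, then by item~1 we have $\Var{t_1} \subseteq \Var{x} = \set{x}$, so $t_1$ is a term over the single variable $x$; if $t_1 \neq x$ then $x$ is a strict subterm of $t_1$, contradicting item~2 (with the roles: $t_2 = x \in \SSub{t_1}$), so $t_1 = t_2$. For item~4, suppose $t_1 \notin \Variables$ and $t_1 \prec x$ for some variable $x$; then $t_1 \preceq x$, so item~3 gives $t_1 = x \in \Variables$, contradicting $t_1 \notin \Variables$. Thus items 3 and 4 are short corollaries of 1 and 2, so the only real work is in establishing the two grounding/monotonicity arguments above, and I would present item~1 and item~2 first and in full, then dispatch 3 and 4 quickly.
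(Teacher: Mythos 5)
Your decomposition matches the paper's: items 3 and 4 are obtained as corollaries of items 1 and 2 exactly as the paper does, and item 2 rests on the subterm property of the ordering (your explicit grounding step is in fact slightly more careful than the paper, which asserts $t_2\prec t_1$ directly for possibly non-ground terms). The genuine divergence is in item 1. The paper never needs an ``arbitrarily large'' ground term: it takes $\sigma$ with $\Supp{\sigma}=\set{x}$ and $\sigma(x)=t_2$ itself. Since $x\notin\Var{t_2}$ this fixes $t_2\sigma=t_2$ while making $t_2$ a subterm of $t_1\sigma$, so closure under substitution yields $t_1\sigma\prec t_2$ with $t_2\in\Sub{t_1\sigma}$, contradicting the subterm property --- a self-contained trick. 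Your route instead requires a ground term $s$ with $s\succ t_2\sigma$, i.e.\ the nonexistence of a $\succ$-maximal ground term; this does hold here (if $m$ were maximal and $f$ non-constant, totality and monotonicity would force $m\succ f(m)\succ f(f(m))\succ\cdots$, contradicting well-foundedness), but it is an extra existence argument the paper's substitution avoids, and it silently fails for signatures consisting only of constants. Two further cautions. Your ``one-line'' derivation of the subterm property from compatibility alone does not work: compatibility gives $t\succ s\Rightarrow C[t]\succ C[s]$, which by itself does not yield $C[s]\succ s$; one also needs totality on ground terms and well-foundedness (if $s\succ C[s]$ then $s\succ C[s]\succ C[C[s]]\succ\cdots$). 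Citing \cite{DershowitzJ90}, as you propose, is the safer option. Finally, in item 3 the case of a ground $t_1$ (for which $\Var{t_1}\subseteq\set{x}$ and $x\notin\SSub{t_1}$ hold vacuously yet $t_1\neq x$) escapes your reduction to items 1 and 2 and needs the one-liner ``substitute $x\mapsto t_1$ to obtain $t_1\prec t_1$''; the paper's own proof of item 3 has the same omission, so this is a shared blemish rather than a defect particular to your argument.
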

\begin{proof}

\begin{enumerate}
\item Let $t_1$ and $t_2$ be two terms and $t_1\preceq t_2$.  If $t_1=t_2$ then
  we have obviously $\Var{t_1}=\Var{t_2}$.  Suppose $t_1\not=t_2$ this implies
  that $t_1\prec t_2$ and let us prove that $\Var{t_1}\subseteq\Var{t_2}$.  By
  contradiction, suppose that $\Var{t_1}\not\subseteq\Var{t_2}$ and let
  $x\in\Var{t_1}\setminus\Var{t_2}$.  By definition of $\prec$, we have
  $t_1\sigma\prec t_2\sigma$ for all substitutions $\sigma$.  Let $\sigma$ be a
  substitution such that $\Supp{\sigma}=\set{x}$ and $\sigma(x)=t_2$.  This
  implies that $t_2\sigma=t_2$ and $t_2\in\Sub{t_1\sigma}$ which contradicts
  $t_1\prec t_2$.
\item If $t_2\in\SSub{t_1}$ this implies that $t_1\not=t_2$ and $t_2\prec t_1$
  which contradicts $t_1\preceq t_2$.

\item
If $t_2=x$ we deduce that $\Var{t_1}\subseteq\set{x}$
and $x\notin\SSub{t_1}$.
This implies that $t_1=x$.
\item Suppose that $t_1\not=x$ and $t_1\prec x$. This implies that
  $\Var{t_1}\subseteq\set{x}$ and then, either $t_1=x$ or $x\in\SSub{t_1}$. This
  contradicts the fact that $t_1\not=x$ and $x\notin\SSub{t_1}$.
\end{enumerate}
\end{proof}

A core result of this paper is the following lemma. 

\begin{lemma}{\label{lem:terminaison-ID}}
  Let $\call{I}'$ be a saturated deduction system such that $\call{L}'$ is
  finite. Applying the transformation algorithm of Fig.~\ref{fig:resolution} on
  a constraint system $\call{C}$ without instantiating the variables of \call{C}
  yields only a finite number of different constraint systems.
\end{lemma}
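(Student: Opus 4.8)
The strategy is a finiteness argument by bounding the set of constraint systems reachable from $\call{C}$ without instantiating its variables. Let us denote by $\call{C}=(E_j\rhd t_j)_{1\le j\le n}$ the input constraint system and analyse what each of the three rules of Fig.~\ref{fig:resolution} does when the substitution $\sigma$ it produces does not touch the variables already in $\call{C}$. First I would observe that under this hypothesis every transformation rule acts by replacing one constraint $E\rhd t$ (the first one whose right-hand side is not a variable) either by nothing new of interest (\textsf{Unif}), or by a bounded number of constraints $E\rhd y$ with $y\in l_x$ (new variables coming from the rule of $\call{L}'$) and possibly one constraint $(E\cup r)\rhd t$ together with the addition of $r$ to the left-hand sides of $\call{C}_\beta$ (\textsf{Reduce~2}). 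The key point is that, since the variables of $\call{C}$ are not instantiated, the left-hand sides $E_j$ only ever \emph{grow}, and they grow only by adding terms $r$ that are right-hand sides of rules in the \emph{finite} set $\call{L}'$, each such $r$ being instantiated by an mgu of non-variable terms against members of a left-hand side.

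Second, I would make precise the relevant measure. Consider the multiset of right-hand sides $t_j$ of the constraints that are not in solved form. Each application of a rule strictly decreases this measure in the well-founded ordering induced by $\succ$ on the non-variable right-hand sides together with the structural subterm order: \textsf{Unif} removes a non-variable constraint (by the mgu with $u\in E\setminus\Variables$ the constraint is consumed), \textsf{Reduce~1} replaces $E\rhd t$ by constraints whose right-hand sides are the variables of $l_x$, hence are in solved form, and \textsf{Reduce~2} replaces $E\rhd t$ by variable constraints plus $(E\cup r)\rhd t$; here I would use that the rule $l_x,l_1,\dots,l_n\ded r$ applied is \emph{decreasing}, so $r\preceq l_i$ for some $i$, and that the new left-hand side is strictly larger, which is exactly the phenomenon that makes \textsf{Reduce~2} progress rather than loop. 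The termination of the whole process on a given constraint then follows from the fact that at each step only finitely many rules of $\call{L}'$ and finitely many tuples $e_1,\dots,e_n$ in the current (finite) left-hand sides can be chosen, so the tree of reachable constraint systems is finitely branching; combined with the well-founded measure it is finite by König's lemma.

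Third, the bookkeeping I expect to be delicate is controlling the left-hand sides. Because \textsf{Reduce~2} adds $r\sigma$ to left-hand sides, one must argue that the terms that can ever appear in any left-hand side form a finite set: they are obtained from the terms of $\call{C}$ and the right-hand sides of $\call{L}'$ by applying mgus of syntactic unification problems whose two sides are, on one hand, a subterm-like term of a rule in $\call{L}'$ and, on the other, a term already present. I would set up an explicit finite ``closure'' set $\mathcal{T}$ of terms, closed under the instantiations that the rules can perform without introducing the variables of $\call{C}$, show $\mathcal{T}$ is finite (using finiteness of $\call{L}'$, and Lemma~\ref{lem:lem12} to control how the reduction order constrains matchings), and observe that every left-hand side encountered is a subset of $\mathcal{T}$ and every right-hand side is either a variable or an element of $\mathcal{T}$. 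Since there are only finitely many subsets of a finite set and finitely many choices of right-hand side, there are only finitely many possible constraint systems overall.

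**The main obstacle.** The real work is the finiteness of this closure set $\mathcal{T}$: one must rule out an infinite regress in which \textsf{Reduce~2} keeps enlarging a left-hand side with ever-new instances $r\sigma$, each of which unlocks a further application of \textsf{Reduce~1} or \textsf{Reduce~2}. This is precisely where the hypothesis that $\call{L}'$ is the \emph{saturated} (closed) set of rules is used: because $\call{L}'$ is closed under the closure rule of Fig.~\ref{fig:closure}, any chained application of an increasing rule feeding into another rule has already been compiled into a single rule of $\call{L}'$, so no genuinely new term shapes are produced, and the relevant instantiating substitutions stay within a finite stock of mgus. Making this intuition into a clean invariant — identifying exactly which subterms of rules of $\call{L}'$ can be matched and bounding the depth of the resulting terms — is the crux; everything else is the routine well-foundedness argument sketched above together with an appeal to König's lemma.
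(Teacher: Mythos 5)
Your overall plan — bound the set of terms that can ever occur in a left-hand side, then conclude that only finitely many constraints, hence finitely many constraint systems, are reachable — is the same shape as the paper's argument. But you explicitly leave unproved the step you yourself identify as the crux, namely the finiteness of the closure set $\mathcal{T}$, and the mechanism you gesture at for it is not the one that works. Saturation does not prevent \textsf{Reduce 2} from repeatedly enlarging a left-hand side with new instances $r\sigma$; what rules out the infinite regress is an ordering argument. The paper organises the terms reachable in left-hand sides into a tree: whenever a decreasing rule $l_x,l_1,\ldots,l_n\ded r$ applied with $\sigma$ adds $t_1=r\sigma$ to a left-hand side $E'$, the fact that the rule is decreasing gives a term $t_2\in E'$ with $t_2\succ t_1$ (the degenerate case $r\in l_x\setminus\Var{l_1,\ldots,l_n}$ being disposed of via Lemma~\ref{lemma:00:eliminationvariables}), and one makes $t_1$ a son of $t_2$. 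Each node has finitely many sons because $t_2$ is an instance $l_i\sigma$ of one of the finitely many non-variable left-hand terms of rules of $\call{L}'$, matching in the empty theory determines $\sigma$ on $\Var{l_i}$ uniquely, and $\Var{r}\subseteq\Var{l_i}$ (Lemma~\ref{lem:lem12}) forces $r\sigma$ to be determined by the pair (rule, $t_2$). Each branch is finite because $\succ$ is well-founded, so K\"onig's Lemma gives finiteness of the tree and hence of $\mathcal{T}$. Without this uniqueness-of-sons argument your ``finite stock of mgus'' claim is unsupported.

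A secondary problem: your proposed measure, the multiset of right-hand sides of unsolved constraints, does not strictly decrease under \textsf{Reduce 2}, since that rule keeps the constraint $E\cup\set{r}\rhd t$ with the same right-hand side $t$; indeed the lemma does not assert termination of the transformation (a decreasing rule with $r\sigma$ already in $E$ can be applied forever), only that the set of reachable systems is finite. So the well-founded-measure part of your second paragraph should be dropped in favour of the counting argument: once the left-hand sides range over subsets of the finite set $\mathcal{T}$, the right-hand sides introduced by \textsf{Reduce 1} and \textsf{Reduce 2} are strict subterms of terms in some reachable left-hand side or of the original goal, hence also finitely many, and finiteness of the set of constraint systems follows directly.
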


\begin{proof}
  Assume the application of rules of Fig.~\ref{fig:resolution} yields an
  infinite sequence of constraint systems $\call{C}_1,\ldots,\call{C}_n,\ldots$.
  Let us prove there is only a finite number of different $\call{C}_i$ when
  identical constraints within a constraint system are identified.

  Let us first prove that there is only a finite number of different left-hand
  side of deduction constraints. The number of different left-hand sides in a
  constraint system does not change (or decrease) when a \textsc{Unif} or
  \textsc{Reduce1} rule is applied.  Assume now that a decreasing rule
  $l_x,l_1,\ldots,l_n\ded r\in\call{L}'$ is applied with a substitution $\sigma$
  on a constraint with left-hand side $E$. If $r\sigma\in E$, the number of
  different left-hand side does not change. Thus let us assume $r\sigma\notin
  E$, and thus $r\sigma\notin \cup\set{l_1\sigma,\ldots,l_n\sigma}$.  Since $r$
  is smaller or equal to a term of the left-hand side of there rule, we have two
  case:
  \begin{itemize}
  \item Either there exists $i$ with $l_i\sigma \succ r\sigma$, and thus there
    exists $e\in E$ such that $e\succ r\sigma$.
  \item Or $r\in l_x\setminus\Var{l_1,\ldots,l_n}$. Then the obtained constraint
    system contains the deduction constraints $E\rhd r$ and $E\cup\set{r}\rhd t$
    and not other constraint contains $r$. By
    Lemma~\ref{lemma:00:eliminationvariables} the obtained constraint system is
    equivalent to the one in which $E\cup\set{r}\rhd t$ is replaced by $E\rhd
    t$. 
  \end{itemize}
  Let us now consider the set $T$ which is the union of all left-hand side of
  deduction constraints reachable from $E$ by employing a decreasing rule. 
  \begin{itemize}
  \item the root is labelled by $\emptyset$;
  \item the sons of the root are labelled by the terms in a left-hand side $E$;
  \item The sons of the non-root node are defined as follows: assume there
    exists two left-hand sides $E'$ and $E''$ where $E'$ is reachable from $E$,
    and there is a decreasing rule whose application leads to the addition of a
    deduction constraint with left-hand side $E''=E'',t_1$. Let $t_2\in E'$ be
    the term strictly greater than $t_1$. We then set $t_1$ as a son of $t_2$.
  \end{itemize}
  Since $t_2\succ t_1$ there is no cycle, and since we consider sets reachable
  from $E$, the ``is son of'' relation is connected. It thus defines a tree. We
  note that $t_2$ is the instance of a non-variable term $l$ in the left-hand
  side of a decreasing rule. There is only a finite number of such terms. Since
  we consider deductions in the empty theory, for each $l$ there is a unique
  substitution $\sigma$ such that $l\sigma=t_2$. Given the above properties of
  reduction ordering we have $\Var{r}\subseteq\Var{l}$ and thus $t_1=r\sigma$ is
  uniquely determined by the rule applied. Thus, each term $t_2$ has a finite
  number of sons $t_1$.  Along each branch of the tree a node $t$ is strictly
  smaller than its parent. Since $\succ$ is a well-founded ordering, this
  implies that each branch is finite. Thus, by K\"onig's Lemma, this tree is
  finite. We conclude that $T$ itself is finite. Each left-hand side of a
  deduction constraint is a subset of $T$, thus there is only a finite number of
  different left-hand sides.

  When applying \textsc{Reduce 1} or \textsc{Reduce 2} on a constraint
  $E\rhd{}t$, the newly introduced constraints $E\rhd t'$ are such that $t'$ is
  a strict subterm of a term in $E$ or $t$. Let $E'\rhd t'$ be a deduction
  constraint reached from $E\rhd t$. Either $t'$ is a subterm of $t$ or there
  exists $E''$ reachable from $E$ such that $t'$ is a strict subterm of
  $E''$. Since there is only a finite number of different $E''$, there is thus
  only a finite number of possible right-hand side of constraints.

  In conclusion only a finite number of deduction constraints $E'\rhd t'$ can be
  reached from a deduction constraint $E\rhd t$. Thus only a finite number of
  constraint systems can be reached from a given one by applying rules that do
  not instantiate the variables in the constraint system.
\end{proof}

\begin{definition}\label{def:groud.constraint.system}
  An $\call{I}_0$-ground constraint system \call{C} is denoted
  $(E_1\rhd{}t_1,\ldots,E_n\rhd{}t_n)$ and is defined by a sequence of pairs
  $(E_i,t_i)_{i\in\set{1,\ldots,n}}$ such that $E_i$ (resp. $t_i$) is a set of
  ground terms (resp.  ground term) in normal form and $E_i\subseteq E_{i+1}$
  for $i\in\set{1,\ldots,n}$.
\end{definition}

We note that an $\call{I}_0$-ground constraint $E\rhd{}t$ is valid if
$t\in\rhclos{E}{\call{I}_0}$. We now consider the following problem:

\begin{decisionproblem}{$\call{I}_0$-Ground Reachability Problem}
  \entreeu{ An $\call{I}_0$-ground constraint system \call{C}.}
  \sortie{\textsc{Val} iff
    $(t_i\in\rhclos{E_i}{\call{I}_0})_{i\in\set{1,\ldots,n}}$.}
\end{decisionproblem}

We recall that $t\in\rhclos{E}{\call{I}_0}$ iff $t\in\rhclos{E}{\call{I}'}$
while $E$ (resp. $t$) is set of closed terms (resp.  closed term) in normal form
(Lemmas~\ref{lemma:00:saturation1} and~\ref{lemma:00:lem1}).  This implies that
solving $\call{I}_0$-ground reachability problem is reduced to solving
$\call{I}'$-ground reachability problem. It is then routine to see that a ground
constraint system is valid if, and only if, it reduces to an empty sequence of
deduction constraints. Thus by Lemma~\ref{lem:terminaison-ID} we have:

\begin{theo}{\label{theo:ground}}
  If the saturation algorithm terminates on $\call{L}_0$, the
  $\call{I}_0$-ground reachability problem is decidable.
\end{theo}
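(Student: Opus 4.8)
The plan is to reduce the $\call{I}_0$-ground reachability problem to a finite search in the transformation space of Fig.~\ref{fig:resolution}, the finiteness coming from Lemma~\ref{lem:terminaison-ID}.

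First I would pass from $\call{I}_0$ to the saturated system $\II$. Since every $E_i$ and $t_i$ is a ground term in normal form, Lemmas~\ref{lemma:00:saturation1} and~\ref{lemma:00:lem1} give $t_i\in\rhclos{E_i}{\call{I}_0}$ iff $t_i\in\rhclos{E_i}{\II}$ for each $i$, so the $\call{I}_0$-ground reachability problem is equivalent to the $\II$-ground reachability problem on the same data. Note that a ground constraint system has no variables at all, and its right-hand sides, being ground terms, are never variables; hence by the definition of solved form it is in solved form exactly when it is the empty sequence of constraints, which is trivially satisfiable.

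Next I would prove that a ground constraint system $\call{C}$ is valid if and only if the rules of Fig.~\ref{fig:resolution} reduce it to the empty system. The ``if'' direction follows from Lemma~\ref{lemma:00:correcteness} applied step by step along the reduction, starting from the satisfiable empty system. For the ``only if'' direction, I would first check that a transformation rule applied to a ground system yields again a ground system: the computed unifier determines all variables of the applied rule --- in the decreasing case using $\Var{r}\subseteq\Var{l_1,\dots,l_n}$ from Lemma~\ref{lem:lem12} --- and after the simplification step the variables of $l_x$ are among those, so every newly introduced right-hand side is ground and no variable of the original system is ever instantiated. Hence, if $\call{C}$ is valid it is satisfiable and, being non-empty, not in solved form, so Lemma~\ref{lemma:00:completeness} yields a satisfiable ground successor; iterating, and using that each application of Lemma~\ref{lemma:00:completeness} consumes structure of the underlying well-formed derivation (so that a well-founded measure on the pair ``current system plus a witnessing derivation'' strictly decreases), the process terminates, and it can only terminate at the empty system.

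Finally, since from a ground system every applicable transformation instantiates no constraint variable, Lemma~\ref{lem:terminaison-ID} shows that the set of constraint systems reachable from $\call{C}$ by the rules of Fig.~\ref{fig:resolution} is finite and effectively enumerable. The decision procedure computes this finite set and answers \textsc{Val} iff it contains the empty system; correctness is the equivalence above. I expect the main obstacle to be precisely the ``only if'' half of that equivalence: one must exhibit the right well-founded measure linking a reduction step to a decrease in the witnessing derivation, so that the iteration of the progress lemma is guaranteed to halt (and hence to reach the empty system). The remaining ingredients --- the transfer to $\II$, groundness preservation, and the finiteness of the search from Lemma~\ref{lem:terminaison-ID} --- are routine.
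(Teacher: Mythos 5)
Your proposal follows essentially the same route as the paper: transfer to the saturated system $\II$ via Lemmas~\ref{lemma:00:saturation1} and~\ref{lemma:00:lem1}, observe that a ground system is solved iff empty and is valid iff it reduces to the empty system (the paper dismisses this as ``routine'', relying implicitly on Lemmas~\ref{lemma:00:completeness} and~\ref{lemma:00:correcteness}), and conclude finiteness of the search space from Lemma~\ref{lem:terminaison-ID} since no variables can be instantiated. Your write-up is correct and in fact more careful than the paper's, notably in flagging that the progress lemma alone does not preclude cycling and that a well-founded measure on the witnessing derivation is needed.
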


\section{Termination of Saturation does not imply decidability of general reachability
  problems}

It is well-known how to encode 2-stack automata into deduction systems. However
the saturation will typically not terminate on standard encodings as it will
amount in this case to the pre-computation of all possible executions of the
automaton. We can however adapt the construction so that saturation
terminates. We consider a signature   \call{G} such that, for all symbol
$f\in\call{G}_0$ of arity $n$, there is a deduction rule $x_1,\ldots,x_n\ded
f(x_1,\ldots,x_n)$, and the signature   $\call{G}=\call{G}\cup\set{g}$ with $g$ a
symbol of arity $1$. Let $(Q,Q_I,Q_F,\Sigma,\Pi,\Delta)$ be a finite $2$-stack
automaton, where $Q$ is the finite set of states of the automaton, $Q_I$ and
$Q_F$ its initial and final states, $\Sigma$ denotes the alphabet of the words
read by the automaton, and $\Pi$ denotes the elements in the stacks of the
automaton. We shall encode the emptiness of the language recognised by this
automaton into a general reachability problem. Let us assume there exists:
\begin{itemize}
\item $\bot\in\call{G}_0$ be a constant denoting the empty stack or the empty
  word;
\item one unary symbol $u_\alpha$ for each letter $\alpha\in\Sigma\cup\Pi$;
\item one constant $q\in\call{G}$ for each state in $Q$;
\item one symbol $s\in\call{G}$ of arity $4$ where we intend that:
  \begin{itemize}
  \item the first argument represents the word that remains to be read by the
    automaton;
  \item the second argument represents the current state of the automaton;
  \item the third and fourth arguments represent the two stacks of the
    automaton.
  \end{itemize}
\item one symbol $f$ of arity $2$.
\end{itemize}
We represent a transition from a state $\sigma_1$ to a state $\sigma_2$ with a
symbol $\tau$ of arity $1$ and a rewriting rule
$\tau(g(f(\sigma_1,f(\sigma_2,x))))\rew g(f(\sigma_2,x))$. The rewriting system
has no critical pairs, and thus is confluent. Since every narrowing step
decreases strictly the number of ``$\tau$'' symbols in a term, narrowing
terminated, and thus the equational theory has the finite variant property. At
the end of the first step of the saturation the system will contain the rules
enabling the attacker to build sequences of states, and additional rules
$g(f(\sigma_1,f(\sigma_2,x)))\ded g(f(\sigma_2,x))$ that are decreasing for any
recursive path ordering. Since there is no increasing rule with the symbol $g$
in the right-hand side, we leave to the reader the proof that saturation
terminates, and hence that ground reachability problems are decidable.
  
However, the instance of $x$ in the following reachability problem encodes a
word recognised by the automaton after a run encoded by the instance of $y$:
$$
\emptyset\rhd f(s(x,q_0,\bot,\bot),y), g(f(s(x,q_0,\bot,\bot),y))\rhd
g(s(\bot,q_f,\bot,\bot))
$$
This example proves (with $q_0\in Q_I$ and $q_f\in Q_F$) that the saturation can
terminate and yield a deduction system for which general reachability problems
are not decidable.

The undecidability comes from the fact that one can apply an unbounded number of
decreasing rules on a non-ground terms, and from the ``lack of regularity'' on
the terms obtained.

\section{Decidability of general reachability problems}
 We recall that the initial intruder system is given by $\call{I}_0=\intrus{\call{G}}{\call{L}_0}{\call{H}}$
 while $\call{H}$ is generated by a convergent equational theory and has the finite variant property. We recall also that $\II=\intrus{\call{G}}{\call{L}'}{\emptyset}$ is the saturated intruder system.

We give here a simple criterion that permits to ensure the termination of the
resolution of a constraint problem with a saturated deduction system.  Let $T$
be a set of terms, $T=\set{t_1,\ldots,t_m}$, we let $\Delta(T)$ to be the set of
strict maximal subterms of $T$ and we define:

$$
\delta(T)=\left\lbrace
  \begin{array}{ll}
    +\infty & \text{if }T\subseteq\Variables\\
    |T\setminus\Variables| - |\Var{T\setminus\Variables} \setminus
    (T\cap\Variables)| & \text{otherwise.}  \\
  \end{array}
\right.
$$

Now let us define $\mu(T)$.  We consider the image of the set of terms $T$ by
the rewriting system \call{U} containing rules $f(x_1,\ldots,x_n)\rew
x_1,\ldots,x_n$ for every symbol $f$ in the signature of the deduction system.  We
define:
$$
\mu(T)=\min_{
  \begin{array}{c}
T\sigma\rew^*_\call{U} T'\\
\sigma\text{ mgu of subterms of }T\\
\end{array}
}\delta(T')
$$

We extend $\mu$ to rules as follows. Let $\call{L}'$ be the set of deduction
rules. We recall that $\call{L}'$ is partitioned into two disjoint sets of
deduction rules, the set of increasing rules $\call{L}'_{inc}$ and the set of
decreasing rules $\call{L}'_{dec}$.  For every rule $l\ded r\in\call{L}'$,

$ \mu(l\ded r)=\left\lbrace\begin{array}{ll}
    \mbox {$\mu(\Delta(l\setminus\Variables\cup \set{r}))$ if $l\ded r$ is increasing,}\\
    \mbox{$\mu(\Delta(l\setminus\Variables))$ otherwise.}  \\
\end{array}\right.$

\begin{definition}{\label{def:contracting}(Contracting deduction systems)}
  A saturated deduction system
  $\call{I}'=\intrus{\call{G}}{\call{L}'}{\emptyset}$ is
  \emph{contracting} if for all rules $l\ded r$ in $\call{L}'$
  we have $\mu(l\ded r)>0$.
\end{definition}

\begin{lemma}{\label{lem:mu}}
  Let $S=\set{s_1,\ldots,s_n}$ and $T=\set{t_1,\ldots,t_n}$ be two sets of terms
  and let $\sigma$ be the most general unifier of
  $V=\set{s_1\unif{}t_1,\ldots,s_n\unif{}t_n}$.  If $\mu(T)>0$ then either
  $\nbv{s_1,\ldots,s_n}>\nbv{(s_1,\ldots,s_n,t_1,\ldots,t_n)\sigma}$ or
  $\nbv{s_1,\ldots,s_n}=\nbv{(s_1,\ldots,s_n,t_1,\ldots,t_n)\sigma}$,
  $S=S\sigma$ and for all $x\in\Var{T}$ there is $i\in\set{1,\ldots,n}$ such
  that $\sigma(x)\preceq s_i$.
 \end{lemma}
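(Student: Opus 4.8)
The plan is to compare the two integers $\nbv{s_1,\ldots,s_n}$ and $\nbv{(s_1,\ldots,s_n,t_1,\ldots,t_n)\sigma}$ directly. Since $\sigma$ unifies each $s_i$ with $t_i$ we have $s_i\sigma=t_i\sigma$, so $\Var{(S,T)\sigma}=\Var{S\sigma}=\Var{T\sigma}$ and the second integer equals $\nbv{T\sigma}$. It therefore suffices to prove $\nbv{S}\ge\nbv{T\sigma}$ and to show that equality forces both $S=S\sigma$ and, for every $x\in\Var{T}$, the existence of an $i$ with $\sigma(x)\preceq s_i$. I would first normalise the unification system by exhaustively applying the decomposition step $f(\vec u)\unif f(\vec v)\rightarrow\vec u\unif\vec v$, which does not change the most general unifier; since $\sigma$ exists no equation with two non-variable sides and distinct head symbols ever appears, so we reach a system in which every equation has at least one variable side. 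Decomposition replaces $S$ and $T$ by sets $S'$, $T'$ of subterms of the $s_i$ and $t_i$ with $\Var{S'}=\Var{S}$, $\Var{S'\sigma}=\Var{S\sigma}$, and similarly for $T$, so the two integers are unchanged; since each element of $S'$ is a subterm of some $s_i$, hence $\preceq s_i$, the conclusion for $(S',T')$ transfers back to $(S,T)$ by transitivity of $\preceq$. Moreover $T\rightarrow^*_{\call U}T'$, and directly from the definition of $\mu$ one checks that $\call U$-decomposition preserves positivity --- a most general unifier of subterms of $T'$ is one of subterms of $T$, and $\call U$-rewriting commutes with substitution, so every $\delta$-value reachable from $T'$ is already reachable from $T$ --- whence $\mu(T')>0$. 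So we may assume from now on that every equation has a variable side.

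In this situation the computation of $\sigma$ is a chain of variable-elimination steps, and I would argue by induction on the number of such steps, tracking the variation of the number of distinct variables and, at each step, invoking $\mu(T)>0$ on a suitably chosen $\call U$-decomposition of a one-step instance of $T$. Eliminating a variable of $T$ in favour of a variable contributes $0$, and we orient it so as not to touch a variable of $S$ whenever this is possible; when it is not, the number of distinct variables of $S$ strictly drops, which yields the first alternative. The delicate step is the one where a \emph{non-}variable term $a$ --- necessarily a subterm of some $s_i$ --- is unified with a variable $y$ occurring in $T$: then $\sigma(y)=a\sigma$ imports all variables of $a\sigma$ into $T\sigma$, and these must be charged against variables that are consumed. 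The hypothesis $\mu(T)>0$ is precisely what makes this accounting work: one must exhibit, inside the minimum defining $\mu(T)$, a concrete set $T'$ --- obtained from $T$ by \emph{one} unification of two of its subterms followed by a $\call U$-derivation --- whose value $\delta(T')$ records exactly this trade-off, so that $\delta(T')>0$ delivers $\nbv{S}\ge\nbv{T\sigma}$. I expect this identification --- matching the case analysis of the $mgu$ computation with the single subterm unification and the $\call U$-derivation hidden inside $\mu$ --- to be the main obstacle; it is in essence the stability property ``$\mu(T)>0$ implies $\mu(T\rho)>0$ whenever $\rho$ is a unifier of two subterms of $T$'', which one would isolate first.

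It remains to treat the equality case. Since every individual step of the above induction is shown to be non-increasing for the number of distinct variables, $\nbv{S}=\nbv{T\sigma}$ forces every step to be variable-preserving; in particular no variable--variable step touched $S$, and no delicate step consumed a variable of $S$ --- here it is the \emph{strict} inequality $\mu(T)>0$, rather than $\ge0$, that leaves no slack for such a consumption to be compensated. Hence $\Supp{\sigma}\cap\Var{S}=\emptyset$, i.e. $S\sigma=S$, and every equation is then solved by a matcher: $t_i\sigma=s_i$. Consequently, for $x\in\Var{t_i}$, $\sigma(x)$ is the subterm of $t_i\sigma=s_i$ occurring at a position of $x$ in $t_i$, hence $\sigma(x)\preceq s_i$; and for $x\in\Var{T}\cap\Var{S}$ we have $\sigma(x)=x$, which is a subterm of whichever $s_i$ it occurs in, so again $\sigma(x)\preceq s_i$. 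This establishes the second alternative and completes the proof.
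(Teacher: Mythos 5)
There is a genuine gap, and it sits exactly where the content of the lemma lies. After decomposition, each equation pairs a subterm of some $s_i$ with the subterm of $t_i$ at the same position, so the non-variable side of an equation may come from either $S$ or $T$; your assertion that the non-variable term $a$ in the ``delicate step'' is ``necessarily a subterm of some $s_i$'' is wrong, and it leads you to analyse the harmless direction while omitting the dangerous one. If $a\in\Sub{S}$ is unified with $y\in\Var{T}$, then $\Var{a}\subseteq\Var{S}$, so nothing new enters $\Var{(S,T)\sigma}$ and no charging is needed. The case that actually requires $\mu(T)>0$ is the opposite one: a variable $x\in\Var{S}$ unified with a non-variable subterm $u\in\Sub{T}$, which consumes one variable of $S$ but imports $\Var{u}$, possibly several variables of $T$, into $\Var{S\sigma}$. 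A single such step can strictly \emph{increase} the count (take $S=\set{x}$ and $T=\set{f(y,z)}$), so your claim that every elimination step is non-increasing --- on which your treatment of the equality case rests --- is false; only the aggregate over all such steps is controlled. That control is exactly what $\delta(T')>0$ expresses for the set $T'=\set{x_1,\ldots,x_k,u_{k+1},\ldots,u_m}$ collecting the $T$-sides of the decomposed equations: $|T'\setminus\Variables|=m-k$ counts the variables of $S$ consumed, and $|\Var{T'\setminus\Variables}\setminus(T'\cap\Variables)|$ counts the variables of $T$ imported. This is the computation the paper's proof carries out, and it is absent from yours.

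Second, you explicitly defer the central accounting (``I expect this identification \ldots{} to be the main obstacle'') together with the stability property $\mu(T)>0\Rightarrow\mu(T\rho)>0$ for $\rho$ an mgu of two subterms of $T$. That stability is not an auxiliary fact to be proved so much as the reason $\mu$ is \emph{defined} as a minimum over mgus of subterms of $T$ (the paper's remark after the lemma exhibits a counterexample for the naive definition); in the paper's proof it is invoked precisely in the sub-case where two decomposed equations share a left-hand variable ($x_i=x_j$), which forces a unification of two subterms of $S$ or of $T$ before the count can be read off. Your first paragraph is sound and matches the paper's use of the first step of Martelli--Montanari, and the reduction of the goal to $\nbv{S}\ge\nbv{T\sigma}$ is correct; but the case analysis turning $\mu(T)>0$ into that inequality --- the whole point of the lemma --- is both misdirected and left unexecuted.
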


\begin{proof}

  Let $V=\set{s_1\unif{}t_1,\ldots,s_n\unif{}t_n}$.  In order to solve $V$, we
  apply the first step of the unification algorithm of Martelli-Montanari
  \cite{MartelliM82}. We reduce $V$ to
  $V'=\set{x_1\unif{}u_1,\ldots,x_k\unif{}u_k,x_{k+1}\unif{}u_{k+1},\ldots,x_m\unif{}u_m}$
  such that for every equation $x\unif{}u\in V'$, we have either $x\in\Var{S}$
  and $u\in\Sub{T}$ or $x\in\Var{T}$ and $u\in\Sub{S}$.  We suppose that
  $x_j\in\Var{T}$ for $j\in\set{1,\ldots,k}$.
\begin{itemize}
\item If $k=m$ then we have $x_j\in\Var{T}$ for $j\in\set{1,\ldots,m}$.  We
  suppose that $x_i\not=x_j$ for all $i,j\in\set{1,\ldots,m}$ and
  $i\not=j$. This implies that $S\sigma=S$ and $\Var{T}$ are instantiated by
  subterms of $S$, that is $\Var{T}\sigma$ are smaller or equal than terms in
  $S$.  We conclude also that
  $\nbv{s_1,\ldots,s_n}=\nbv{(s_1,\ldots,s_n,t_1,\ldots,t_n)\sigma}$.
\item If $k\not=m$ assume $\set{u_{k+1},\ldots,u_m}\notin\Var{T}$, we have
  different cases:
  \begin{itemize}
  \item If for all different $i,j\in\set{1,\ldots,m}$ we have $x_i\not=x_j$ then
    $m-k$ variables of $S$, $x_{k+1},\ldots,x_m$, are instantiated by subterms
    of $T$, $u_{k+1},\ldots,u_m$. This implies that when we apply $\sigma$ to
    \call{S}, new variables,
    $\Var{u_{k+1},\ldots,u_m}\setminus\set{x_1,\ldots,x_k}$ will appear in
    $S\sigma$.  There exists a set $T'\not\subseteq\Variables$ such that
    $T\rew^*_{\call{U}}T'$ and $T'=\set{x_1,\ldots,x_k,u_{k+1},\ldots,u_m}$.
    Since $\mu(T)>0$, we have
    $|T'\setminus\Variables|>|\Var{T'\setminus\Variables}
    \setminus(x_1,\ldots,x_k)|$.  This implies that
    $\nbv{s_1,\ldots,s_n}>\nbv{(s_1,\ldots,s_n,t_1,\ldots,t_n)\sigma}$.
  \item If there is different $i,j\in\set{1,\ldots,m}$ such that $x_i=x_j$:
    \begin{itemize}
    \item If $i,j\leq k$ then we have to unify two subterms of $S$. Let $u_i$
      and $u_j$ be these two subterms and $\alpha$ be their most general
      unifier.

      Let us apply $\alpha$ on $V$ and to solve $V$ we have to solve
      $V\alpha=\set{s_1\alpha\unif{}t_1,\ldots,s_n\alpha\unif{}t_n}$.  To solve
      $V\alpha$ we reduce it to another system $V"$ where equations have the
      same form as in $V'$.  We note that $\nbv{T}$ in $V\alpha$ is the same as
      in $V$ and $\nbv{S}$ is reduced.

      By the same reasoning as above, we deduce that
      $\nbv{S}>\nbv{S\sigma,T\sigma}$.

    \item If $i,j>k$ then we have to unify two subterms of $T$.  Let $u_i$ and
      $u_j$ be these two subterms and $\alpha$ be their most general
      unifier. Let us apply $\alpha$ on $V$ and to solve $V$ we have to solve
      $V\alpha=\set{s_1\unif{}t_1\alpha,\ldots,s_n\unif{}t_n\alpha}$ and to
      solve $V\alpha$, we have to reduce it to another system $V"$ where
      equations have the same form as in
      $V'$. $V"=\set{x_1\unif{}u_1,\ldots,x_m\unif{}u_m}$ where
      $x_1\ldots,x_k\in\Var{T\alpha}$ and $x_{k+1},\ldots,x_m\in\Var{S}$.  By
      definition of $\mu$ and by following the same reasoning as above, we
      deduce that:
      \begin{itemize}
      \item If $k=m$ and for all different $i,j\in\set{1,\ldots,m}$ we have
        $x_i\not=x_j$, we deduce that $S=S\sigma$, $\Var{T}\sigma$ are         smaller or equals  than terms in $S$ and then $\nbv{S}=\nbv{S\sigma,T\sigma}$.
      \item If $k=m$ and there is different $i,~ j$ such that $x_i=x_j$ then we
        have to unify two subterms of $S$ and then we conclude that
        $\nbv{S}>\nbv{S\sigma,T\sigma}$.
      \item If $k\not= m$ we deduce that $\nbv{S}>\nbv{S\sigma,T\sigma}$.
      \end{itemize} 
    \end{itemize}
  \end{itemize} 
\end{itemize}
\end{proof}

The definition of $\mu$ is tailored to the proof of the following Lemma.

\paragraph{Remark.} 

Let $T$ be a set of terms and let  $\Sigma (T)=\set{\sigma ~  s.t. ~ \sigma ~ is ~ the ~ most ~ general ~ unifier ~ of ~ some ~ subterms ~ of ~ T}$.
We remark that $\mu(T)$ is defined with respect  to  $T\sigma$ for every $\sigma\in \Sigma$.  It will be  more naturel and more general if $\mu(T)$ is defined with respect to $T$ instead of some instances of $T$.
The so-called  general definition will be defined as follow:
$$
\mu(T)=\min_{
  \begin{array}{c}
T\rew^*_\call{U} T'\\
\end{array}
}\delta(T')
$$
Using the general definition of $\mu$, we remark that $\mu(T)>0$ does not imply  $\mu(T\sigma)>0$ for a set of terms $T$ and a subtitution $\sigma\in\Sigma(T)$.
Let $T=\set{f(x,x),f(x,y),f(y,x)}$ and let $\sigma$ be such that $\sigma(x)=y$.
Using the general definition of $\mu$, we remark that $\mu(T)>0$ and $\mu(T\sigma)=0$.

Unfortunately, the lemma \ref{lem:mu}, used in the proof of termination (lemma \ref{lem:terminaison-NOT-ID}),
becomes false  with the general definition.

\begin{lemma}{\label{lem:terminaison-NOT-ID}}
  Let $\II$ be a saturated contracting deduction system, \call{C} be a
  $\II$-constraint system not in solved form. If a transformation is applied on
  \call{C} to yield a constraint system $\call{C}'$, then either
 the substitution applied does not instantiate the variables of \call{C} and
    $\Var{\call{C}'}\subseteq\Var{\call{C}}$ or  $\nbv{\call{C}'}<\nbv{\call{C}}$.
\end{lemma}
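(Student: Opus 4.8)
The plan is to do a case analysis on which transformation rule of Figure~\ref{fig:resolution} has been applied to $\call{C}$, and in each case compare $\nbv{\call{C}'}$ with $\nbv{\call{C}}$. Write $\call{C}=(\call{C}_\alpha,E\rhd{}t,\call{C}_\beta)$ with $\call{C}_\alpha$ in solved form and $t\notin\Variables$, and let $\mu$ be the most general unifier applied, so $\call{C}'=\ldots\mu$. In every case the newly created constraints have the form $(E\rhd{}y)_{y\in l_x}$ with $l_x$ a set of variables, plus (for \textsf{Reduce~2}) a constraint $E\cup\set{r}\rhd{}t$; these add no new variables beyond $\Var{\call{C}}$. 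So the only source of change in the variable count is the instantiation by $\mu$. If $\mu$ does not instantiate the variables of $\call{C}$ — i.e. $\mu$ restricted to $\Var{\call{C}}$ is a renaming, or the identity — then since the freshly introduced variables in $l_x$ already occur in $\call{C}$ (after the simplification step that removed $x\in l_x\setminus\Var{l_1,\ldots,l_n,r}$), we get $\Var{\call{C}'}\subseteq\Var{\call{C}}$ and we are in the first alternative. Otherwise $\mu$ genuinely instantiates some variable of $\call{C}$, and we must show $\nbv{\call{C}'}<\nbv{\call{C}}$.

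For the \textsf{Unif} rule, $\mu=mgu(u,t)$ for $u\in E\setminus\Variables$ and $t\notin\Variables$; this is an instance of Lemma~\ref{lem:mu} with the singleton sets $S=\set{t}$, $T=\set{u}$ (or rather with $S$, $T$ chosen so that the side with $\mu(T)>0$ is the non-variable term from $E$, which is legitimate because $E$ contains left-hand sides of a deduction constraint and the contracting hypothesis applies to the relevant $\Delta$-sets). For \textsf{Reduce~1} and \textsf{Reduce~2}, $\mu$ is the most general unifier of the system $\set{e_i\unif{}l_i}_{1\le i\le n}$ together with (for \textsf{Reduce~1}) $r\unif{}t$; here I would apply Lemma~\ref{lem:mu} taking $S$ to be the tuple of the $e_i$'s (and $t$ for \textsf{Reduce~1}), which are non-variable terms from $E$ and from the goal, and $T$ to be the tuple of $l_i$'s (and $r$), i.e. the strict-maximal-subterm material of the left-hand side of the deduction rule. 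The hypothesis that $\II$ is \emph{contracting} gives exactly $\mu(l\ded r)>0$, which unwinds to $\mu(\Delta(l\setminus\Variables\cup\set r))>0$ or $\mu(\Delta(l\setminus\Variables))>0$ — precisely the condition $\mu(T)>0$ needed to invoke Lemma~\ref{lem:mu}. Lemma~\ref{lem:mu} then yields the dichotomy: either $\nbv{S}>\nbv{(S,T)\sigma}$, which translates directly into $\nbv{\call{C}'}<\nbv{\call{C}}$ since the new constraints introduce no new variables; or $\nbv{S}=\nbv{(S,T)\sigma}$ with $S=S\sigma$ and every $\sigma(x)$ (for $x\in\Var T$) a subterm of some $s_i$ — in this case $\sigma$ does not instantiate any variable of $\call{C}$ (the variables of $\call{C}$ sitting inside the $e_i\subseteq E$ are the $S$-variables, which are untouched, and the rule-variables $\Var T$ are not variables of $\call{C}$), so we land in the first alternative of the statement, and $\Var{\call{C}'}\subseteq\Var{\call{C}}$ follows.

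The main obstacle I anticipate is matching up the sets $S$ and $T$ in Lemma~\ref{lem:mu} with the actual data of each transformation rule, being careful about (i) which terms from $E$ and from the goal $t$ play the role of $S$, (ii) the fact that \textsf{Reduce~2} also adds $r$ to the left-hand sides of $\call{C}_\beta$, which could a priori introduce variables of $r$ into later constraints — but since $l\ded r$ is decreasing, $\Var{r}\subseteq\Var{l\setminus\Variables}\subseteq\Var{E}$ by Lemma~\ref{lem:lem12}(1), so no new variables arise — and (iii) checking that the unification system solved by the transformation rule really is (equivalent to) the system $V=\set{s_i\unif{}t_i}$ of Lemma~\ref{lem:mu} after suitable re-grouping, so that the most general unifiers coincide. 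Once these bookkeeping points are settled, the two alternatives of Lemma~\ref{lem:mu} map onto the two alternatives of the present lemma, and the proof closes.
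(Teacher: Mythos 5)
Your overall strategy --- a case analysis on the three transformation rules, with Lemma~\ref{lem:mu} plus the contracting hypothesis supplying the dichotomy for \textsf{Reduce~1} and \textsf{Reduce~2} --- is the same as the paper's, and for those two rules your account is essentially correct. (The paper does spend an extra displayed computation on the \textsf{Reduce~2} edge case where the witness of decreasingness is a variable $x\in l_x\setminus\Var{l_1,\ldots,l_n}$, so that $r=x$; it resolves this via Lemmas~\ref{lemma:00:eliminationvariables} and~\ref{lem:elimination} before reusing the \textsf{Reduce~1} argument. You flag this only as a bookkeeping point, and your claim $\Var{r}\subseteq\Var{l\setminus\Variables}$ is not automatic in that case.)

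The genuine problem is your treatment of \textsf{Unif}. You propose to apply Lemma~\ref{lem:mu} with $S=\set{t}$, $T=\set{u}$ and to justify $\mu(T)>0$ by saying that the contracting hypothesis ``applies to the relevant $\Delta$-sets''. It does not: Definition~\ref{def:contracting} constrains only the sets $\Delta(l\setminus\Variables\cup\set{r})$ (resp.\ $\Delta(l\setminus\Variables)$) attached to the \emph{deduction rules} of $\call{L}'$, and says nothing about an arbitrary non-variable term $u\in E$ of the constraint system. Indeed $\mu(\set{u})$ can be negative: the trivial choice $T'=T$ (identity unifier, zero rewrite steps) is always among the candidates over which the minimum is taken, so $\mu(\set{u})\le\delta(\set{u})=1-|\Var{u}|$, which is $\le -1$ as soon as $u$ contains two distinct variables. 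Lemma~\ref{lem:mu} is therefore simply not applicable here. Fortunately it is also not needed: $mgu(u,t)$ is a syntactic unifier of two terms all of whose variables already lie in $\Var{\call{C}}$, so either it is the identity on $\Var{\call{C}}$ (and then $\call{C}'=(\call{C}_\alpha,\call{C}_\beta)$ gives $\Var{\call{C}'}\subseteq\Var{\call{C}}$), or its support is a nonempty subset of $\Var{\call{C}}$ whose image involves only the remaining variables, so the number of distinct variables strictly drops. This elementary argument is exactly what the paper uses for \textsf{Unif}; substitute it for your appeal to Lemma~\ref{lem:mu} and the proof closes.
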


\begin{proof}
  Let $\call{C}$ be a constraint system such that a transformation rule can be
  applied on it.  This implies that $\call{C}$ is not in solved form. Let
  $\call{C}=(\call{C}_\alpha,E\rhd{}t,\call{C}_\beta)$ such that
  $\call{C}_\alpha$ is in solved form and $t\notin\Variables$.  We have three
  cases:
  \begin{itemize}
  \item If we apply \emph{Unif} rule on $\call{C}$ then there exists a term
    $e\in E\setminus\Variables$ such that $e$ and $t$ are unifiable and $\sigma$
    is the most general unifier.  \call{C} is then reduced to
    $\call{C}'=(\call{C}_\alpha,\call{C}_\beta)\sigma$. Since we unify two
    subterms of \call{C} in the empty theory, either $\sigma$ does not
    instantiate the variables of $\call{C}$ and then
    $\call{C}'=(\call{C}_\alpha,\call{C}_\beta)$ (which implies that
    $\Var{\call{C}'}\subseteq\Var{\call{C}}$) or $\sigma$ instantiates the
    variables of $\call{C}$ (and then $\nbv{\call{C}'}<\nbv{\call{C}}$).
  \item Assume we apply \textsc{Reduce 1} on \call{C}.  By definition of
    \textsc{Reduce 1} there exists an increasing rule $l_x,l_1,\ldots,l_n\ded r
    \in \mathcal{L}'$, a set of terms $e_1, \ldots, e_n\in E\setminus\Variables$
    such that $\call{S}=\set{r\unif{}t, (e_i\unif{}l_i)_{1\leq i\leq n}}$ has a
    solution. Let $\sigma$ be its  most general unifier. Either
    $\sigma_{|\Var{\call{C}}}=\mathop{Id}$[$\sigma_{|\Var{\call{C}}}=\mathop{Id}$]
    or not.  Let us examine the two cases.
    \begin{description}
    \item[$\sigma_{|\Var{\call{C}}}=\mathop{Id}$.] In this case, $\call{C}$
      is reduced to $\call{C}'=(\call{C}_\alpha, (E\rhd{}x\sigma)_{x\in
        l_x},\call{C}_\beta)$.  For each $l_i\in\set{l_1,\ldots,l_n}$ we have,
      by definition of $\sigma$, $l_i\sigma=e_i$. Also, we have
      $r\sigma=t$. Thus for each $x\in\Var{l_1,\ldots,l_n,r}$ we have
      $\Var{x\sigma}\subseteq \Var{\call{C}}$.  Since
      $l_x\subseteq\Var{l_1,\ldots,l_n,r}$ we deduce that
      $\Var{\call{C}'}\subseteq\Var{\call{C}}$.
    \item[$\sigma_{|\Var{\call{C}}}\neq\mathop{Id}$.]  In this case $\call{C}$
      is reduced to $\call{C}'=(\call{C}_\alpha, (E\rhd{}x)_{x\in
        l_x},\call{C}_\beta)\sigma$.  Since the $e_i$ and $r$ are not variables,
      we can decompose all equations in \call{S} to obtain a set of equations in
      which each equation has a member in $\Delta(l_1,\ldots,l_n,r)$.  Since the
      deduction system is contracting Lemma~\ref{lem:mu} implies
      $\nbv{e_1,\ldots,e_n,t}>\nbv{e_1\sigma,\ldots,e_n\sigma,t\sigma,l_1\sigma,
        \ldots,l_n\sigma,r\sigma}$.  Since $l_x\subseteq\Var{l_1,\ldots,l_n,r}$
      we deduce that $\nbv{\call{C}}>\nbv{\call{C}'}$.
    \end{description}
  \item Let us finally assume \textsc{Reduce 2} is applied.  First let us prove
    we can assume $l_x\cup\Var{r}\subseteq\Var{\set{l_1,\ldots,l_n}}$. Since the
    rule is decreasing there exists a term $l\in l_x\cup\set{l_1,\ldots,l_n}$
    such that $\Var{r}\subseteq \Var{l}$. Thus it suffices to prove
    $l_x\subseteq\Var{\set{l_1,\ldots,l_n}}$. By definition of the
    \textsc{Reduce 2} rule, the constraint system \call{C} is transformed into
    $$
    \begin{array}{cr}
      &(\call{C}_\alpha,(E\rhd y)_{y\in l_x\setminus\set{x}},E\rhd
      x,E\cup\set{x}\rhd t, \call{C}'_\beta)\sigma\\
      = &\call{C}_\alpha\sigma,(E\sigma\rhd y\sigma)_{y\in l_x\setminus\set{x}},E\sigma\rhd
      x,E\sigma\cup\set{x}\rhd t\sigma, \call{C}'_\beta\sigma\\
      \equiv & \call{C}_\alpha\sigma,(E\sigma\rhd y\sigma)_{y\in l_x\setminus\set{x}},E\sigma\rhd
      x,E\sigma\rhd t\sigma, \call{C}_\beta\sigma\\
      \equiv & \call{C}_\alpha\sigma,(E\sigma\rhd y\sigma)_{y\in
        l_x\setminus\set{x}},E\sigma\rhd  t\sigma, \call{C}_\beta\sigma\\
    \end{array}
    $$
    where the first $\equiv$ is by Lemma~\ref{lemma:00:eliminationvariables},
    and the second one by Lemma~\ref{lem:elimination}. Thus the resulting system
    is equivalent for solutions to one in which
    $l_x\subseteq\Var{l_1,\ldots,l_n}$. We can then apply the same reasoning as
    above. 
  \end{itemize}  
\end{proof}

We may now conclude by applying the previous results and again K\"onig's Lemma.

\begin{theo}\label{theo:general} 
  Let $\call{I}_0=\intrus{\call{G}}{\call{L}_0}{\call{H}}$ be a deduction system such that the saturation of $\call{L}_0$ terminates ,
  and the resulting deduction system is contracting. Then the
  $\call{I}_0$-reachability problem is decidable.
\end{theo}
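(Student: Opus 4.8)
The plan is to combine the three main ingredients already established: Lemmas~\ref{lemma:00:lem2} and~\ref{lemma:00:lem3} (completeness and correctness of Step~1), Lemmas~\ref{lemma:00:completeness} and~\ref{lemma:00:correcteness} (the progress and correctness of the transformation rules of Fig.~\ref{fig:resolution}), Lemma~\ref{lem:terminaison-ID} (only finitely many constraint systems are reachable without instantiating variables) and Lemma~\ref{lem:terminaison-NOT-ID} (every transformation step either keeps $\Var{\call{C}}$ fixed and does not instantiate, or strictly decreases $\nbv{\call{C}}$). First I would reduce the $\call{I}_0$-reachability problem to the $\II$-reachability problem: given an input $\call{I}_0$-constraint system $\call{C}^0$, Step~1 produces finitely many $\II$-constraint systems, and by Lemmas~\ref{lemma:00:lem2} and~\ref{lemma:00:lem3}, $\call{C}^0$ is $\call{I}_0$-satisfiable iff one of these is $\II$-satisfiable. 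Since the saturation terminates by hypothesis, $\call{L}'$ is finite, so this step is effective.

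Next I would argue termination of the non-deterministic application of the rules of Fig.~\ref{fig:resolution} to a single $\II$-constraint system $\call{C}$. Consider the tree of constraint systems reachable from $\call{C}$: a node is a constraint system, and its children are the systems obtainable by one application of \textsf{Unif}, \textsf{Reduce~1} or \textsf{Reduce~2}. Since $\call{L}'$ is finite and each rule picks finitely many terms from the (finite) left-hand side $E$ of the selected constraint and a most general unifier in the empty theory, every node has finitely many children, so the tree is finitely branching. For the branches: by Lemma~\ref{lem:terminaison-NOT-ID}, along any branch each step either does not instantiate the variables of the current system and keeps the variable set included in the previous one, or strictly decreases $\nbv{\cdot}$. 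The quantity $\nbv{\cdot}$ is a nonnegative integer that never increases, so it can decrease only finitely often; between two consecutive decreases (or after the last one) we have a maximal run of steps that do not instantiate variables, and by Lemma~\ref{lem:terminaison-ID} such a run visits only finitely many distinct constraint systems — hence is finite once we identify syntactically equal systems (and a derivation revisiting a system need not be continued, as it yields nothing new). Therefore every branch is finite, and by K\"onig's Lemma the whole tree is finite.

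Finally I would assemble the decision procedure: run Step~1 to get the finite family of $\II$-constraint systems; for each, explore the (finite) tree of transformation derivations; answer \textsc{Sat} iff some leaf of some tree is in solved form. Correctness follows from Lemma~\ref{lemma:00:correcteness} (a solved form reached from $\call{C}$ implies $\call{C}$ is satisfiable, and a solved form always has a solution by the cited result of~\cite{AmadioL00}); completeness follows from Lemma~\ref{lemma:00:completeness} (a satisfiable system not in solved form can always be reduced to a satisfiable one), so a satisfiable $\call{C}$ admits a derivation all of whose systems are satisfiable, which must terminate in a solved form since the only systems with no applicable rule are the solved ones; and termination of the search is exactly the tree-finiteness just shown. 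The main obstacle is the branch-finiteness argument: one must carefully interleave the two regimes of Lemma~\ref{lem:terminaison-NOT-ID} — noticing that the ``variable-preserving, non-instantiating'' segments are precisely where Lemma~\ref{lem:terminaison-ID} applies, and that the contracting hypothesis is what forces the alternative to strictly drop $\nbv{\cdot}$ — and to be careful that identifying syntactically equal constraint systems (to invoke Lemma~\ref{lem:terminaison-ID}) does not lose any reachable solved form.
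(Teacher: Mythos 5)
Your proposal is correct and follows essentially the same route as the paper: the paper's proof likewise reduces to showing termination of the rules of Fig.~\ref{fig:resolution} by combining Lemma~\ref{lem:terminaison-NOT-ID} (the variable count $\nbv{\cdot}$ can strictly decrease only finitely often, and otherwise no instantiation occurs) with Lemma~\ref{lem:terminaison-ID} (finitely many systems reachable without instantiation) and K\"onig's Lemma, the soundness and completeness being delegated to Lemmas~\ref{lemma:00:lem2}, \ref{lemma:00:lem3}, \ref{lemma:00:completeness} and~\ref{lemma:00:correcteness}. Your write-up is in fact somewhat more explicit than the paper's about the finite-branching argument and about why identifying syntactically equal systems is harmless, but these are elaborations of the same proof rather than a different one.
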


\begin{proof}
  It suffices to prove that the application of rules of
  Fig.~\ref{fig:resolution} terminates. Assume there exists an \II-constraint
  system \call{C} and an infinite sequence of transformations starting from
  \call{C}. Let $\call{C}_1,\ldots,\call{C}_n,\ldots$ be the resulting sequence
  of constraint systems. By Lemma~\ref{lem:terminaison-NOT-ID}, at each step
  $\nbv{\call{C}_{i}}\ge\nbv{\call{C}_{i+1}}$ and if there is equality, then the
  substitution applied on $\call{C}_i$ is the identity (does not instantiate the variables of \call{C}). Since we must have a
  positive number of variables, there is only a finite number of steps where the
  substitution is not the identity. Let $\call{C}_n$ be the resulting constraint
  system. Since all subsequent transformation do not instantiate the variables
  of $\call{C}_n$ and its successor, the sequence has only a finite number of
  different constraint systems.

  Since $\call{L}'$ is finite, each constraint system has only a finite number
  of successors. Thus by K\"onig Lemma there is only a finite number of
  different constraint systems. 
\end{proof}

\section{Some relevant equational theories}
We give here some examples of well-known equational theories where the saturation applied on the corresponding initial set of  deduction rules terminates.

\subsection{Dolev-Yao theory with explicit destructors}
The Dolev-Yao theory 
with explicit destructors is the classical Dolev-Yao model with explicit destructors such as decryption and projections.
This theory is given by the following set of equations:

$ \call{H}_{DV}=\left\lbrace\begin{array}{ll}
    \mbox {$Dec_s(Enc_s(x,y),y)=x$,}\\
    \mbox {$Enc_s(Dec_s(x,y),y)=x$,}  \\
    \mbox {$Dec_a(Enc_a(x,PK(y)),SK(y))=x$,}\\
     \mbox {$Enc_a(Dec_a(x,SK(y)),PK(y))=x$,}\\
     \mbox {$\pi_1(\langle x,y\rangle)=x,$}\\
    \mbox {$\pi_2(\langle x, y \rangle)=y.$}\\  
\end{array}\right.$

By orienting  equations of $\call{H}_{DV}$ from left to right, we obtain a rewrite system $\call{R}_{DV}$ generating $\call{H}_{DV}$. We remark that $\call{R}_{DV}$ is convergent and $\call{H}_{DV}$ has finite variant property.

The initial set of deduction rules  is given by the following set of rules:

$\call{L}_0=\left\lbrace\begin{array}{ll}
\mbox{$x,y\ded \langle x, y \rangle$,}\\
\mbox{$x\ded \pi_1(x)$,}\\
\mbox{$x\ded \pi_2(x)$,}\\
\mbox{$x,y\ded Enc_a(x,y),$}\\
\mbox{$x,y\ded Dec_a(x,y),$}\\
\mbox{$x,y\ded Enc_s(x,y),$}\\
\mbox{$x,y\ded Dec_s(x,y).$}
\end{array}\right.$

The saturatation (modulo the simplification introduced after the lemma \ref{lem:elimination}) outputs the following set of deduction rules:

$\call{L}'=\call{L}_0\cup \left\lbrace\begin{array}{ll}
\mbox{$\langle x, y\rangle\ded x,$}\\
\mbox{$\langle x, y\rangle\ded y,$}\\
\mbox{$Dec_a(x,SK(y)),PK(y)\ded x,$}\\
\mbox{$Enc_a(x,PK(y)),SK(y)\ded x,$}\\
\mbox{$Des_s(x,y),y\ded x,$}\\
\mbox{$Enc_s(x,y),y\ded x,$}\\
\mbox{$x,PK(y),SK(y)\ded x.$}\\
\end{array}\right.$

\subsection{Digital signature theory with duplicate signature key selection  property}

The theory of digital signature with duplicate signature  key selection  property is defined in \cite{ChevalierK07} and is given by the following set of equations:

$ \call{H}_{DSKS}=\left\lbrace\begin{array}{ll}
    \mbox {$Ver(x,Sig(x,SK(y)),PK(y))=1$,}\\
    \mbox {$Ver(x,Sig(x,SK'(y_1,y_2)),PK'(y_1,y_2))=1$,}\\
    \mbox {$Sig(x,SK'(PK(y),Sig(x,SK(y))))=Sig(x,SK(y))$.}\\
 \end{array}\right.$

The equational theory $\call{H}_{DSKS}$ is generated by:

$ \call{R}_{DSKS}=\left\lbrace\begin{array}{ll}
    \mbox {$Ver(x,Sig(x,SK(y)),PK(y)) \rightarrow 1$,}\\
    \mbox {$Ver(x,Sig(x,SK'(y_1,y_2)),PK'(y_1,y_2)) \rightarrow 1$,}\\
    \mbox {$Ver(x,Sig(x,SK(y)), PK'(PK(y),Sig(x,SK(y)))) \rightarrow 1$,}\\
    \mbox {$Sig(x,SK'(PK(y),Sig(x,SK(y)))) \rightarrow Sig(x,SK(y))$.}\\
 \end{array}\right.$

 We remark that $\call{R}_{DSKS}$ is convergent and $\call{H}_{DSKS}$ has the finite variant property.

The initial set of deduction rules  is given by the following set of rules:

$ 
\call{L}_0=\left\lbrace\begin{array}{ll}
\mbox{$x,y\ded Sig(x,y),$}\\
\mbox{$x,y,z\ded Ver(x,y,z),$}\\
\mbox{$x,y\ded SK'(x,y),$}\\
\mbox{$x,y\ded PK'(x,y),$}\\
\mbox{$\emptyset\ded 0,$}\\
\mbox{$\emptyset\ded 1.$}\\
\end{array}\right.
$

The saturatation (modulo the simplification introduced after the lemma \ref{lem:elimination}) outputs the following set of deduction rules:

$\call{L}'=\call{L}_0\cup\left\lbrace\begin{array}{ll} 
\mbox{$x,Sig(x,SK(y)),Pk(y)\ded 1$,}\\
\mbox{$x,Sig(x,SK'(y_1,y_2)),PK'(y_1,y_2)\ded 1$,}\\
\mbox{$x,Sig(x,SK(y)),PK'(PK(y),Sig(x,SK(y))) \ded 1$,}\\
\mbox{$x,SK'(PK(y),Sig(x,SK(y)))\ded Sig(x,SK(y))$,}\\ 
\mbox{$SK(y), PK(y)\ded 1$,}\\
\mbox{$SK'(y_1,y_2),PK'(y_1,y_2)\ded 1$,}\\
\mbox{$x,SK(y),PK'(PK(y),Sig(x,SK(y)))\ded 1$,}\\
\mbox{$x, PK(y),Sig(x,SK(y))\ded Sig(x,SK(y))$,}\\
\mbox{$x,PK(y),SK(y)\ded Sig(x,SK(y))$,}\\
\mbox{$y_1,y_2,PK'(y_1,y_2)\ded 1$,}\\ 
\mbox{$x,y_1,y_2,Sig(x,SK'(y_1,y_2))\ded 1$,}\\
\mbox{$y_1,y_2,SK'(y_1,y_2)\ded 1$,}\\
\mbox{$x,PK(y),Sig(x,SK(y))\ded 1$,}\\
\mbox{$x,PK(y),SK(y),Sig(x,SK(y))\ded 1$,}\\
\mbox{$x,SK(y),PK(y),PK'(PK(y),Sig(x,SK(y)))\ded 1$,}\\
\mbox{$x,SK(y),PK(y)\ded Sig(x,SK(y))$.}\\
\end{array}\right.$

\section{Decidability of ground reachability problems  for the blind signature theory}\label{sec:BS}
%
%

Blind signature was introduced  in \cite{KremerR05}, it is defined by the signature $\call{G}=\set{Sig,Ver,Bl,Ubl, PK, SK}$ which satisfies the  following set of equations:

 $
 \call{H}=\left\lbrace\begin{array}{ll}
 \mbox{$Ver(Sig(x,SK(y)),PK(y))=x,$}\\
 \mbox{$Ubl(Bl(x,y),y)=x,$}\\
 \mbox{$Ubl(Sig(Bl(x,y),SK(z)),y)=Sig(x,SK(z)).$}
 \end{array}\right.
 $

Let $\call{R}$ be the set of rules obtained by orienting  equations of $\call{H}$ from left to  right,  $\call{R}$
is convergent and
it is obvious that any basic narrowing derivation \cite{Hullot80} issuing from any of the right hand side term of the rules of $\call{R}$ terminates.
This implies that any
narrowing derivation (and in particular basic narrowing derivation) issuing from any term terminates \cite{Hullot80} and thus 
$\call{H}$
has finite variant property \cite{Comon-LundhD05}.

The initial  deduction system  is given by the tuple $\call{I}_0=\intrus{\call{G}}{\call{L}_0}{\call{H}}$ and  we have:

 $
 \call{L}_0=\left\lbrace\begin{array}{ll}
 \mbox{$1:~ x,y\ded Sig(x,y),$}\\
 \mbox{$2:~ x,y\ded Ver(x,y),$}\\
 \mbox{$3:~ x,y\ded Bl(x,y),$}\\
\mbox{$4:~ x,y\ded Ubl(x,y).$}\\
 \end{array}\right.
 $

The first step of saturation outputs the following set of deduction rules:

$
\call{L}=\call{L}_0\cup
\left\lbrace\begin{array}{ll}
\mbox{$5:~ Sig(x,SK(y)),PK(y)\ded x,$}\\
\mbox{$6:~ Bl(x,y),y\ded x,$}\\
\mbox{$7:~ Sig(Bl(x,y),SK(z)),y\ded Sig(x,SK(z)).$}\\
\end{array}\right.
$

We define a new deduction system $\call{I}=\intrus{\call{G}}{\call{L}}{\emptyset}$
and by  lemma \ref{lemma:00:saturation1}, we have: $t\in\rhclos{E}{\call{I}_0}$ iff $t\in\rhclos{E}{\call{I}}$ for every set of ground terms $E$ (resp. a ground term $t$) in normal form.
From now we remark that the equational theory employed is the empty one.

Now, let us apply the second step of saturation. 
The closure applied on rules $1$ and $5$ outputs the rule $8:~ x,SK(y),PK(y)\ded x$, the closure applied on rules $3$ and  $6$ outputs the rule $9:~ x,y\ded x$ which will be deleted by the simplification step introduced above as consequence of lemma \ref{lem:elimination}.
The closure applied on rules $1$ and $7$ outputs the rule $10:~ y,Bl(x,y),SK(z)\ded Sig(x,SK(z))$.

We prove in the next lemma that the last rule is redundant when the  employed equational theory is the empty one.
\begin{lemma}\label{lemma:elimination-redundant-rules}
Let $\call{L}'_1=\call{L}\cup\set{x,SK(y),PK(y)\ded x}\cup\set{y,BL(x,y),SK(z)\ded Sig(x,SK(z))}$ and let $\call{L}'_2=\call{L}'_1\setminus\set{y,BL(x,y),SK(z)\ded Sig(x,SK(z))}$.
Suppose that the employed equational theory is the empty one. 
For any two sets of ground terms in normal form $E$ and $F$ we have:
$E\ded^*_{\call{L}'_2}F$ iff $E\ded^*_{\call{L}'_1}F$.  
\end{lemma}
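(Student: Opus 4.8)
The plan is to show that rule $10$ (written $y, Bl(x,y), SK(z) \ded Sig(x,SK(z))$) can always be simulated by a short sequence of rules from $\call{L}'_2$, so that its presence adds nothing to the closure. The only nontrivial direction is to show $E\ded^*_{\call{L}'_1}F$ implies $E\ded^*_{\call{L}'_2}F$; the converse is immediate since $\call{L}'_2\subseteq\call{L}'_1$. By an easy induction on the length of the derivation it suffices to handle a single application of rule $10$: suppose $E$ contains ground terms $e_1, e_2, e_3$ in normal form with $e_1 = y\tau$, $e_2 = Bl(x,y)\tau$, $e_3 = SK(z)\tau$ for some ground substitution $\tau$, and the step produces $Sig(x,SK(z))\tau$. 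I would then exhibit a derivation from $E$ using only $\call{L}'_2$-rules that produces the same term $Sig(x\tau, SK(z\tau))$.

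The key observation is that $e_2 = Bl(x\tau, y\tau)$ is already in normal form and has $Bl$ at the root, and $e_1 = y\tau$ is present in $E$. Hence rule $6$ ($Bl(x,y), y \ded x$) applies to $e_2$ and $e_1$, yielding $x\tau$ — that is, the ``unblinding'' of the blinded message is directly deducible. Once $x\tau$ and $e_3 = SK(z\tau)$ are both available, rule $1$ ($x,y \ded Sig(x,y)$) applied to $x\tau$ and $SK(z\tau)$ produces $Sig(x\tau, SK(z\tau))$, which is exactly the term rule $10$ would have produced (note this term is in normal form, since no rule of $\call{R}$ has a left-hand side matching $Sig(t, SK(s))$ with $t$ in normal form). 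So the two-step $\call{L}'_2$-derivation using rules $6$ and $1$ replaces the single application of rule $10$, and the resulting set of terms contains $F$; adding these (finitely many) intermediate terms to the knowledge set only enlarges it, which is harmless for reachability. Replacing every occurrence of rule $10$ in the original $\call{L}'_1$-derivation in this way yields an $\call{L}'_2$-derivation from $E$ whose final knowledge set includes $F$, and one checks that this is enough to conclude $E\ded^*_{\call{L}'_2}F'$ for some $F'\supseteq F$, hence (restricting attention to the goal term, or observing $F = E\cup\{Sig(x\tau,SK(z\tau))\}$ for the case of a single step) that $E\ded^*_{\call{L}'_2}F$.

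I expect the main (minor) obstacle to be bookkeeping rather than mathematics: one must be careful that the simulation is performed relative to the empty equational theory, so that pattern-matching of the left-hand sides $Bl(x,y)$ and $y$ against the ground terms $e_2, e_1$ is genuinely syntactic, and that $e_2$ being in normal form guarantees its root symbol is literally $Bl$ (it cannot be a redex, since the only rule whose left-hand side could rewrite a $Bl$-rooted term is $Ubl(Bl(x,y),y)=x$, which requires a $Ubl$ at the root). A secondary point to state cleanly is why the extra terms produced along the way do not interfere: since we only ever add terms to the intruder's knowledge and the deduction relation is monotone in the left-hand set, enlarging intermediate knowledge sets preserves all subsequent deductions. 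With these remarks the induction on derivation length goes through routinely.
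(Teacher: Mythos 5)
Your proposal is correct and follows essentially the same route as the paper: the nontrivial direction is handled by replacing each application of rule $10$ with an application of $Bl(x,y),y\ded x$ followed by $x,y\ded Sig(x,y)$. If anything, you are slightly more careful than the paper about the bookkeeping (the intermediate term $x\tau$ enlarges the knowledge set, which the paper's proof silently identifies with $E_{i+1}$), and your normal-form remarks about $Bl$-rooted and $Sig$-rooted terms make the syntactic matching explicit.
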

\begin{proof}
Let $E$ and $F$ be two sets of normal ground terms.
The direct implication is obvious, let us prove the second one.
Suppose that $E\ded^*_{\call{L}'_1}F$ and let us prove that $E\ded^*_{\call{L}'_2}F$.
Suppose that in the $\call{L}'_1$-derivation $D$ starting from $E$ to $F$  there is some steps  where the applied rule is in $\call{L}'_1\setminus\call{L}'_2$ that is, by definition of $\call{L}'_1$ and $\call{L}'_2$, the applied rule is $y,Bl(x,y),Sk(z)\ded Sig(x,SK(z))$.

Let $i$ be the first step in the derivation where the applied rule is $y,Bl(x,y),Sk(z)\ded Sig(x,SK(z))$, we prove that this step can be replaced by other steps where the respectives applied rules are in $\call{L}'_2$.
$D:~ E=E_0\ded\ldots\ded E_i\ded _{y,Bl(x,y),Sk(z)\ded Sig(x,SK(z))}E_{i+1}\ldots \ded F$.
There is a ground substitution $\sigma$
in normal form such that $\set{y\sigma,Bl(x,y)\sigma,SK(z)\sigma}\subseteq E_i$ and $E_{i+1}=E_i\cup Sig(x\sigma,SK(z)\sigma)$.
Thus, the rule $Bl(x,y),y\ded x\in\call{L}'_2$ with the substitution $\sigma$ can be applied first on $E_i$ and outputs $E_{i1}=E_i\cup x\sigma$, then the rule $x,y\ded Sig(x,y)\in\call{L}'_2$ also with the substitution $\sigma$ can be applied on $E_{i1}$ and outputs $E_{i1}\cup{Sig(x\sigma,SK(z\sigma))}=E_{i+1}$.
We deduce that each application of the rule $y,Bl(x,y),Sk(z)\ded Sig(x,SK(z))$ in $D$  can be replaced by the application of two rules in $\call{L}'_2$.
We conclude that 
 $E\ded^*_{\call{L}'_1}F$ implies $E\ded^*_{\call{L}'_2}F$.
\end{proof}

\paragraph{Remarks.} 
\begin{description}

\item[Enforcing the termination of the Saturation.] 
The application of the \textit{Saturation} algorithm as is described in section \ref{sec:sat}
does not terminate. In fact, the rule $10$ is an increasing one and  \textit{closure} rule can be applied on rules $10$ and $7$. The application of the closure  outputs the rule $11:~ y,y',Bl(Bl(x,y),y'),SK(z)\ded Sig(x,SK(z))$
which is increasing.  We remark that \textit{closure} rule can be applied  on the rules $11$ and  $7$ and this application outputs a new  increasing rule. In addition,   \textit{closure} rule can be applied again on the new obtained rule and the rule $7$. We remark also that each such application of closure rule outputs a new increasing rule where the size of the terms in the left hand side is increased and closure rule can be applied again on this new obtained rule and the rule $7$.  
This implies that we have an infinite sequence of application of \textit{closure} rule.
We remark that this infinite sequence is due to the presence of the rule $10$.

As a consequence from the previous lemma (where we prove that the rule $y,Bl(x,y),SK(z)\ded Sig(x,SK(z))$ is redundant), we can delete this rule from the system immediately after its creation.
This deletion enforces the termination of the \textit{Saturation}.
\item[Saturated deduction system.]
Let $\II=\intrus{\call{G}}{\call{L}'}{\emptyset}$ be the saturated deduction system, we have:

$\call{L}'=\call{L}_0\cup\left\lbrace\begin{array}{ll}
\mbox{$Sig(x,SK(y)),PK(y)\ded x,$}\\
\mbox{$Bl(x,y),y\ded x,$}\\
\mbox{$Sig(Bl(x,y),SK(z)),y\ded Sig(x,SK(z)),$}\\
\mbox{$x,SK(y),PK(y)\ded x.$}\\
\end{array}\right.$

In $\call{L}'$, we note that only $\call{L}_0$-rules are increasing and the others are decreasing (by definition of \textit{increasing} and \textit{decreasing} rules).
\end{description} 

We recall that a derivation $D$ starting from $E$ of goal $t$ is
 \emph{well-formed} if for all rules $l\ded r$ applied with substitution
 $\sigma$, for all $u\in l\setminus\Variables$ we have either $u\sigma\in E$ or
 $u\sigma$ was constructed by a former decreasing rule.

In the next lemma, we prove that the  system $\call{L}'$ satisfies the following lemma.
\begin{lemma}
 Let $E$ (resp. $t$) be a set of terms (resp. a term) in normal form such that
  $t\in \rhclos{E}{\II}$. For all $\call{I}'$-derivations $D$ starting from $E$
  of goal $t$ we have either $D$ is well-formed or there is another
  $\call{I}'$-derivation $D'$ starting from $E$ of goal $t$ such that
  $\Cons{D}\subseteq\Cons{D'}$ and $D'$ is well-formed.
\end{lemma}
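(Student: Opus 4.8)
The plan is to follow the proof of Lemma~\ref{lemma:00:saturation2} almost verbatim; the only genuine difference is that for the blind signature system two instances of \emph{Closure} --- the rule $x,y\ded x$ and the rule $y,Bl(x,y),SK(z)\ded Sig(x,SK(z))$ (rules $9$ and $10$ above) --- are \emph{not} members of $\call{L}'$, so a non-well-formed step can no longer always be mended by swapping in a single $\call{L}'$-rule. Instead it must be mended by deleting a (now redundant) step or by inserting two $\call{L}'$-steps, and it is precisely this second possibility that creates new constructed terms, which is why the conclusion reads $\Cons{D}\subseteq\Cons{D'}$ rather than $\Cons{D}=\Cons{D'}$.

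First I would record the relevant structure of $\call{L}'$. Its increasing rules are exactly the $\call{L}_0$-rules $x,y\ded f(x,y)$ with $f\in\set{Sig,Ver,Bl,Ubl}$, all of which have a purely variable left-hand side; hence an application of an increasing rule is never a witness of non-well-formedness. So a witness is a step applying a decreasing rule with a non-variable left member $u$ such that $u\sigma\notin E$, $u\sigma$ is not produced by an earlier decreasing step, yet $u\sigma$ is produced by an earlier increasing step --- necessarily by some $x,y\ded f(x,y)$. Inspecting the four decreasing rules of $\call{L}'$: $x,SK(y),PK(y)\ded x$ has only the non-variable members $SK(\cdot)$ and $PK(\cdot)$, which are never heads of an increasing rule, so it is never a witness; for $Sig(x,SK(y)),PK(y)\ded x$ the witness must be $u\sigma=Sig(p,SK(k))$ built by $x,y\ded Sig(x,y)$, so $p$ and $SK(k)$ were already derived; for $Bl(x,y),y\ded x$ it must be $u\sigma=Bl(p,q)$ built by $x,y\ded Bl(x,y)$, so $p,q$ were already derived; and for $Sig(Bl(x,y),SK(z)),y\ded Sig(x,SK(z))$ it must be $u\sigma=Sig(Bl(p,q),SK(k))$ built by $x,y\ded Sig(x,y)$, so $Bl(p,q)$ and $SK(k)$ --- but not necessarily $p$ --- were already derived.

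Next I would attach to every $\II$-derivation $d$ the multiset $\M{d}$ of naturals containing the size $\sizeof{u\sigma}$ for each witnessing step of $d$ and each non-variable left member $u$ realising the witness, ordered by the (well-founded) multiset extension of $<$ on $\mathbb{N}$. Given $D$ with $t\in\rhclos{E}{\II}$: if $D$ is well-formed we are done; otherwise, among all $\II$-derivations $d'$ from $E$ of goal $t$ with $\Cons{d'}\supseteq\Cons{D}$ --- a non-empty family, as $D$ lies in it --- I take one, $d$, of minimal $\M{d}$, and claim it is well-formed, so $D'=d$ works. If $d$ were not well-formed, pick a witnessing step using decreasing rule $l\ded r$, substitution $\sigma$ and witness $u$; following the case analysis there are three mending operations. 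If $l\ded r=Sig(x,SK(y)),PK(y)\ded x$: replace this step by the $\call{L}'$-rule $x,SK(y),PK(y)\ded x$ with the same output $p$ (legitimate, since $p,SK(k),PK(k)$ are all derived), which is not a witness. If $l\ded r=Bl(x,y),y\ded x$: the output $p$ is already derived, so delete the step --- and if it was the last step, re-deduce $t=p$ by replaying the step that first produced $p$, which is well-formed since it precedes the first witnessing step (the boundary sub-case $p\in E$ being immediate). If $l\ded r=Sig(Bl(x,y),SK(z)),y\ded Sig(x,SK(z))$: replace this step by the two $\call{L}'$-steps $Bl(x,y),y\ded x$ (producing $p$; legitimate, since $Bl(p,q)$ and $q$ are derived) followed by $x,y\ded Sig(x,y)$ (producing $Sig(p,SK(k))$, i.e.\ the original output), noting that $p$ may here be genuinely new. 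In all three cases the result $d'$ is an $\II$-derivation from $E$ of goal $t$ with $\Cons{d}\subseteq\Cons{d'}$, hence also $\Cons{D}\subseteq\Cons{d'}$.

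It remains to check $\M{d'}<\M{d}$, which is the delicate part and the main obstacle. In each mending the witness of size $N:=\sizeof{u\sigma}$ at the chosen step disappears from the multiset, and the only possible new contributions are strictly smaller than $N$: in the third case the size of $Bl(p,q)$ (a strict subterm of $u\sigma$), and only if $Bl(p,q)$ itself witnesses; and, in the second and third cases, the fact that a term that used to be produced by a decreasing step ($p$, resp.\ $Sig(p,SK(k))$) is now produced by an increasing step may turn some \emph{later} uses of that term into witnesses --- but that term is again a strict subterm of $u\sigma$ ($p$), or strictly smaller than $u\sigma$ ($Sig(p,SK(k))$ versus $Sig(Bl(p,q),SK(k))$). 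Since the multiset ordering permits removing one copy of $N$ and adding arbitrarily many values all $<N$, we obtain $\M{d'}<\M{d}$, contradicting minimality; hence $d$ is well-formed. Thus the whole difficulty reduces to this bookkeeping: verifying that the side effects of the two-step replacement (a new intermediate term, a changed ``producer'' of an existing term) only ever create witnesses strictly smaller than the one removed, so that the single size-based multiset measure suffices, together with the routine handling of the boundary case in which the mended step is the last step of the derivation.
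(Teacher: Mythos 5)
Your proof is correct and follows essentially the same route as the paper's: the same case analysis on the three decreasing rules $Sig(x,SK(y)),PK(y)\ded x$, $Bl(x,y),y\ded x$ and $Sig(Bl(x,y),SK(z)),y\ded Sig(x,SK(z))$, with the same mendings (the first two bad applications are redundant because the increasing producer of the instance already supplied the output; the third is replaced by one or two $\call{L}'$-steps, which is exactly what forces $\Cons{D}\subseteq\Cons{D'}$ instead of equality). The only difference is presentational: you organise the iteration with an explicit well-founded multiset measure on sizes of witnessing instances, where the paper argues in two informal passes, and your bookkeeping that all newly created witnesses are strictly smaller actually makes the termination of the repair process more transparent than in the original.
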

\begin{proof}
 We have $t\in\rhclos{E}{\II}$ implies that the set $\Omega(E,t)$ of
   \II-derivations starting from $E$ of goal $t$ is not empty.  Let
   $D\in\Omega(E,t)$, $D:E=E_0\ded E_1\ded \ldots \ded E_{n-1},t$, we denote
   $l_i\ded r_i$ the rule applied at step $i$ with the substitution $\sigma_i$: this rule is well-applied if for all $u\in l_i\setminus\Variables$, we have either $u\sigma\in E$ or $u\sigma$ was obtained by a former decreasing  rule, otherwise it  is bad-applied.

Suppose that $D$ is not well-formed then  there is at least one step in the derivation $D$ where the applied rule is bad-applied.   At 
each such step, one the following rule is applied: 

$
\left\lbrace\begin{array}{ll}
\mbox{$Sig(x,SK(y)),PK(y)\ded x,$}\\
\mbox{$Bl(x,y),y\ded x,$}\\
\mbox{$Sig(Bl(x,y),SK(z)),y\ded Sig(x,SK(z)),$}\\
\end{array}\right.
$ 

We note that the rule $x,SK(y),PK(y)\ded x$ can not be applied at such step because the rules $x\ded SK(x)$ and $x\ded PK(x)$ are not in $\call{L}'$.

Let us prove that each  application of the first  (resp. the second) rule  in $D$ such that there is a non variable term in left hand side of the rule where the instance is obtained by a former increasing rule can be deleted from $D$ without altering $\Cons{D}$.
Let $i$ be  the first step where the first  (resp. the second) rule  is bad applied, that is  there is a non variable term in left hand side where the instance is obtained by a former increasing rule. There is only one non variable term in the left hand side of the first (resp. the second) rule which can be obtained by a former increasing rule, this term is   $Sig(x,SK(y))$ (resp. $Bl(x,y)$).
Since the instance of this term, $Sig(x,SK(y))\sigma$ (resp. $Bl(x,y)\sigma$), is obtained by a former increasing rule  this last rule  will be $x,y\ded Sig(x,y)$ (resp. $x,y\ded Bl(x,y)$) and let $h$ ($h<i$) be the step where this rule is applied.
We deduce that $\set{x\sigma,SK(y\sigma)}  ~ (resp. ~ \set{x\sigma,y\sigma})\subseteq E_h$ and then   the rule applied at step  $i$ (which adds $x\sigma$) does not add a new term and the step $i$ can be deleted without modifying in $\Cons{D}$. Let $D'$ be the obtained derivation, we have $\Cons{D}'=\Cons{D}$.
We deduce that every step in $D'$ where the rule $Sig(x,SK(y)),PK(y)\ded x$ (resp. the rule $Bl(x,y),y\ded x$) is bad applied   can be deleted without altering in the  trace of $D'$ and let  $d$ be the obtained derivation.
We note that  every application of the rule $Sig(x,SK(y)),PK(y)\ded x$ (resp. the rule $Bl(x,y),y\ded x$) in $d$ is a well-application.

Suppose that $d$ is not well-formed then there is at least one step where the  rule applied  is bad-applied. Let $i$ be the first such step then the rule applied is  $Sig(Bl(x,y),SK(z)),y\ded Sig(x,SK(z))$ and $Sig(Bl(x,y),SK(z))\sigma$ is obtained by a former increasing rule, $x,y\ded Sig(x,y)$. Let $h$, ($h<i$), be the step where this increasing rule  is applied.
We deduce that $\set{Bl(x,y)\sigma,SK(z)\sigma}\subseteq E_h$. If $x\sigma\notin E_i$ then the rule applied at step $i$ in $d$ can be replaced  first by  the application of $Bl(x,y),y\ded x$ then the application of $x,y\ded Sig(x,y)$.
Let $d'$ be the obtained derivation,  $d':~ E\ded \ldots \ded E_i\ded_{Bl(x,y),y\ded x} E_i,x\sigma\ded _{x,y\ded Sig(x,y)} E_i,x\sigma, Sig(x,SK(z))\sigma\ded \ldots \ded E_{n-1},t$.
By above and since  $x\sigma\notin E_i$ we have either $Bl(x,y)\sigma\in E$ or $Bl(x,y)\sigma$ is obtained by a former decreasing rule.

If $x\sigma\in E_i$  then the rule applied at step $i$ in $d$ can be replaced by  the application of $x,y\ded Sig(x,y)$.
Let $d''$ be the obtained derivation,  $d'':~ E\ded \ldots \ded E_i\ded _{x,y\ded Sig(x,y)} E_i,Sig(x,SK(z))\sigma\ded \ldots\ded E_{n-1},t$.

This implies that each bad application of the rule $Sig(Bl(x,y),SK(z)),y\ded Sig(x,SK(z))$ can be replaced by one (or  two) well-applied rules. We deduce that if the derivation $D$ is not well-formed there is another well-formed derivation $D''$ starting from $E$ of goal $t$ such that $\Cons{D}\subseteq\Cons{D''}$. 
\end{proof}

We remark that the above lemma is similar to the lemma \ref{lemma:00:saturation2}.

In order to solve $\call{I}_0$-ground reachability problems (definition \ref{def:groud.constraint.system}), we apply the algorithm defined in section \ref{sec:decision}.
Since the saturation applied on $\call{L}_0$ terminates, by lemmas (\ref{lemma:00:lem2}, \ref{lemma:00:lem3}, \ref{lemma:00:completeness}, \ref{lemma:00:correcteness} and \ref{lem:terminaison-ID}) we deduce the following corollary:   
\begin{corollary}
The $\call{I}_0$-ground reachability problem is decidable.
\end{corollary}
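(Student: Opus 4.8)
The plan is to obtain the corollary by re-running the argument of Theorem~\ref{theo:ground}, whose only hypothesis is that the saturation algorithm terminates on $\call{L}_0$; so the first task is to establish that termination. The naive saturation does \emph{not} terminate here, because rule $10$ (the increasing rule $y,Bl(x,y),SK(z)\ded Sig(x,SK(z))$) together with rule $7$ generates an infinite chain of ever larger increasing rules. The key step is therefore to appeal to Lemma~\ref{lemma:elimination-redundant-rules}: in the empty theory rule $10$ is redundant (any derivation step using it is replayed by $Bl(x,y),y\ded x$ followed by $x,y\ded Sig(x,y)$), so it may be discarded as soon as it is produced. After this on-the-fly deletion no further increasing rule is created and the saturation halts with the finite set $\call{L}'$ displayed above. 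I would also check that the deletion is harmless for completeness, which is exactly the content of the variant of Lemma~\ref{lemma:00:saturation2} established just above for $\call{L}'$ (with $\Cons{D}\subseteq\Cons{D'}$ in place of equality, since replacing a badly applied decreasing rule by two rules may enlarge the trace).

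Once $\call{L}'$ is known to be finite, I would assemble the remaining ingredients as in the proof of Theorem~\ref{theo:ground}. First, Lemmas~\ref{lemma:00:saturation1} and~\ref{lemma:00:lem1} give, for ground $E$ and $t$ in normal form, $t\in\rhclos{E}{\call{I}_0}$ iff $t\in\rhclos{E}{\II}$, so the $\call{I}_0$-ground reachability problem reduces to the $\II$-ground reachability problem for the finite system $\II=\intrus{\call{G}}{\call{L}'}{\emptyset}$. Given a ground constraint system $\call{C}=(E_1\rhd t_1,\ldots,E_n\rhd t_n)$, Step~1 of the algorithm of Fig.~\ref{fig:algo:solve} collapses to normalisation (there are no variables to instantiate; its correctness and completeness are Lemmas~\ref{lemma:00:lem2} and~\ref{lemma:00:lem3}), after which the transformation rules of Fig.~\ref{fig:resolution} are applied non-deterministically. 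By Lemmas~\ref{lemma:00:completeness} and~\ref{lemma:00:correcteness}, together with the variant well-formedness lemma above, a ground constraint system is valid iff it rewrites to the empty sequence of constraints; and since $\call{C}$ has no variables, no transformation rule instantiates a variable of $\call{C}$, so Lemma~\ref{lem:terminaison-ID} bounds the number of distinct reachable constraint systems. As $\call{L}'$ is finite, each constraint system has finitely many successors, and by K\"onig's Lemma the whole search space is finite; the procedure thus terminates and decides validity.

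The only real work is the termination of the saturation, and within that the single delicate point is the justification of the on-the-fly deletion of rule $10$, i.e.\ checking that removing a redundant rule during saturation does not compromise the completeness lemmas (notably Lemmas~\ref{lemma:00:lem1} and~\ref{lemma:00:saturation2}) on which Theorem~\ref{theo:ground} rests. Everything after that is a direct instantiation of the general machinery of Sections~\ref{sec:sat}--\ref{sec:ground.reachability}, so I expect no further obstacle.
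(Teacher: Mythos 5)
Your proposal is correct and follows essentially the same route as the paper: termination of the saturation via the redundancy of rule $10$ (Lemma~\ref{lemma:elimination-redundant-rules}) together with the blind-signature variant of Lemma~\ref{lemma:00:saturation2}, followed by an appeal to Lemmas~\ref{lemma:00:lem2}, \ref{lemma:00:lem3}, \ref{lemma:00:completeness}, \ref{lemma:00:correcteness} and~\ref{lem:terminaison-ID} exactly as in the paper's (much terser) justification. You merely spell out the chain of reasoning that the paper leaves implicit.
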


\section{Decidability of reachability problems for subterm convergent theories}\label{sec:subterm}

In this section, we give a decidability result for the reachability problems for a class of subterm convergent equational theories. We recall that subterm convergent equational theories have finite variant property \cite{Comon-LundhD05}.
The result of this section is entailed by a more general result by Baudet \cite{Baudet05}, but the proof here in this specific case is much shorter.

We recall that $\call{G}$ is a set of functions symbols and we denote by
$\call{H}$ a subterm convergent equational theory and by
$\call{I}_0=\intrus{\call{G}}{\call{L}_0}{\call{H}}$ the initial deduction  system
such that $\call{L}_0$ is the union of functions $x_1,\ldots,x_n\ded
  f(x_1,\ldots,x_n)$ for some function symbols $f\in\call{G}$.

\begin{definition}(Subterm convergent theories.)
An equational theory \call{H} is subterm convergent if it is generated by a convergent rewriting system \call{R} and for each rule $l\to r\in\call{R}$, $r$ is a strict subterm of $l$.
\end{definition}

In the rest of this section, we give an algorithm to decide the following reachability problem:
\begin{decisionproblem}{$\call{I}_0$-Reachability Problem}
  \entreeu{ An $\call{I}_0$-constraint system \call{C}.}
  \sortie{\textsc{Sat} iff there exists a substitution $\sigma$ such
    that $\sigma\models_{\call{I}_0}\call{C}.$}
\end{decisionproblem}

We let $\II=\intrus{\call{G}}{\call{L}'}{\emptyset}$ to be the saturated deduction system. 
We suppose that $r\notin l$ for all  rules $l\ded r\in\call{L}'$ that is rules not satisfying this property will be deleted.

In the following lemma we prove that, in the case of subterm convergent equational theories and under our assumption on the form of initial deduction rules $\call{L}_0$,  \textit{Saturation} terminates and the obtained  new rules are  decreasing.

\begin{lemma}{\label{lem:subterm}}
  The saturation of $\call{L}_0$ terminates and for every rule $l\ded r\in\call{L}'\setminus\call{L}_0$ there exists a term $s\in l$ such that
  $r$ is a strict subterm of $s$.
\end{lemma}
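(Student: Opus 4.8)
The proof has two intertwined parts: (1) show that the closure rule, when applied starting from $\call{L}_0$ with a subterm convergent $\call{H}$, only ever produces \emph{decreasing} rules $l\ded r$ in which $r$ is a \emph{strict subterm} of some $s\in l$; and (2) use this structural invariant to bound the size of the rules produced, so that only finitely many rules can be generated and \textsc{Saturation} halts.

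\begin{proof}[Proof sketch]
First I would analyze Step~1 of the saturation. Starting from a rule $x_1,\ldots,x_n\ded f(x_1,\ldots,x_n)\in\call{L}_0$ and a variant substitution $\theta$, Step~1 produces $x_1\theta,\ldots,x_n\theta\ded\norm{f(x_1,\ldots,x_n)\theta}$. If $\norm{f(x_1,\ldots,x_n)\theta}=f(x_1,\ldots,x_n)\theta$ the rule is one of the original increasing rules (up to instantiation of the $x_i$), which we keep in $\call{L}_0$-like form; otherwise a rewrite $l\to r\in\call{R}$ applied at the root gives $\norm{f(x_1,\ldots,x_n)\theta}$ as (a normal form of) a strict subterm of $f(x_1,\ldots,x_n)\theta$, hence a strict subterm of some $x_i\theta$ on the left. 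By subterm convergence the normal form of a subterm of $x_i\theta$ is still a subterm, so the new rule in $\call{L}\setminus\call{L}_0$ already has the claimed form. This handles the base case.

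Next I would handle the inductive step: the closure rule of Fig.~\ref{fig:closure} combines $l_1\ded r_1\in\call{L}'_{inc}$ with $l_2,s\ded r_2\in\call{L}'$ where $s\notin\Variables$ and $\sigma=mgu_\emptyset(r_1,s)$, producing $(l_1,l_2\ded r_2)\sigma$. The increasing rules of $\call{L}'$ are exactly the $\call{L}_0$-shaped rules $x_1,\ldots,x_n\ded f(x_1,\ldots,x_n)$ (no increasing rule is ever created by closure, since closure's output inherits $r_2$ as right-hand side, and any newly created rule with a subterm-of-left right-hand side is decreasing). So $l_1\ded r_1$ is $x_1,\ldots,x_n\ded f(x_1,\ldots,x_n)$, $r_1=f(x_1,\ldots,x_n)$, and $\sigma$ unifies this with a non-variable subterm-position term $s$ occurring in the left of a decreasing rule $l_2,s\ded r_2$ where, by the induction hypothesis, $r_2$ is a strict subterm of some term in $l_2\cup\{s\}$. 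The key observation is that after applying $\sigma$, the term $s\sigma=f(x_1,\ldots,x_n)\sigma$ is replaced in the left-hand side by its arguments $x_1\sigma,\ldots,x_n\sigma$, and $r_2\sigma$ is a strict subterm of some term of $(l_2\cup\{s\})\sigma$; if that term was $s\sigma$ itself, then $r_2\sigma$ is a strict subterm of $f(x_1\sigma,\ldots,x_n\sigma)$, hence a subterm of some $x_i\sigma$, which is now in the new left-hand side; otherwise $r_2\sigma$ is already a strict subterm of a surviving term of $l_2\sigma$. Either way the invariant is preserved. This simultaneously gives the second half of the statement (the subterm property) and, crucially, forbids size growth.

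Finally, termination. Because every rule $l\ded r$ with $l\ded r\notin\call{L}_0$ satisfies that $r$ is a strict subterm of a left-hand term, and because the closure step replaces $s\sigma$ by its immediate arguments (each of size strictly less than $s\sigma$) while the substitution $\sigma=mgu(f(\bar x),s)$ only instantiates the fresh variables $\bar x$ of the increasing rule, no left-hand term ever becomes larger than a subterm of a term already present. More precisely I would argue that every term occurring in any generated rule is (an instance, along positions controlled by the $\call{L}_0$ variables, of) a subterm of a term in the finitely many rules of $\call{L}$; since subterm convergent rewriting and the finiteness of $\call{L}$ bound the set of such terms up to renaming, only finitely many rules modulo renaming can be produced, so \textsc{Saturation} terminates. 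The main obstacle is the bookkeeping in the inductive step: one must check carefully that the $mgu$ with $f(x_1,\ldots,x_n)$ does not smuggle in arbitrarily large terms through the $x_i$, i.e.\ that the only terms substituted for the $x_i$ are subterms of terms already present in $l_2\cup\{s\}$, which is exactly what the ``$s$ is a non-variable subterm'' side condition together with the subterm-convergence invariant secures.
\end{proof}
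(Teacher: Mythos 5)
Your proof follows essentially the same route as the paper's: induction on the construction of the rules, with the base case using subterm convergence plus normality of variant substitutions to show that $\norm{f(x_1,\ldots,x_n)\theta}$ is a strict subterm of some $x_i\theta$ (or the rule is discarded), the inductive step using the fact that closure always pairs an $\call{L}_0$-rule with a decreasing rule and replaces $f(s_1,\ldots,s_n)$ by its arguments so the strict-subterm invariant is preserved, and termination because every generated term stays within the finite set of subterms of $\call{L}$. The only difference is cosmetic: you spell out the finiteness bound slightly more explicitly than the paper, which asserts termination directly after establishing the same invariant.
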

\begin{proof}
Let $l\ded r\in\call{L}'\setminus\call{L}_0$
and let us prove that this rule satisfies the following property: there is a term $s\in l$ such that $r\in\SSub{s}$.  By induction on the number of saturations needed to obtain a rule $l\ded r$.

  Let us first prove this property  is true for rules obtained by the step  1 of the saturation.  By definition of $\call{H}$, by the fact that variants of term are in normal form and  given the assumption that
  all original rules are $x_1,\ldots,x_n\ded f(x_1,\ldots,x_n)$, this implies:
  $$
  \begin{array}{c@{\hspace*{3em}}rcl}
    &   \norm{f(x_1,\ldots,x_n)\theta}\in\SSub{f(x_1,\ldots,x_n)\theta}\\
    \text{Thus, there exists }i\text{ such that:}&
    \norm{f(x_1,\ldots,x_n)\theta}\in\Sub{x_i\theta}\\
  \end{array}
  $$
   If there is equality, the rule is removed (since  $r\notin l$ for all rules $l\ded r$). This implies that all rules obtained from step 1 of saturation satisfies the property. 
Since $\call{L}_0$ is finite and since  subterm convergent equational theories have finite variant property \cite{Comon-LundhD05},  first step of saturation terminates. Since $u\in\SSub{v}$ implies $u\prec v$, rules obtained by step 1 are decreasing.
Let $\call{L}$ be the set of rules obtained by step 1 and let us prove that rules obtained by closure satisfy the property.
Let us prove it for the first  rule obtained by closure.
By definition of closure rule and since rules in $\call{L}\setminus\call{L}_0$ are decreasing, the first closure will be applied on rules $x_1\ldots,x_n\ded f(x_1,\ldots,x_n)\in\call{L}_0$ and $f(s_1,\ldots,s_n),l\ded r\in\call{L}\setminus\call{L}_0$. Again by definition of closure, the obtained rule is   $s_1,\ldots,s_n,l\ded r$. By definition of decreasing rule, 
there is a term $u\in\set{f(s_1,\ldots,s_n),l}$ such that $r\in\SSub{u}$, if $u=l$ then the new rule satisfies  the property  and if  $u=f(s_1,\ldots,s_n)$ then there is an integer $i$ such that $r\in\Sub{s_i}$. If $r\in\SSub{s_i}$ the obtained rule satisfies the property  else the rule can not be in $\call{L}'$ (since rules $l\ded r$ with $r\in l$ are deleted).
We conclude that the first rule obtained by closure is decreasing and if we apply again closure, it will be applied on a rule in $\call{L}_0$ and a rule not in $\call{L}_0$. 
 We conclude that rules obtained by step 2 satisfy the property and are decreasing . We conclude also that step 2 terminates. 
\end{proof}

We recall that increasing rules are of form $x_1,\ldots,x_n\ded f(x_1,\ldots,x_n)$ for a function symbol $f\in\call{G}$ (Lemma \ref{lem:subterm}).

\subsection{Decidability result}
We recall that our goal is to solve $\call{I}_0$-reachability problem.
\paragraph{Algorithm.}
Let $\call{C}^0=((E^0_i\rhd{}v^0_i)_{i\in\set{1,\ldots,n}},\call{S}^0)$.
\begin{quotation}
  \emph{Step~1.} Guess a finite
variant substitution $\theta$ for all terms of $\call{C}^0$, apply $\theta$ on these terms and normalise them then solve the obtained unification system. Finally, apply the obtained solution $\alpha$ on the constraints. Let $\call{C}=((E_i\rhd{}t_i)_{i\in\set{1,\ldots,n}})$ be the obtained constraint system.
\end{quotation}

We remark that  this step terminates 
and it is also  correct (Lemma \ref{lemma:00:lem3}) and  complete (Lemma \ref{lemma:00:lem2}). Unless otherwise specified, $\II$ is the deduction system implicit in all notations in the rest of this section.

 We now introduce the notation $\rhd_{inc}$ to denote a deduction
 constraint that has to be solved using only increasing  rules. We say
 a constraint $E\rhd_{inc} t$ is in solved form if $t$ is a variable. The
 constraint system is in solved form if all the deduction constraints
 are in solved form. The application of a decreasing  rule
 $l\ded r$ on a constraint $E\rhd t$ is defined as follows, and in
 accordance with Lemma~\ref{lemma:00:saturation2}:
 \begin{itemize}
 \item let $\sigma$ be the mgu of the terms in
   $l\setminus\Variables$ with a subset $F$ of $E\setminus\Variables$
 \item if $\set{x_1,\ldots,x_k} = l\cap\Variables$, replace
   $\call{C}_\alpha,E\rhd t,\call{C}_\beta$ with:
 $$
 (\call{C}_\alpha, E\rhd_{inc} x_1,\ldots,E\rhd_{inc} x_k,
 E\cup\set{r} \rhd t ,\call{C}'_\beta)\sigma
 $$
 Where $\call{C}'_\beta$ is constructed from $\call{C}_\beta$ by adding
 $r$ to each left-hand side. This last construction aims
 at preserving the inclusion of knowledge sets.
 \end{itemize}

\begin{quotation}
  \emph{Step 2.} Iterate until the constraint system is in solved form
  or unsolvable:
  \begin{enumerate}
  \item Put all tagged deduction constraints $E\rhd_{inc} t$ in solved
    form;
  \item If all constraints preceding an untagged $E\rhd t$ are in
    solved form, Apply non-deterministically
    $|\Sub{E}\setminus\Var{E}| $ decreasing  rules on $E$. Replace
    $E\rhd t$ by the obtained deduction constraints, all tagged with
    $inc$.
  \end{enumerate}
\end{quotation}

Let us prove the completeness and termination of Step~2.

\paragraph{Completeness.}  The proof of the following lemma is trivial
by the form  of increasing rules.

\begin{lemma}{\label{lem:subterm:composition}}
  If $\sigma\models E\rhd_{inc} f(t_1,\ldots,t_n)$ then either
  $f(t_1,\ldots,t_n)\sigma \in E\sigma$ or
  $x_1,\ldots,x_n\ded f(x_1,\ldots,x_n)$ will be in $\call{L}_0$ and  for each
  $i\in\set{1,\ldots,n}$ we have $\sigma\models E\rhd_{inc} t_i$.
\end{lemma}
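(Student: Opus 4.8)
The plan is to unfold the semantics of the tagged constraint $\rhd_{inc}$ and then inspect the last step of a witnessing derivation. By the reading introduced above, $\sigma\models E\rhd_{inc}f(t_1,\ldots,t_n)$ means precisely that $f(t_1,\ldots,t_n)\sigma$ lies in the closure of $E\sigma$ computed using \emph{only} increasing rules of $\call{L}'$. By Lemma~\ref{lem:subterm}, together with the standing assumption that the rules of $\call{L}_0$ all have the shape $x_1,\ldots,x_n\ded f(x_1,\ldots,x_n)$, the increasing rules of $\call{L}'$ are \emph{exactly} these composition rules coming from $\call{L}_0$; so any derivation witnessing the constraint uses nothing but such rules.

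First I would dispose of the trivial alternative: if $f(t_1,\ldots,t_n)\sigma\in E\sigma$ we are in the first case of the statement and there is nothing to prove. Otherwise, I would fix a derivation $D:\;E\sigma=E_0\ded E_0,s_1\ded\cdots\ded E_{k-1},s_k$ using only increasing rules, with $s_k=f(t_1,\ldots,t_n)\sigma$, and take the least index $j$ such that $s_j=f(t_1,\ldots,t_n)\sigma$. Since the goal is not in $E_0=E\sigma$ and $j$ is minimal, $s_j\notin E_{j-1}$, so $s_j$ is genuinely produced at step $j$ by some increasing rule $x_1,\ldots,x_m\ded g(x_1,\ldots,x_m)\in\call{L}_0$ applied with a ground substitution $\gamma$ with $\set{x_1\gamma,\ldots,x_m\gamma}\subseteq E_{j-1}$ and $g(x_1\gamma,\ldots,x_m\gamma)=f(t_1,\ldots,t_n)\sigma$.

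The conclusion is then read off by matching head symbols and arities: $g=f$, $m=n$ and $x_i\gamma=t_i\sigma$ for every $i\in\set{1,\ldots,n}$, so the rule $x_1,\ldots,x_n\ded f(x_1,\ldots,x_n)$ is indeed in $\call{L}_0$. Finally, the prefix of $D$ made of its first $j-1$ steps is itself a derivation from $E\sigma$ using only increasing rules and reaches $E_{j-1}$, which contains each $t_i\sigma=x_i\gamma$; hence $\sigma\models E\rhd_{inc}t_i$ for all $i\in\set{1,\ldots,n}$.

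I do not expect a real obstacle here — as already noted in the text, the statement is essentially immediate once the form of the increasing rules is known. The only point that deserves explicit care is the appeal to Lemma~\ref{lem:subterm}, which guarantees that no \emph{decreasing} rule (which could in principle build a term headed by $f$ by unfolding an equation) can occur in a $\rhd_{inc}$-derivation, and that the composition rule actually used is the one present in $\call{L}_0$ with the matching arity.
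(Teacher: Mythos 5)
Your proof is correct and matches the paper's intent exactly: the paper does not even write this argument out, merely asserting that the lemma is ``trivial by the form of increasing rules,'' and your write-up is precisely the spelled-out version of that observation (increasing rules of $\call{L}'$ are exactly the $\call{L}_0$ composition rules by Lemma~\ref{lem:subterm}, so the last step producing $f(t_1,\ldots,t_n)\sigma$ must be a syntactic application of $x_1,\ldots,x_n\ded f(x_1,\ldots,x_n)$, whence the $t_i\sigma$ are derivable from $E\sigma$ by the prefix of the derivation). No gaps.
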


The first part of the iteration consists either in transforming a
deduction constraint $E\rhd_{inc} f(t_1,\ldots,t_n)$ into $E\rhd_{inc}
t_1,\ldots, E\rhd_{inc} t_n$, or in unifying $f(t_1,\ldots,t_n)$ with
$e\in E$. By Lemma~\ref{lem:subterm:composition}, given a ground
substitution $\sigma$ such that $\sigma\models E\rhd_{inc}
f(t_1,\ldots,t_n)$ there exists a sequence of choices reducing
$E\rhd_{inc} f(t_1,\ldots,t_n)$ to a (possibly empty) set of deduction constraints
$E\tau\rhd_{inc} u_1,\ldots E\tau \rhd_{inc} u_k$ where the $u_1,\ldots,u_k$
are variables or constants. If there is a constant which is not in $E\tau$
the constraint is not satisfiable (by definition of increasing
rules), and the sequence of choices fails.

Let us now consider the second part of the iteration.

\begin{lemma}{\label{lem:variable:subterm}}
  Assume $\sigma\models E\rhd_{inc} x$ with $x$ the first variable in the
  sequence of deduction constraints such that $t\in\Sub{x\sigma}$ for
  some ground term $t$.  Then either there exists $u\in\Sub{E}$ such
  that $u\sigma=t$ or $t\in\rhclos{E\sigma}{\call{L}'_{inc}}$.
\end{lemma}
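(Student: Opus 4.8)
The plan is to exploit that increasing rules only \emph{compose} terms. By the form of increasing rules recalled above (every increasing rule is $x_1,\ldots,x_n\ded f(x_1,\ldots,x_n)$ with $f\in\call{G}$), one gets, for any set $G$ of ground terms and any ground term $v$, that $v\in\rhclos{G}{\call{L}'_{inc}}$ iff either $v\in G$, or $v=f(v_1,\ldots,v_n)$ for some rule $x_1,\ldots,x_n\ded f(x_1,\ldots,x_n)\in\call{L}_0$ with $v_i\in\rhclos{G}{\call{L}'_{inc}}$ for each $i$; this is Lemma~\ref{lem:subterm:composition} applied to an arbitrary element of the closure rather than only to a term of shape $f(t_1,\ldots,t_n)\sigma$. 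Since the $v_i$ are then strictly smaller than $v$, this characterisation supports an induction on $\sizeof{v}$.

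First I would establish, by induction on $\sizeof{v}$ and for every $v\in\rhclos{E\sigma}{\call{L}'_{inc}}$, the auxiliary claim: for every ground $t'\in\Sub{v}$, at least one of (i) $t'=u\sigma$ for some $u\in\Sub{E}$, (ii) $t'\in\rhclos{E\sigma}{\call{L}'_{inc}}$, (iii) $t'\in\Sub{y\sigma}$ for some $y\in\Var{E}$, holds. If $v\in E\sigma$, write $v=e\sigma$ with $e\in E$; every ground subterm of $e\sigma$ is either $u\sigma$ for some $u\in\Sub{e}\subseteq\Sub{E}$ --- giving (i) when $u\notin\Variables$, and giving (iii) when $u\in\Var{E}$ --- or a subterm of $y\sigma$ for some $y\in\Var{e}\subseteq\Var{E}$, again giving (iii). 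Otherwise $v=f(v_1,\ldots,v_n)\notin E\sigma$ with each $v_i\in\rhclos{E\sigma}{\call{L}'_{inc}}$ and $\sizeof{v_i}<\sizeof{v}$; then either $t'=v$, which is case (ii), or $t'\in\Sub{v_i}$ for some $i$ and the induction hypothesis applied to $v_i$ yields one of (i)--(iii).

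Then I would apply the claim to $v=x\sigma$, which lies in $\rhclos{E\sigma}{\call{L}'_{inc}}$ since $\sigma\models E\rhd_{inc} x$, and to the given $t\in\Sub{x\sigma}$. Alternatives (i) and (ii) are exactly the two conclusions of the lemma, so it only remains to discard (iii). Assume $t\in\Sub{y\sigma}$ with $y\in\Var{E}$. By the ordering condition of Definition~\ref{def:constraints}, which is preserved by the transformation rules, every variable occurring in the left-hand side of a deduction constraint occurs as the right-hand side of a strictly earlier deduction constraint; hence $y$ precedes $x$ in the sequence, contradicting that $x$ is the \emph{first} variable with $t\in\Sub{x\sigma}$. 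So (iii) is impossible and the lemma follows.

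The main obstacle is purely one of formulation: the ``first variable'' hypothesis does not descend to the sub-arguments $v_i$, so the induction must be run on the weaker three-way disjunction (i)--(iii), and the parasitic case (iii) is eliminated only once, at the very end, using minimality of $x$. The one point to check carefully is that $\Var{E}$ indeed consists of variables strictly earlier than $x$ in the constraint sequence, i.e. that the monotonicity/ordering invariant of constraint systems is maintained by Step~2 (in particular by the decreasing-rule application and the $\rhd_{inc}$-tagging).
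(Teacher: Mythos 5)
Your proof is correct and rests on the same two ingredients as the paper's: increasing rules are pure composition rules $x_1,\ldots,x_n\ded f(x_1,\ldots,x_n)$, so any subterm of a closure element that is not already a subterm of (an instance of) $E$ must itself be a deduced term, and the minimality of $x$ together with the variable-ordering condition on constraint systems rules out $t\in\Sub{\Var{E}\sigma}$. The only difference is presentational: the paper inducts on the minimal derivation step $i$ with $t\in\Sub{E_i}$, whereas you induct on term size via a three-way auxiliary invariant and discard the parasitic case at the end --- the underlying argument is the same.
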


\begin{proof}
  Let us assume there does not exist $u\in\Sub{E}$ such that
  $u\sigma=t$. By minimality of $x$ and the determinacy of constraint
  systems we have $t\notin\Sub{\Var{E}\sigma}$. Since $\Sub{E\sigma} =
  \Sub{E}\sigma\cup\Sub{\Var{E}\sigma}$ we have $t\notin\Sub{E\sigma}$
  and, by hypothesis on $x$ and $t$, $t\in\Sub{x\sigma}$. Since
  $\sigma\models E\rhd_{inc} x$ consider a derivation
  $E_1=E\sigma\ded\ldots\ded E_{n-1}\cup x\sigma$, and let $i$ be minimal
  such that $t\in\Sub{E_i}$. The index $i$ exists since
  $t\in\Sub{x\sigma}$, and is different from $1$
  since $t\notin\Sub{E\sigma}$. By definition of the increasing rules
  we then must have $E_i=E_{i-1},t$.
\end{proof}

Consider a $\II$-constraint system $\call{C}=(\call{C}_\alpha,E\rhd t,\call{C}_\beta)$ satisfied by a substitution $\sigma$ and 
 all deduction constraints in $\call{C}_\alpha$ are in solved
form. By Lemmas \ref{lemma:00:saturation2} and ~\ref{lem:subterm} and by the fact that $r\notin l$ for all rules $l\ded r \in\call{L}'$, all  decreasing rules applied
on $E\sigma$ yield a term in \Sub{E\sigma}. Thus there are at most
$|\Sub{E}\setminus\Var{E}|$ different terms that can be obtained by
decreasing rule starting from $E\sigma$ and which are not in
$\Sub{\Var{E}\sigma}$.  Assume a term $t$ is in
$\Sub{\Var{E}\sigma}\setminus\Sub{E}\sigma$, and let $x$ be the first
variable (in the ordering of deduction constraints) such that
$t\in\Sub{x\sigma}$.  By definition of constraint systems there exists
a deduction constraint $E_x\rhd_{inc} x$ in $\call{C}_\alpha$.  Since
$E_x\subseteq E$, by Lemma~\ref{lem:variable:subterm}, we have
$t\in\rhclos{E_x\sigma}{\call{L}'_{inc}}$. Again, since $E_x\sigma \subseteq
E\sigma$, this implies $t\in\rhclos{E\sigma}{\call{L}'_{inc}}$: the decreasing rule 
was not useful, and can be replaced by a sequence of increasing. Thus
 in \clos{E\sigma} at most  $|\Sub{E}\setminus \Var{E}|$ terms 
are deducible using decreasing   rules.  Thus, after a right
choice of at most $|\Sub{E}\setminus \Var{E}|$ decreasing rules,
all terms deducible from the obtained knowledge set can be deduced
using only increasing  rules, hence the tagging with $inc$ of the final
deduction constraint $E\cup\set{r_1,\ldots,r_k}\rhd_{inc} t$, $k=|\Sub{E}\setminus \Var{E}|$.

\paragraph{Termination of Step~2.}  First let us notice that if a
unification is chosen, it unifies two subterms of the constraint
system in the empty theory, and thus either the two terms were already
equal or it reduces strictly the number of variables in the constraint
system. Thus the number of unification choices is bounded by the
number of variables in the constraint system. Once all unification
have been performed, the termination of the first part of the
iteration can easily be proved by considering the multiset of the
right-hand side of the deduction constraints, ordered by the extension
to multisets of the (well-founded) subterm ordering. The second part
of the iteration obviously terminates. Thus each iteration terminates.
Since each iteration decreases strictly the number of non-labelled
deduction constraints, Step~2.  terminates.

\section{Conclusion}

In~\cite{fossacs04}, H.~Comon-Lundh proposes a two-steps strategy for solving
general reachability problems: first, decide ground reachability problems and,
second, reduce general reachability problems to ground reachability ones,
\textit{e.g.} by providing a bound on the size of a minimal solution of a
problem. Our results are in this line: for \emph{contracting} deduction systems,
general reachability can be reduced to ground reachability. We strongly
conjecture that it permits one to provide a bound on the size of minimal
solutions. Also, we leave to the reader the proof of the fact that if saturation
terminates, the deduction system is \emph{local} in the sense defined
in~\cite{BernatC06}. Thus, this paper adds a new criterion to the one already known
for deciding reachability problems.

In future works, we will investigate how the construction presented here can be
extended to equational theories having the finite variant property w.r.t. a
non-empty equational theory. We will also try to weaken the definition of $\mu(T)$ for a set of terms $T$.

\bibliographystyle{plain}
\bibliography{biblio}

\end{document}